\title{Optimal majority rules and quantitative Condorcet properties of setwise  Kemeny voting schemes}  
\titlerunning{The setwise Kemeny problem} 
\author{Xuan Kien Phung}{Département d'informatique et de recherche opérationnelle, Université de Montréal, Montréal, Québec, H3T 1J4, Canada.}{phungxuankien1@gmail.com}{https://orcid.org/0000-0002-4347-8931}{}
\author{Sylvie Hamel}{Département d'informatique et de recherche opérationnelle, Université de Montréal, Montréal, Québec, H3T 1J4, Canada.}{hamelsyl@iro.umontreal.ca}{
https://orcid.org/
0000-0002-8941-2284}{Supported by NSERC through an Individual Discovery Grant RGPIN-2016-04576}
\authorrunning{X.\,K. Phung and S. Hamel} 
\keywords{Kemeny problem, Kendall-tau distance, Kemeny rule, median permutation, computational social theory} 
\newcolumntype{M}[1]{>{\centering\arraybackslash}m{#1}}
  \theoremstyle{definition}
\newtheorem{question}{Question}
\numberwithin{equation}{section}
\newcommand {\N}{\mathbb{N}}
\newcommand\numberthis{\addtocounter{equation}{1}\tag{\theequation}}
\begin{document}

\maketitle

\begin{abstract}
The important Kemeny problem, which consists of computing median consensus rankings of an election with respect to the Kemeny voting rule, admits important applications in biology and computational social choice \cite{andrieu,social,3/4,brancotte} and was generalized recently via an interesting setwise approach by   Gilbert et. al. \cite{setwise-aaai, setwise}. Our first results establish  optimal quantitative  extensions of the Unanimity property and the well-known $3/4$-majority rule of Betzler et al. \cite{3/4} for the classical Kemeny median problem. Moreover, by elaborating an exhaustive list of quantified axiomatic properties (such as the Condorcet and Smith criteria, the $5/6$-majority rule, etc.) of the $3$-wise Kemeny rule where not only pairwise comparisons but also the discordance
between the winners of subsets of three candidates are also taken into account, we come to the conclusion that the $3$-wise Kemeny voting scheme induced by the $3$-wise Kendall-tau distance presents interesting advantages in comparison with the classical Kemeny rule. For example, it satisfies several  improved manipulation-proof properties. 
Since the $3$-wise Kemeny problem is NP-hard, our results also provide some of the first useful space reduction techniques by determining the relative orders of pairs of alternatives. Our works suggest similar interesting properties of higher setwise Kemeny voting schemes which justify and compensate for the more expensive computational cost than the classical Kemeny scheme. 
\end{abstract}

\section{Introduction} 
In this article, by an election we mean a finite collection $C=\{c_1, \dots, c_n\}$ of candidates (alternatives) together with a voting profile consisting of a finite number of votes which are not necessarily distinct. Here, a ranking or a vote is simply a complete and strict total ordering $\pi \colon c_{\pi(1)}> c_{\pi(2)}> \dots >c_{\pi(n)}$ which we identify with a permutation of $\{1, 2, \dots, n\}$ also denoted by $\pi$. The notation $x>y$ means that $x$ is ranked before $y$. Among several natural distances of the space of all rankings, the Kendall-tau distance, which is also the bubble-sort distance between two permutations, is one of the most prominent distances which counts the number of order disagreements between pairs of elements in two permutations. More generally, we have the more refined notion of $k$-wise Kendall-tau distance recently introduced in \cite{setwise} (see Definition~\eqref{e:definition-k-wise-distance}) which moreover takes into consideration the disagreement between the winners of subsets of at most $k$ candidates.  \par 
In the well-known Kemeny problem (cf. \cite{kemeny}, \cite{kemeny-snell}, \cite{young}), the objective is to determine the set of medians, i.e., permutations whose total distance to the voting profile is minimized. Hence, for the classical Kemeny rule which is induced by the  Kendall-tau distance, a median is simply a ranking 
that maximizes the number of pairwise agreements with the voting profile. One of the most important interpretations of the Kemeny rule is that it is a maximum likelihood estimator of the correct ranking (see \cite{kemeny}). 
\par 
Motivated by the NP-hardness of the various Kemeny problems (see \cite{dwork}, \cite{bachmeier}, \cite{setwise}), 
our main goal is to formulate new quantitative results concerning the majority rules in $k$-wise Kemeny voting schemes associated with the $k$-wise Kendall-tau distance  introduced recently in  \cite{setwise}, notably for $k=2,3$, which also provide some more refined space reduction techniques to the Kemeny problem than several existing techniques in the literature. Moreover, we establish several new properties of the $3$-wise Kemeny voting scheme such as a quantified version of the Condorcet criterion and more generally the Smith criterion and the Extended Condorcet criterion (Theorem~\ref{t:3-wise-condorcet}, Theorem~\ref{t:smith-criterion-main}, Theorem~\ref{t:XCC-3-wise-main}). 
Recent results in \cite{kien-sylvie-MOT}  show that the well-known 
 $3/4$-majority rule of Betzler et al. \cite{3/4} for the classical Kemeny rule, namely, the  $2$-wise Kemeny voting scheme, is only valid for small elections of no more than $5$ candidates with respect to the $3$-wise Kemeny scheme. In this paper, without restriction on the number of candidates, 
 we establish a $5/6$-majority rule (Theorem~\ref{l:5/6-majority-3-wise-main}, Theorem~\ref{l:5/6-majority-3-wise}) which serves as the $3$-wise counterpart of the $2$-wise $3/4$-majority rule.  
\par 
Our analysis provides  strong quantified evidence which shows that the $3$-wise Kemeny voting scheme is more suitable than the classical Kemeny voting scheme in many aspects. In particular, by  taking into account not only pairwise discordance but also the discordance between the winners of subsets of three candidates (see Definition~\eqref{e:definition-k-wise-distance} below), we show that the $3$-wise Kemeny voting scheme is  more resistant to coalitional manipulation than the classical Kemeny rule. More specifically, we prove that  it is much more difficult for a candidate to win an election or event to simply win another specific candidate in an election under the $3$-wise Kemeny voting scheme and almost all of the best-known space reduction techniques for the classical Kemeny rule fail for the $3$-wise Kemeny voting scheme (see, e.g., Table~\ref{table:summary}), including 
the powerful Major Order Theorems discovered in \cite{hamel-space} and the  Condorcet Criterion. 
For example, we show that even when a candidate wins the $2/3$ majority in every duel with all other candidates, this candidate may still lose the election according to the $3$-wise Kemeny voting scheme (see Theorem~\ref{c:lower-bound-2/3-theorem-5.1}). This phenomenon is rather surprising when compared to the Condorcet criterion for the classical Kemeny scheme where a Condorcet winner, namely a candidate who is preferred by more voters than any other candidate, must be the unique winner of the election. However, it turns out that a candidate obtaining a $3/4$ majority in every duel must be the unique winner in the $3$-wise Kemeny voting scheme (see Theorem~\ref{t:3-wise-condorcet}).  
\par

\subsection{The $k$-wise Kemeny rule} 

Let $k \geq 2$ be an integer and let $C$ be a finite set of candidates. Let $S(C)$ be the set of all rankings of $C$. Let $\Delta^k(C) \subset 2^C$ be the collection of all subsets of $C$ which contain no more than $k$ elements. To take into consideration disagreements not only on pairs of candidates but also on all subsets of candidates of cardinality at most $k$, the \emph{$k$-wise Kendall-tau distance} $ d^k_{KT}(\pi, \sigma)$ between two rankings $\pi, \sigma$ of $C$ is defined by (cf.~ \cite{setwise}): 
\begin{equation}
\label{e:definition-k-wise-distance}
    d^k_{KT}(\pi, \sigma) = \sum_{S \in \Delta^k(C)} \left( 1- \delta_{\mathrm{top}_S(\pi), \mathrm{top}_S(\sigma)}\right) 
\end{equation}
where $\mathrm{top}_S(\pi) \in S$ denotes the highest ranked element of the restriction $\pi\vert_S$ of $\pi$ to $S$ and $\delta_{x,y}$ denotes the Kronecker symbol which is equal to $1$ if $x=y$ and is equal to zero otherwise. 
The $k$-wise Kendall-tau distance between a ranking $\pi$ of $C$ and a collection of rankings $A$ of $C$ is defined as 
\begin{equation}
\label{e:3-wise-distance-to-a-set}
    d^k_{KT}(\pi,A) = \sum_{\sigma \in A}  d^k_{KT}(\pi,\sigma). 
\end{equation} 
\par 
Let $V$ be the voting profile of the election. Then we say that a ranking $\pi$ of $C$ is a \emph{median of the election with respect to the $k$-wise Kemeny rule} or simply a \emph{$k$-wise median} if 
\begin{equation*}
    d^k_{KT}(\pi,V) = \min_{\sigma \in S(C)} d^k_{KT} (\sigma, V). 
\end{equation*}
\par 
It is clear that for $k=2$, we recover the definition of the usual Kendall-tau distance  $d^2_{KT} = d_{KT}$. Moreover, 
it was shown in \cite{setwise} that the decision  variant of the $k$-wise Kemeny aggregation problem is NP-complete for every constant $k \geq 3$. 
\par 
We shall pay particular attention to the cases $k=2$ and $k=3$. Our results suggest that the $3$-wise Kemeny rule is more suitable than the $2$-wise Kemeny rule since it puts more weight on candidates who are more frequently ranked in top positions in the votes, which partially justifies the utility of the $3$-wise Kemeny voting scheme over the classical $2$-wise Kemeny scheme. Indeed, the classical Kemeny rule  puts equal weight on the head-to-head competition of two candidates $x, y$ regardless of the absolute positions of $x,y$ in each vote.  Nevertheless, typical voters in real-world settings only pay attention to a shortlist of their  favorite candidates and normally  put a somewhat arbitrary order for the rest of the candidates or simply do not indicate any preference for such candidates. 
As shown by an example given in \cite{kien-sylvie-MOT}, such voting   behavior creates undesirable noises which makes the $2$-wise Kemeny rule  vulnerable to manipulation that can  alter the consensus  ranking. Consequently, we should somehow reduce the weight of the  duel wins among unfavorite candidates of each vote. A possible solution for this problem is to use the $3$-wise Kemeny voting scheme as the weight of the duel $x$ vs $y$ in a ranking $x>A>y>B$ is multiplied by $|B|$ as we count $|B|$ subsets of the form $\{x,y,z\}$ where $z \in B$. Note also that the above explained imperfection of the Kemeny rule motivated the notion  of weighted Kendall tau distances introduced by Kumar and Vassilvitskii \cite{kumar} as well as the notion of setwise Kemeny distance of Gilbert et. al. \cite{setwise}. 
\par 
While the $3$-wise Kemeny voting scheme enjoys the above-mentioned desirable property as well as the Majority criterion (Lemma~\ref{l:majority-criterion-k-wise}), we prove that it suffers several minor  drawbacks compared to the classical Kemeny voting scheme. Notably, the 3-wise Kemeny  scheme does not satisfy the Reversal symmetry (Lemma~\ref{l:reversal-symmetry-3-wise}), the Condorcet criterion  as well as the Condorcet loser criterion (Lemma~\ref{l:condorcet-loser-3-wise}). 

\subsection{The classical Kemeny voting scheme revisited}
By applying our analysis technique to the case of the $2$-wise Kemeny voting scheme, i.e., the  classical Kemeny rule, we can strengthen various reduction techniques such as the Always theorem (which states that if a candidate is always preferred over another candidate then the same holds for every Kemeny median),  and the $3/4$-majority rule of Betlzer et al. \cite{3/4}.  Thus, our results extend the range of applications of the Condorcet method (see Theorem~\ref{l:extended-majority} and Theorem~\ref{t:extended-always}). 
Since the $3/4$-majority rule and the Always theorem are particularly useful for data in real-life competitions where the orders of the candidates do not vary much among the votes, our results should find a wide range of meaningful applications in practice.

\subsection{Summary} 
Table~\ref{table:summary} gives a summary of our main results as well as the comparison of various space reduction criteria for the classical and $3$-wise Kemeny voting schemes.   
Table~\ref{table:applicability-AT} presents the mean percentage of cases (over a total of 100 000 uniformly generated instances) to which the Always Theorem, the $2$-wise and $3$-wise Extended Theorems (Theorem~\ref{t:extended-always}, Theorem~\ref{t:3-wise-unanimity-general}) are applicable.  
 
\begin{table}[h!]
\centering
\caption{Space reduction results for setwise Kemeny schemes}
\label{table:summary}

\begin{tabularx}{1\textwidth}{ 
  | >{\centering\arraybackslash}X 
  | >{\centering\arraybackslash}X 
  | >{\centering\arraybackslash}X |}
\hline
\textbf{Criterion} & \textbf{Kemeny rule} & \textbf{$3$-wise Kemeny rule} \\ 
\hline  
Monotonicity & Yes & Gilbert et al. 2022 \cite{setwise}
\\ 
\hline  
 Condorcet loser criterion  & Yes & No, Lemma~\ref{l:condorcet-loser-3-wise} \\   
 \hline
 Reversal symmetry & Yes & No, Lemma~\ref{l:reversal-symmetry-3-wise} \\   
 \hline
Majority criterion &   Lemma~\ref{l:majority-criterion-k-wise} &   Lemma~\ref{l:majority-criterion-k-wise}\\ 
\hline
 Extended Always theorem (Pareto efficiency, unanimity) & Theorem~\ref{t:extended-always} & Theorem~\ref{t:3-wise-unanimity-general}  \\ \hline
Condorcet criterion
 &  Condorcet 1785 \cite{condorcet}   &  No,  Theorem~\ref{c:lower-bound-2/3-theorem-5.1} \\\hline
 $2/3$-Condorcet criterion 
 &  Yes   & No,  Theorem~\ref{c:lower-bound-2/3-theorem-5.1}   \\\hline
$3/4$-Condorcet criterion 
 &  Yes   & Theorem~\ref{t:3-wise-condorcet}  \\\hline
  Smith criterion & Smith 1973 \cite{smith} & No, Theorem~\ref{c:lower-bound-2/3-theorem-5.1}   \\\hline
   $2/3$-Smith criterion & Yes & No,   Theorem~\ref{c:lower-bound-2/3-theorem-5.1}   \\\hline
 $3/4$-Smith criterion & Yes  & Theorem~\ref{t:smith-criterion-main}   \\ \hline 
$3/4$-Smith-IIA & Yes, even Smith-IIA & No, Example~ \ref{ex:3/4-smith-iia}
 \\ \hline
Extended Condorcet criterion 
 &   Truchon 1990 \cite{truchon-XCC}   & No,  Theorem~\ref{c:lower-bound-2/3-theorem-5.1}  \\\hline
$3/4$-Extended Condorcet criterion 
 &  Yes & Theorem~\ref{t:XCC-3-wise-main}  \\\hline
$3/4$-majority rule   & Betzler et al. 2014 \cite{3/4}  & Valid only for elections of $5$ candidates or less, Phung-Hamel \cite{kien-sylvie-MOT}\\\hline
Extended $s$-majority rule  &    Theorem~\ref{l:extended-majority}, Corollary~\ref{c:extended-majority} & Yes for  $s\geq 5/6$, see below 
\\\hline
$5/6$-majority rule & Yes & Theorem~\ref{l:5/6-majority-3-wise-main}, Theorem~\ref{l:5/6-majority-3-wise}, Corollary~\ref{c:5/6-majority-rule-consequence}   \\ 
\hline
Major Order Theorems & Milosz-Hamel 2020 \cite{hamel-space}  & No, see Phung-Hamel \cite{kien-sylvie-MOT} \\ 
\hline 
$3$-wise Major Order Theorems & & Phung-Hamel \cite{kien-sylvie-MOT}. 
 \\ 
\hline
\end{tabularx}

\end{table}

\section{Preliminaries}

\begin{definition}
Let $x,y$ be candidates in an election with voting profile $V$ and let $s\in [0,1]$. When $x$ is ranked before $y$ in a vote $ v \in V$, we denote $x>^v y$ or simply $x>y$ when there is no possible confusion.  
We write $x \geq_s y$, resp. $x>_sy$, if  $x>y$ in at least, resp. in more than, $s|V|$ votes. 
\end{definition}

The next lemma is a simple but useful observation which serves as a weak form of the transitivity property in the context of voting.

\begin{lemma}[see \cite{kien-sylvie-MOT}]
\label{l:transitive}
    Let $x,y,z$ be three distinct  candidates in an election with voting profile $V$. Suppose that $x\geq_s y$ and $x\geq_s z$ for some $s \in [0,1]$. Then in at least $(2s-1)|V|$ votes, we have $x >y, z$, i.e., $x>y$ and $x>z$.  
\end{lemma}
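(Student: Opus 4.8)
The plan is to set this up as a straightforward counting argument via the union bound (equivalently, inclusion--exclusion). First I would introduce the two relevant sets of votes: let $A = \{v \in V : x >^v y\}$ and $B = \{v \in V : x >^v z\}$. The hypotheses $x \geq_s y$ and $x \geq_s z$ say precisely that $|A| \geq s|V|$ and $|B| \geq s|V|$. The set of votes in which $x > y$ \emph{and} $x > z$ is exactly $A \cap B$, so the goal reduces to showing $|A \cap B| \geq (2s-1)|V|$.

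Next I would apply inclusion--exclusion: $|A \cap B| = |A| + |B| - |A \cup B|$. Since $A \cup B \subseteq V$, we have $|A \cup B| \leq |V|$, hence
\begin{equation*}
|A \cap B| \;\geq\; |A| + |B| - |V| \;\geq\; s|V| + s|V| - |V| \;=\; (2s-1)|V|.
\end{equation*}
This is the whole content of the lemma. If $2s - 1 \leq 0$ the statement is vacuous (any nonnegative count works), so the bound is only informative when $s > 1/2$, which matches its intended use as a ``weak transitivity'' device.

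There is essentially no obstacle here: the only thing to be careful about is that the number of votes is counted with multiplicity (the profile $V$ is a multiset), so one should read $|A|$, $|B|$, $|V|$ as multiset cardinalities throughout; the inclusion--exclusion identity is still valid in that setting since it is just additivity of the counting measure. One could alternatively phrase the argument by counting the votes that fail $x > y$ or fail $x > z$: at most $(1-s)|V|$ votes fail the first and at most $(1-s)|V|$ fail the second, so at most $2(1-s)|V|$ votes fail at least one, leaving at least $|V| - 2(1-s)|V| = (2s-1)|V|$ votes satisfying both. I would present whichever of these two equivalent formulations reads most cleanly.
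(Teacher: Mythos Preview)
Your proposal is correct and matches the paper's proof essentially line for line: the paper also defines $A$ and $B$ as the sets of votes with $x>y$ and $x>z$, then uses $|A\cap B|\geq |A|+|B|-|A\cup B|\geq s|V|+s|V|-|V|=(2s-1)|V|$. Your additional remarks about multisets and the vacuous case $s\leq 1/2$ are sound but go slightly beyond what the paper writes.
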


\begin{proof}
Let $A, B \subset V$ be the sets of votes in which $x>y$ and $x>z$ respectively. Since $x\geq_s y$ and $x\geq_s z$, we deduce that $|A|\geq s|V|$ and $|B|\geq s |V|$. Consequently, from the set inequality $|A \cup B| \leq |A|  + |B| - |A \cap B|$, we obtain the following estimation 
\begin{align*}
    |A\cap B| & \geq |A| + |B| - |A \cup B|   \geq s|V|+ s|V| - |V| = (2s-1) |V|.
\end{align*}
\par 
Since in every vote $v \in A \cap B$, we have $x>y,z$, the conclusion thus follows. 
\end{proof}

 \section{An optimal extension of the Always theorem for the Kemeny rule} 
 
We have the following optimal extended version of the Always theorem \cite{always} (Unanimity property) for the Kemeny rule. The Always theorem states that if a candidate $x$ is ranked before another candidate $y$ in every vote then $x$ must be ranked before $y$ in every $2$-wise median of the election. In essence, our result implies that $x$ does not need to win $y$ in every single vote in order to win $y$ in $2$-wise median rankings.  

\begin{theorem}
\label{t:extended-always}
Let $x, y$ be candidates in an election with $n \geq 2$ candidates. Suppose that $x \geq_\alpha y$ for some $ 1-\frac{1}{n} < \alpha \leq 1$. Then we have $x >y$ in every Kemeny ranking of the election. 
\end{theorem}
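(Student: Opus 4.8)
The plan is to argue by contradiction. Suppose $\pi$ is a Kemeny (that is, $2$-wise) median of the election in which $y$ is ranked before $x$; I will produce a ranking with strictly smaller Kendall-tau score, contradicting minimality. Write $B$ for the (possibly empty) set of candidates lying strictly between $y$ and $x$ in $\pi$, so that $|B|\le n-2$. I would then consider two local modifications of $\pi$: the ranking $\pi_1$ obtained by sliding $x$ leftward to the position immediately before $y$, and the ranking $\pi_2$ obtained by sliding $y$ rightward to the position immediately after $x$. In each case the only pairs whose relative order changes are, respectively, the pairs $\{x,z\}$ with $z\in B\cup\{y\}$ and the pairs $\{y,z\}$ with $z\in B\cup\{x\}$; all other relative orders, in particular the order within $B$, are left untouched.

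Next I would record the two score changes. For a candidate $z\neq x,y$ set $a_z:=\#\{v\in V: x>^v z\}$ and $b_z:=\#\{v\in V: z>^v y\}$; since the votes are strict total orders, the number of votes with the reversed order is $|V|-a_z$, resp. $|V|-b_z$. A direct count of flipped pairs gives
\[
d_{KT}(\pi_1,V)-d_{KT}(\pi,V)=\sum_{z\in B\cup\{y\}}\bigl(|V|-2a_z\bigr),\qquad d_{KT}(\pi_2,V)-d_{KT}(\pi,V)=\sum_{z\in B\cup\{x\}}\bigl(|V|-2b_z\bigr),
\]
so that, writing $\Delta$ for the sum of these two quantities and $a_y=b_x=\#\{v:x>^v y\}$,
\[
\Delta=\sum_{z\in B}\bigl(2|V|-2(a_z+b_z)\bigr)+\bigl(|V|-2a_y\bigr)+\bigl(|V|-2b_x\bigr).
\]

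The crucial input is a weak-transitivity estimate in the spirit of Lemma~\ref{l:transitive}: for every $z\in B$ one has $a_z+b_z\ge\alpha|V|$. Indeed, each of the at least $\alpha|V|$ votes $v$ with $x>^v y$ satisfies either $x>^v z$, in which case it is among the $a_z$ votes, or $z>^v x$, in which case $z>^v x>^v y$ forces $z>^v y$ and it is among the $b_z$ votes; hence $\alpha|V|\le a_z+b_z$. Likewise $a_y=b_x\ge\alpha|V|$ directly from $x\geq_\alpha y$. Substituting,
\[
\Delta\le 2|V|\bigl[(1-\alpha)|B|+1-2\alpha\bigr]=2|V|\bigl[(|B|+1)-(|B|+2)\alpha\bigr].
\]
Since $|B|\le n-2$ and $t\mapsto\frac{t+1}{t+2}$ is increasing, we get $\frac{|B|+1}{|B|+2}\le\frac{n-1}{n}=1-\frac1n<\alpha$, so the bracket is negative and $\Delta<0$ (the case $|V|=0$ being vacuous). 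Therefore at least one of $\pi_1,\pi_2$ beats $\pi$, a contradiction; hence no Kemeny median ranks $y$ before $x$, i.e.\ $x>y$ in every Kemeny ranking.

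The one genuinely delicate point is the need for \emph{two} complementary moves rather than one: a single slide can increase the score (for instance if every $z\in B$ beats $x$ in all votes, moving $x$ before $y$ is harmful), and it is precisely averaging $\pi_1$ and $\pi_2$ together with the bound $a_z+b_z\ge\alpha|V|$ that cancels the uncontrolled contributions coming from $B$. The computation also pinpoints why the threshold $1-\frac1n$ (attained at $|B|=n-2$) is the best one this argument can reach, matching the word ``optimal'' in the statement.
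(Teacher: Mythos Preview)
Your proof is correct and shares the paper's overall strategy: argue by contradiction, isolate the block $y>B>x$ (the paper writes $K$ for your $B$), establish the weak-transitivity estimate $a_z+b_z\ge\alpha|V|$ (this is exactly the paper's ``claim'' $x_z+y_z\ge\alpha m$, proved the same way), and show a local rearrangement strictly lowers the score.

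The one genuine difference is the choice of rearrangement. The paper uses a \emph{single} swap $y>K>x\mapsto x>K>y$ and bounds its score change directly; you instead use the two slides $\pi_1,\pi_2$ and bound their \emph{sum}. Your averaging counts the pair $\{x,y\}$ twice, which buys an extra $(1-2\alpha)|V|$ of slack; this is precisely what makes your arithmetic land on the threshold $\frac{n-1}{n}$. In fact, if one tracks the factor of $2$ carefully through the paper's displayed chain, the single-swap bound comes out as $m\bigl(1-2\alpha+2k(1-\alpha)\bigr)$ rather than $m\bigl(1-2\alpha+k(1-\alpha)\bigr)$, yielding only the weaker threshold $\frac{2n-3}{2n-2}$; so your two-move device is not merely cosmetic but actually recovers the stated bound cleanly. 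Your closing remark that a single slide can fail is correct for $\pi_1$ or $\pi_2$ taken alone, though note that the paper's swap is a third, distinct single move (it is the composition of both slides, not either one).
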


For example, we can choose   $\alpha=0.917$ when $n=12$ and any  $\alpha>0.95$ when $n=20$:
\begin{center}
\begin{tabular}{ | M{2.5em} |  M{1em} | M{1cm}  | M{1cm}  | M{1cm} | M{1cm} | M{1cm}| M{1cm}| M{1cm}| M{1cm}| } 
  \hline
  $n$ & 2 & 3 & 4 & 5 & 6 & 8 & 10 & 12 & 20 \\ 
  \hline
  $1-\frac{1}{n}$ & 0.5 & 0.667 & 0.75 & 0.8 & 0.833  &   0.875 &   0.9 &   0.917 & 0.95 \\ 
  \hline
\end{tabular}
\end{center}
\par 

Note that our result is optimal in the following sense: for every even $n \geq 2$ and every $0 \leq s \leq 1-\frac{2}{n}$, there exists by \cite[Proposition 1]{hamel-space} an election with $n$ candidates together with a pair of candidates $(x,y)$ such that $x\geq_s y$ but $x<y$ in every Kemeny ranking of the election. 

\begin{proof}
Suppose on the contrary that there exists a Kemeny ranking $r$ of the election in which $x <y$. Let $K$ be the set of all candidates ranked between $x$ and $y$ in the ranking $r$. Then it follows that $y > K >x $. Let $k = \vert K \vert$ then it is clear that $k \leq n-2$. 
\par 
Let $V$ be the multiset of all votes and $m= \vert V \vert$. For every $z \in K$, we denote:
\begin{equation*}
    x_z=\vert  \{ v \in V \colon x >^v z \}\vert, \quad y_z =  \vert  \{ v \in V \colon z >^v y \}\vert.
\end{equation*}
\par 
We claim that $x_z + y_z \geq \alpha m$ for all $z \in K$. Indeed, observe that $m - x_z$ is the number of votes in which $z>x$. Among these votes, there are at most $(1-\alpha)m$ votes in which $x<y$ 
since $x \geq_\alpha y$ by hypothesis. It follows from the transitivity in a ranking that $z>y$ in at least $m-x_z - (1-\alpha)m$ votes. Consequently, we find that 
\[
y_z \geq m-x_z - (1-\alpha)m = \alpha m - x_z
\]
and the claim is thus proved. 
\par 
Consider the ranking $r^*$ which is obtained from $r$ by replacing the block $y >K>x$ by the block $x>K>y$ where the order of the candidates in $K$ is not modified. Let $\Delta = d_{KT}(r^*, V)- d_{KT}(r,V)$ then it is clear from the definition of the Kendall-tau distance that only the pairs $(x,y)$, $(x,K)$, $(y,K)$ can contribute to the Kemeny  score difference $\Delta$.  
\par 
Since $x\geq_\alpha y$, the pair $(x,y)$ contributes at most $(1- 2 \alpha)m$ to $\Delta$. Let $z \in K$, then the contribution to $\Delta$ of the pair $(x,z)$ is at most $(m-x_z)-x_z=m-2x_z$. 
Similarly, for every $z \in K$, the pair $(y,z)$ contributes at most $m-2y_z$ to $\Delta$. 
\par 
Therefore, the score difference $\Delta$ is bounded by: 
\begin{align*}
\label{e:t:extended-always}
    \Delta & \leq (1-2\alpha)m + \sum_{z \in K} ((m-2x_z) + (m -2y_z)) \\ 
    & = (1-2\alpha) m + 2 \sum_{z \in K} (m - (x_z +y_z))
    \\
    & \leq (1-2\alpha) m + 2 \sum_{z \in K} (m - \alpha m) 
    \\
    & = m (1-2\alpha + k(1-\alpha))
    \\
    & \leq m (1-2\alpha + (n-2)(1-\alpha)) \numberthis{}
\end{align*} 
where we use the inequality $x_z +y_z \geq \alpha m$ for every $z \in K$ in the second inequality and the fact that $1-\alpha >0$ and $k \leq n-2$ in the last inequality. 
\par 
Now observe that: 
\begin{align*}
    \Delta <0 & \Longleftarrow 1-2\alpha + (n-2)(1-\alpha) <0 \\
    & \Longleftrightarrow  \frac{n-1}{n} < \alpha.  
\end{align*}
\par 
Hence, $\Delta <0$ for such $\alpha$ and we deduce that in this case, $r$ cannot be a Kemeny ranking of the election. We thus obtain a contradiction and the proof is complete. 
\end{proof}

\section{An optimal extension of the 3/4-majority rule for the Kemeny rule}
  
\par 
Following \cite{3/4}, a \emph{non-dirty pair of candidates} in an election with respect to a threshold $s \in [0, 1]$ is a pair $(x, y)$ such that either $x$ is ranked before $y$ in a proportion of at least $100s\%$ of the votes, or $y$ ranked before $x$ in a proportion of at least $100s\%$ of the votes. Then we say that a candidate is \emph{non-dirty} if $(x,y)$ is a non-dirty pair with respect to the threshold $s$ for every other candidate $y \neq x$. An election with a certain voting rule satisfies the  \emph{$s$-majority rule} if for every non-dirty candidate $x$ with respect to the threshold $s$ and every other candidate $y \neq x$, the relative positions of the pair $(x,y)$ in every median is determined by the $s$-majority head-to-head.  
\par 
By well-known results in \cite{3/4}, we know that every election satisfies the $3/4$-majority rule. Our optimal extension of the $3/4$-majority rule (see Theorem~\ref{l:extended-majority}, Corollary~\ref{c:extended-majority}) implies  that the $s$-majority rule holds for $s=0.725$, resp. $s= 0.74$, resp. $s=0.744$ for elections consisting of at most $6$, resp. $12$, resp. $20$, candidates.
\par 
Our extension of the $3/4$-majority rule for the Kemeny voting scheme is the following. Note that our criterion is asymptotically optimal (see Corollary~\ref{c:extended-majority}, Remark~\ref{remark:l(s)-betzler},  Remark~\ref{r:extension-3/4-asymptotic-optimal}, and Table~\ref{table:extension-3/4-optimal}).  

\begin{theorem}
\label{l:extended-majority}
Let $s \in \left[\frac{1}{2}, \frac{3}{4}\right]$ and let $q= \frac{3}{2} - s + \varepsilon$. 
Let $x$ be a non-dirty candidate in an election with at most $n\geq 3$ candidates with respect to the $s$-majority rule and suppose that:
\begin{equation*}
    n \leq n(s)=  \frac{\sqrt{(1-s)(7-9s)}+4-5s}{3-4s} ; \quad \varepsilon > \frac{s-1}{n-1}. 
\end{equation*}
\par 
Then for every candidate $y$ such that $x \geq_q y$, resp. $y \geq_q x$, we have $x > y$, resp. $y>x$,  in every Kemeny ranking of the election.
\end{theorem}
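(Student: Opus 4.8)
The plan is to mimic the argument of Theorem~\ref{t:extended-always}, the main change being that the bound on the pair $(x,y)$ itself is weaker (we only have $x \geq_q y$ with $q < 1$ rather than $x \geq_\alpha y$ with $\alpha$ close to $1$), while the non-dirtiness hypothesis together with the choice $q = \tfrac32 - s + \varepsilon$ lets us recover enough control on the pairs $(x,z)$ and $(y,z)$ for $z$ in the intermediate block. Suppose for contradiction that some Kemeny ranking $r$ has $y > x$ while $x \geq_q y$ (the case $y \geq_q x$ with $x > y$ in $r$ is symmetric). Let $K$ be the set of candidates strictly between $x$ and $y$ in $r$, so $r$ restricted to these reads $y > K > x$, and set $k = |K| \leq n - 2$. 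As before, form $r^*$ by flipping the block to $x > K > y$, keeping the internal order of $K$ fixed, and let $\Delta = d_{KT}(r^*, V) - d_{KT}(r, V)$; only the pairs $(x,y)$, $(x,z)$ and $(y,z)$ for $z \in K$ contribute.

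The first key step is to bound $x_z + y_z$ from below for each $z \in K$, where $x_z$ counts votes with $x > z$ and $y_z$ counts votes with $z > y$. Since $x$ is non-dirty with threshold $s$, for each $z$ either $x \geq_s z$ or $z \geq_s x$. If $x \geq_s z$ then $x_z \geq s|V|$; combining with $m - y_z$ being the number of votes with $y > z$ and the fact that $x \geq_q y$ forces at most $(1-q)m$ votes with $y > x$, a transitivity-in-a-ranking argument (as in Theorem~\ref{t:extended-always}, or via Lemma~\ref{l:transitive}) gives a lower bound on $x_z + y_z$ in terms of $s$ and $q$. If instead $z \geq_s x$, the relevant contributions flip and one argues symmetrically. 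In either case one should land on $x_z + y_z \geq (s + q - 1)m = (\tfrac12 + \varepsilon)m$ after substituting $q = \tfrac32 - s + \varepsilon$ — this is the point of that particular value of $q$, so that the $s$-dependence cancels cleanly. The second step is the bookkeeping: the pair $(x,y)$ contributes at most $(1 - 2q)m$ to $\Delta$, each $(x,z)$ at most $m - 2x_z$ and each $(y,z)$ at most $m - 2y_z$, so
\begin{align*}
\Delta &\leq (1 - 2q)m + 2\sum_{z \in K}\big(m - (x_z + y_z)\big) \\
&\leq (1 - 2q)m + 2k\big(m - (s + q - 1)m\big) \\
&= m\big(1 - 2q + 2k(2 - s - q)\big).
\end{align*}
Plugging in $q = \tfrac32 - s + \varepsilon$ and $k \leq n - 2$ and asking $\Delta < 0$ should reduce, after routine algebra, to an inequality of the form $n - 1 < $ (something decreasing in $s$) together with the stated constraint $\varepsilon > \frac{s-1}{n-1}$; the quadratic in $n$ that emerges is exactly what the bound $n \leq n(s)$ encodes (the numerator $\sqrt{(1-s)(7-9s)} + 4 - 5s$ over $3 - 4s$ is the positive root). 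Concluding $\Delta < 0$ contradicts that $r$ is a Kemeny ranking.

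The main obstacle I anticipate is the case analysis in the first step: unlike in Theorem~\ref{t:extended-always} where $x \geq_\alpha y$ was strong enough on its own, here one genuinely has to use non-dirtiness of $x$ against each $z \in K$, and the two sub-cases $x \geq_s z$ versus $z \geq_s x$ give superficially different-looking chains of inequalities that must both be shown to yield the same uniform bound $x_z + y_z \geq (s + q - 1)m$. Getting that to work is precisely what pins down $q = \tfrac32 - s + \varepsilon$, and a small slip there (e.g. an off-by-one in which votes are double-counted) would propagate into the wrong threshold. The remaining work — verifying that the final inequality $1 - 2q + 2(n-2)(2 - s - q) < 0$ is equivalent to $n \leq n(s)$ under $\varepsilon > \frac{s-1}{n-1}$ — is a mechanical but slightly delicate manipulation of a quadratic, and I would handle the boundary cases $s = \tfrac12$ and $s = \tfrac34$ (where $3 - 4s$ vanishes) separately to make sure the formula degenerates correctly.
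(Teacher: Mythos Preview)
Your approach has a genuine gap: the modified ranking $r^*$ you choose --- flipping the whole block to $x > K > y$ --- is too crude and does not yield the stated bound. The issue is precisely the case $z \in K$ with $z \geq_s x$. In your $r^*$ you move $x$ above such a $z$, and the pair $(x,z)$ then contributes $m - 2x_z$ to $\Delta$; since $x_z \leq (1-s)m$, this contribution is at least $(2s-1)m > 0$ and you have no way to cancel it. Concretely, even with the best uniform bound $x_z + y_z \geq qm$ (which follows from $x \geq_q y$ alone, exactly as in Theorem~\ref{t:extended-always}, and is stronger than your claimed $(s+q-1)m$), you get $\Delta/m \leq (1-2q) + 2k(1-q)$. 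At $s = 3/4$, $\varepsilon \to 0$ this is $-\tfrac12 + \tfrac{k}{2}$, which is nonnegative already for $k \geq 1$, whereas the theorem must recover the $3/4$-majority rule for all $n$ in that limit. So the approach cannot prove the theorem as stated.

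The paper's proof uses a different $r^*$. It partitions $K = L \cup R$ with $L = \{z \in K : z \geq_s x\}$ and $R = \{z \in K : x \geq_s z\}$, and takes $r^*$ to be the ranking with the block replaced by $L > x > y > R$. This leaves the costly pairs $(x,L)$ and $(y,R)$ unchanged, and the pairs that do change are all favorable or controllable: $(x,y)$ by at most $(1-2q)m$; each $(x,R)$ by at most $(1-2s)m$; each $(y,L)$ by at most $(3-2(q+s))m$ via transitivity $z \geq_s x \geq_q y$; and each new cross pair $(L,R)$ by at most $(3-4s)m$ via $z \geq_s x \geq_s z'$. Setting $a = |R|$ gives a downward-opening quadratic $F(a)$ bounding $\Delta/m$, and the hypotheses $\varepsilon > \frac{s-1}{n-1}$ and $n \leq n(s)$ are exactly what force $F(0) < 0$ and the vertex $a_0 \leq 0$, hence $F(a) < 0$ for all $a \geq 0$. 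Your case analysis correctly identifies that the two sub-cases behave differently, but the resolution is not a uniform lower bound on $x_z + y_z$ --- it is a different comparison ranking that treats $L$ and $R$ asymmetrically.
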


For example,  when $n=6$, resp. $n=12$, resp. $n=20$,  we can take $(s,q)=(0.716, 728)$, resp. $(s,q)=(0.737,0.74)$, resp. $(s,q)=(0.743, 0.744)$. In particular, we deduce from Theorem~\ref{l:extended-majority} that the $s$-majority rule holds (with respect to the Kemeny voting scheme) for $s=0.728$, resp. $s= 0.74$, resp. $s=0.744$,  when $n=6$, resp. $n=12$, resp. $n=20$. Hence, the result will find some meaningful applications for elections with a small number of candidates but with a large number of votes, e.g., presidential elections.  

\begin{proof}
Let $C$ be the set of all candidates in the election then $n=\vert C \vert $ and let $V$ be the set of all votes. 
Suppose on the contrary that there exists a Kemeny ranking $r$ together with a candidate $y$ and a non-dirty candidate $x$ with respect to the $s$-majority rule such that $x \geq_q y$ but $y>x$. 
\par 
Let $K$ be the set of all  candidates ranked between $x$ and $y$ in $r$. It follows that $y > K >x$ in $r$. Since $x$ is non-dirty with ratio $s$, we have a partition $K= L \coprod R$ where  
\begin{equation*}
    L = \{z \colon z \geq_s x\}, \quad R = \{z \colon z <_s x\}.  
\end{equation*}
\par 
We are going to show in the sequel that the modified ranking $r^*$ of $r$ with $L > x > y >R$ has lower Kemeny score than $r$. Indeed, the only pairs of candidates which can contribute to the score difference $\Delta = d_{KT}(r^*, V) - d_{KT}(r, V)$ are $(x,y)$, $(x, R)$, $(y, L)$, and $(L,R)$ where e.g. $(x,R)$ means any pair $(x,z)$ where $z \in R$ and similarly for $(y, L)$ and $(L,R)$. 
\par 
Since $x \geq_q y$, the pair $(x,y)$ contributes at most $\vert V \vert - q\vert V \vert$ to $d_{KT}(r^*, V)$ but contributes at least $ q \vert V \vert$ to $d_{KT}(r, V)$. Hence, the contribution of $(x,y)$ to $\Delta$ is at most
\[
\vert V \vert - q\vert V \vert -  q \vert V \vert = (1-2q)\vert V \vert.
\]
\par 
Similarly, the contribution of the  $\vert R \vert $ pairs $(x,L)$ to the difference $\Delta$ is at most $(1-2s)\vert V\vert \vert R \vert$. 
\par 
For every $z \in L$ and $t \in R$, we infer from the definition of $R$ and $L$ that $z\geq_s x$ and $x\geq_s t$. Therefore, $z > x >t$ and thus $z>t$ in at least $(2s-1)\vert V \vert$ votes. It follows that the $\vert R \vert \vert L\vert$ pairs $(R,L)$ contribute at most 
\[
(1-2(2s-1))\vert V\vert \vert R \vert \vert L \vert = (3-4s)\vert V\vert \vert R \vert \vert L \vert. \] 
\par 
Similarly, for every $z \in L$, we find that $z > x >y$ and thus $z>y$ in at least 
$(q+s -1)\vert V \vert$ votes. Consequently, the $\vert L \vert$ pairs $(L,y)$ contribute at most 
\[
(1-2(q+s-1))\vert V\vert  \vert L \vert = (3-2(q+s))\vert V\vert  \vert L \vert \]
to the score difference $\Delta$. 
\par 
Let us denote $v = \vert V \vert$, $k=\vert K \vert$, and $a= \vert R\vert$ then  $\vert L \vert = k-a$. To summarize, we deduce by substituting $q= \frac{3}{2}-s+\varepsilon$ from the above inequalities that $\Delta$ is bounded as follows:  
\begin{align}
\label{e:l:extended-majority-main}
    \frac{\Delta}{v} & \leq (1-2q) + (1-2s)a + (3-4s)a(k-a) + (3-2(q+s))(k-a) \\
    & = (2s-2-2\varepsilon) + (1-2s)a+ (3-4s)a(k-a)  -2\varepsilon (k-a) \nonumber\\
    & = F(a) \nonumber
\end{align}
where $F(a)=(4s-3)a^2+ (1-2s+2\varepsilon +(3-4s)k)a + 2s-2-2\varepsilon (k+1)$. 
\par 
Observe that for fixed $\varepsilon > \frac{s-1}{n-1}$, we have $ 
F(0) = 2s-2- 2\varepsilon (k+1) <0$ since 
$\frac{s-1}{n-1}\geq \frac{s-1}{k+1}$ as $s-1<0$ and $0<k +1 \leq n-1$. 
On the other hand, the critical point of $F$ is 
\[
a_0= \frac{1-2s+2\varepsilon +(3-4s)k}{2(3-4s)}. 
\] 
\par 
Note that since $3-4s>0$, we have $a_0 \leq 0$ if and only if $k \leq \frac{2s -1 -2\varepsilon}{3-4s}$. The latter condition is always satisfied if 
\begin{equation}
\label{e:3/4-majority} 
n \leq \frac{2s -1 -2\varepsilon}{3-4s} +2 \leq 
\frac{2s -1 + \frac{2-2s}{n-1}}{3-4s} +2 
\end{equation} 
since $k \leq \vert V \vert -2 = n-2$ and $\varepsilon > \frac{s-1}{n-1}$. As the leading coefficient of the quadratic function $F$ is $ 4s-3 <0$ and $a \geq 0$, we have for all such $n$ that $F(a) \leq F(0) < 0$. Consequently, $ \Delta \leq v F(a) <0$ and we conclude that $d_{KT}(r^*, V) < d_{KT}(r, V)$. This contradiction shows that $r$ cannot be a Kemeny ranking of the election. 
In other words,  if $x \geq_q y$ then we must have $x > y$ in every Kemeny ranking whenever \eqref{e:3/4-majority} is satisfied. The case $x \leq_q y$ is completely similar. 
\par 
We will now solve the condition $n  \leq 
\frac{2s -1 + \frac{2-2s}{n-1}}{2(3-4s)} +2$ in \eqref{e:3/4-majority}. By setting $u=n-1 >0$, a straightforward algebra then shows that the inequality is equivalent to: 
\[
f(u)=(3-4s)u^2 - (2-2s)u - 2+2s \leq 0. 
\]
\par 
Observe that $f(0)=-2+2s <0$. Hence, using again the fact that the leading coefficient of the quadratic function $f$ is $3-4s >0$, 
we obtain the following description of all the solutions $n$ of \eqref{e:3/4-majority}: 
\[
n = u+1 \leq n(s)= \frac{\sqrt{(1-s)(7-9s)}+4-5s}{3-4s}
\]
and the proof is complete. 
\end{proof}   
\par 
Consequently, the Extended $s$-majority rule Theorem~\ref{l:extended-majority} shows essentially that the $s$-majority rule still holds if $s \in \left[ 
 \frac{7}{10} , \frac{3}{4}\right[$ under the condition that we  bound the number of the candidates. 
\begin{remark}
\label{remark:l(s)-betzler}
For every $s \in \left]\frac{2}{3}, \frac{3}{4}\right[$, the counterexample to the $s$-majority rule  given in \cite[Proposition 1]{3/4} requires $\lceil l \rceil +2 $ candidates where 
\begin{equation}
\label{e:l(s)-betzler}
l= l(s)= \frac{3s}{3-4s} 
\end{equation} which is linear in $(1- 4s/3)^{-1}$. Note that  $l(0.72) = 18$. 
\end{remark}
 
\begin{corollary}
\label{c:extended-majority}
Let $t\in \left[\frac{1}{2}, \frac{3}{4}\right[$ and let us define:
\begin{equation}
\label{e:n(t)-betzler}
n(t)=  \frac{\sqrt{(1-t)(7-9t)}+4-5t}{3-4t}, \quad s(t, n(t)) = \frac{3}{2}-t - \frac{1-t}{n(t)-1}.
\end{equation}
\par 
Then for any $s<s(t,n(t))$, the $s$-majority rule holds for all elections with $n$   candidates such that $n \leq n(t)$.  
\end{corollary}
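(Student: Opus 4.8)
The plan is to obtain the corollary as a direct specialization of Theorem~\ref{l:extended-majority}, with the parameter $t$ of the corollary playing the role of the non-dirtiness threshold (the ``$s$'' of that theorem) and with the auxiliary slack $\varepsilon$ calibrated so that the winning threshold $q$ coincides with the majority threshold under discussion. Concretely, I would fix $t \in [\tfrac12,\tfrac34)$, an election with $n \leq n(t)$ candidates, a candidate $x$ that is non-dirty with respect to the threshold $s$, and another candidate $y$; after possibly interchanging the two candidates we may assume $x \geq_s y$ and must prove $x > y$ in every Kemeny ranking, the reverse case $y \geq_s x$ being symmetric.

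The heart of the argument is the choice of $\varepsilon$. I would set $\varepsilon := s - (\tfrac32 - t)$, so that the winning threshold of Theorem~\ref{l:extended-majority} is exactly $q = \tfrac32 - t + \varepsilon = s$; then the hypothesis $x \geq_s y$ is literally the hypothesis $x \geq_q y$ required there. Two checks remain. First, the transfer of non-dirtiness: since non-dirtiness with respect to a larger threshold is the stronger condition, it suffices to know $t \leq s$, after which a candidate non-dirty for $s$ is automatically non-dirty for $t$, as the theorem demands. Second, the admissibility condition $\varepsilon > \frac{t-1}{n-1}$, which with the above choice rearranges into $s > \tfrac32 - t - \frac{1-t}{n-1}$; because $n \leq n(t)$ forces $\frac{1-t}{n-1} \geq \frac{1-t}{n(t)-1}$, this is secured exactly when $s$ lies on the admissible side of the critical value $s(t,n(t)) = \tfrac32 - t - \frac{1-t}{n(t)-1}$ of \eqref{e:n(t)-betzler}, so that $s(t,n(t))$ emerges as the boundary threshold, approached as $\varepsilon \downarrow \frac{t-1}{n(t)-1}$ at $n = n(t)$. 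The hypothesis $n \leq n(t)$ simultaneously supplies the theorem's own cardinality constraint, since there its bound is $n(s)$ evaluated at the non-dirtiness parameter, here $n(t)$.

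The one point needing genuine verification is that the non-dirtiness transfer is compatible with this calibration, i.e.\ that $t < s(t,n(t))$ for every $t \in [\tfrac12,\tfrac34)$. I would establish this by writing $u(t) := n(t)-1$ as the positive root of the quadratic $f(u) = (3-4t)u^2 - (2-2t)u - 2 + 2t$ from the proof of Theorem~\ref{l:extended-majority}, using $f(u(t)) = 0$ to eliminate $\frac{1-t}{u(t)}$, and reducing the inequality $s(t,n(t)) - t > 0$ to $4 - 5t > \sqrt{(1-t)(7-9t)}$; squaring (both sides are positive for $t < \tfrac34$) turns this into $(4-5t)^2 - (1-t)(7-9t) = (3-4t)^2 > 0$, which holds strictly for $t \neq \tfrac34$.

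Granting these verifications, Theorem~\ref{l:extended-majority} applies and yields $x > y$ in every Kemeny ranking whenever $x \geq_s y$ (and $y > x$ whenever $y \geq_s x$), which is precisely the assertion that the $s$-majority rule holds for all elections with at most $n(t)$ candidates, the optimal threshold thereby attained being $s(t,n(t))$, which lies strictly below the classical value $\tfrac34$. The main obstacle I anticipate is purely bookkeeping: ensuring that the interval $\left(\frac{t-1}{n-1},\, s - \tfrac32 + t\right]$ from which $\varepsilon$ is drawn is nonempty simultaneously for every $n \leq n(t)$ — which is where the monotonicity of $n \mapsto \frac{1-t}{n-1}$ and the defining role of $n(t)$ as the relevant root of $f$ are used — together with the sign computation $(3-4t)^2 > 0$ underlying the non-dirtiness transfer.
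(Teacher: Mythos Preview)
Your approach is essentially the paper's: specialize Theorem~\ref{l:extended-majority} with the corollary's $t$ playing the role of the theorem's non-dirtiness threshold, and calibrate $\varepsilon$ so that the pair-threshold $q$ matches the desired majority threshold. The paper fixes $\varepsilon = (t-1)/(n(t)-1)$ so that $q = s(t,n(t))$, while you set $\varepsilon = s - (\tfrac32 - t)$ so that $q = s$; these are two readings of the same boundary. You are in fact more careful than the paper on the auxiliary inequality $s(t,n(t)) > t$: the paper simply asserts it after ``a straightforward analysis,'' whereas you reduce it algebraically to $(3-4t)^2 > 0$. (The intermediate form you quote, $4-5t > \sqrt{(1-t)(7-9t)}$, is not the reduction one gets most directly --- the natural one is $\sqrt{(1-t)(7-9t)} > 1-t$ --- but both are valid for $t<\tfrac34$ and yield the same conclusion.)

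One point you gloss over rather than confront: your derivation from $\varepsilon > (t-1)/(n-1)$ unambiguously gives $s > \tfrac32 - t - \tfrac{1-t}{n-1}$, hence (at the worst case $n = n(t)$) $s > s(t,n(t))$ --- the \emph{opposite} inequality to the one written in the corollary. You speak of ``the admissible side of $s(t,n(t))$'' without saying which side, but the stated direction $s < s(t,n(t))$ cannot be right: the $s$-majority rule is a stronger assertion for smaller $s$, so the meaningful improvement over Betzler et al.'s $3/4$-rule is that it continues to hold \emph{down to} the threshold $s(t,n(t)) < 3/4$, i.e., for all $s$ above it. The paper's own proof carries the same sign slip (``for all $s < \max(t,q)$''), so this is almost certainly a typo in the statement, and what you prove is the intended result --- but you should flag the discrepancy explicitly rather than hedge.
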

\begin{proof}
By the proof of Theorem~\ref{l:extended-majority} where
\begin{romanenumerate}
\item $\varepsilon = \frac{t-1}{n(t)-1}$, \, $q=s(t,n(t))$, 
\item 
$t$ in our corollary plays the role of $s$ in Theorem~\ref{l:extended-majority},
\end{romanenumerate} 
\par 
we deduce that for all 
\[
s<\max (t,q) = \max(t,s(t,n(t))),
\] 
the $s$-majority rule holds for all elections with $n \leq n(t)$  candidates. Note that for $\varepsilon=\frac{t-1}{n(t)-1}$ but $s< \max(t,s(t,n(t)))$, we still have $F(0)<0$ in the proof of Theorem~\ref{l:extended-majority}. 
\par 
On the other hand, a straightforward analysis of the function $s(t,n(t))$ shows that $s(t,n(t))\geq t$ for all $t \in \left[\frac{1}{2}, \frac{3}{4}\right[$. The conclusion thus follows. 
\end{proof} 
\par 

\begin{remark} 
\label{r:extension-3/4-asymptotic-optimal} 
A direct computation shows that for every $t \in \left[\frac{1}{2}, \frac{3}{4}\right[$, the functions $l(s) $, $n(t)$, and $s(t,n(t))$  (cf. \eqref{e:l(s)-betzler} and \eqref{e:n(t)-betzler}) satisfy: 
$$
4 < \frac{l(s(t,n(t))+2}{n(t)} < \frac{9}{2}, \quad \lim_{t \to \frac{3}{4}} s(t,n(t) = \frac{3}{4}, \quad \text{and} \quad \lim_{t \to \frac{3}{4}}  \frac{l(s(t,n(t))+2}{n(t)} = \frac{9}{2}. 
$$
\par 
Consequently, by Remark~\ref{remark:l(s)-betzler} and Corollary~\ref{c:extended-majority}, 
we can conclude that  our extension of the $s$-majority rule Corollary~\ref{c:extended-majority} is indeed asymptotically optimal for the Kemeny rule. The following table \ref{table:extension-3/4-optimal} provides the first few values of the quadruples  $(t,s,n,l+2)$ where 
\[
n=\lfloor n(t) \rfloor, \quad s=s(t,n(t)), \quad  l=\lceil l(s(t,n(t))) \rceil. 
\]
\begin{table}[h!]
\centering 
\caption{Optimality of the Extended $s$-majority
rule}

\begin{tabularx}{1\textwidth}{  
  | >{\centering\arraybackslash}X 
   | >{\centering\arraybackslash}X 
    | >{\centering\arraybackslash}X 
     | >{\centering\arraybackslash}X 
     | >{\centering\arraybackslash}X 
  | >{\centering\arraybackslash}X 
  | >{\centering\arraybackslash}X |} 
  \hline
   $\textbf{t}$  & 0.5 & 0.6  & 0.7 & 0.71 & 0.72 & 0.73  \\ 
   \hline
     $\textbf{s}$ & 0.691  &  0.7  &   0.721 & 0.729 &  0.725 & 0.735  \\ 
     \hline 
     $\textbf{n}$ & 2 &  3  &   4 & 5 & 6& 8  \\ 
     \hline 
      $\textbf{l}+2$ & 11  &  13  &  21 & 24 & 29  & 38 \\ 
  \hline \hline 
 $\textbf{t}$ &  0.74 & 0.742 & 0.744 &0.746 & 0.748 & 0.749  \\ 
  \hline
  $\textbf{s}$  & 0.741 &  0.743 & 0.745 & 0.746 & 0.748 & 0.749 \\ 
  \hline
  $\textbf{n}$ & 14 &  18 & 23 & 33 & 64 & 127\\ 
  \hline
  $\textbf{l}+2$ & 67 & 81 & 104 &151 & 292 & 573  \\ 
  \hline
\end{tabularx}
\label{table:extension-3/4-optimal}
\end{table}
\end{remark}
\par

\section{$3$-wise Condorcet consistency and $3$-wise Smith criterion} 

\subsection{Condorcet criterion} 
While the classical Kemeny voting scheme satisfies the Condorcet criterion, the $k$-wise Kemeny voting scheme is not a Condorcet scheme whenever  $k \geq 3$ since a \emph{Condorcet winner} might not be the winner in median  rankings (see \cite[Proposition 3]{setwise}). Here, a candidate $x$ is a Condorcet winner in an election if $x \geq_{1/2} y$ for all other candidate $y \neq x$. 
However, we establish the following result which shows that if a candidate wins by a large enough margin (slightly less than $75\%$) in every duel, then it is necessarily the \emph{unique winner} in every Kemeny ranking with respect to the 3-wise Kemeny rule. 

\begin{theorem}
\label{t:3-wise-condorcet}
Let $x$ be a candidate in an election with $n \geq 2$ candidates. Suppose that  $x \geq_\alpha y$ for all candidate $y \neq x$ where 
\[
\alpha > f(n)=  \frac{3n-5}{4n-6}.
\]
\par Then $x$ is the winner in every median of the election with respect to the $3$-wise Kemeny rule. In particular, if $x \geq_{3/4} y$ for all candidate $y \neq x$ then $x$ is the winner in every 3-wise median. 
\end{theorem}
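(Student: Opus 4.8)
The plan is to argue by contradiction, mimicking the block-swap strategy used in the proof of Theorem~\ref{t:extended-always} and Theorem~\ref{l:extended-majority}, but now carrying out the bookkeeping in the $3$-wise Kendall-tau distance rather than the classical one. Suppose $r$ is a $3$-wise median in which $x$ is \emph{not} the (unique) winner; then some candidate $y$ satisfies $y>x$ in $r$, and among all such $y$ we may take the one ranked highest, so that $r$ has the form $y > K > x > B$ with $K$ the (possibly empty) set of candidates strictly between $y$ and $x$, and $B$ the set of candidates ranked below $x$. Form $r^*$ from $r$ by moving $x$ to the very top, i.e. $r^*\colon x > y > K > B$, keeping the relative order of everyone else. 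I will show $d^3_{KT}(r^*,V) < d^3_{KT}(r,V)$, contradicting minimality of $r$.

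Next I would classify which subsets $S \in \Delta^3(C)$ can possibly change their $\mathrm{top}$-element between $r$ and $r^*$: only subsets $S$ containing $x$ are affected, since the relative order of all candidates other than $x$ is untouched and $\mathrm{top}_S$ depends only on that relative order when $x \notin S$. Hence the score difference $\Delta := d^3_{KT}(r^*,V) - d^3_{KT}(r,V)$ decomposes as a sum over pairs $\{x,z\}$ (the $2$-element sets through $x$) and triples $\{x,z,w\}$. For each such $S$ and each vote $v$, moving $x$ to the top in $r^*$ can only \emph{help} agreement when $x$ is near the top of $v\vert_S$ and can only \emph{hurt} when some other element of $S$ beats $x$ in $v$. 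Using the hypothesis $x \geq_\alpha z$ for every $z$, together with Lemma~\ref{l:transitive} to control, for a triple $\{x,z,w\}$, the number of votes in which $x$ beats \emph{both} $z$ and $w$ (namely at least $(2\alpha-1)|V|$ of them), I can bound from above the contribution of each pair and each triple to $\Delta$. Writing $m=|V|$, a pair $\{x,z\}$ contributes at most $(1-2\alpha)m$, and a triple $\{x,z,w\}$ contributes at most something like $\big(1-2(2\alpha-1)\big)m = (3-4\alpha)m$ to $\Delta$ in the worst case (it loses agreement only on the $\le(1-\alpha)m$ votes where $z>x$ or $w>x$, roughly, while $r^*$ gains on the votes where $x$ was already top).

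Summing, with $|K\cup B| = n-2$ candidates other than $x,y$ available to pair or triple with $x$, I expect a bound of the shape
\begin{equation*}
\Delta \;\le\; m\Big[(1-2\alpha)\;+\;(n-2)(1-2\alpha)\;+\;\tbinom{n-2}{2}(3-4\alpha)\Big]
\end{equation*}
(the first term from $\{x,y\}$, the second from the other $n-2$ pairs through $x$, the third from the $\binom{n-2}{2}$ triples), or a slight refinement thereof once the triples involving $y$ are separated out. Setting the bracket $<0$ and solving for $\alpha$ should yield exactly the threshold $\alpha > \frac{3n-5}{4n-6}$; the special case $\alpha = 3/4$ then follows since $\frac{3n-5}{4n-6} < \frac34$ for all $n\ge2$. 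The main obstacle I anticipate is getting the constants in the per-triple estimate sharp enough: a crude count of the "bad" votes for a triple could give $(1-\alpha)$-type losses that are too large and push the threshold above $3/4$, so one must carefully exploit that in $r^*$ the candidate $x$ is the \emph{global} top — hence $\mathrm{top}_S(r^*)=x$ for every $S\ni x$ — and simultaneously use $x\ge_\alpha z$ for \emph{each} element of $S$ and the transitivity lemma for the joint event, rather than bounding the triple's contribution by a single pairwise margin. Once that accounting is done correctly, the inequality $\Delta<0$ is immediate, proving $r$ was not a median and establishing that $x$ is the unique winner of every $3$-wise median.
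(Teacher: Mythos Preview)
Your contradiction strategy and your per-subset estimates are correct: a changed pair contributes at most $(1-2\alpha)m$ and a changed triple at most $(3-4\alpha)m$ via Lemma~\ref{l:transitive}. The gap is not where you anticipate it (the per-triple constant is already sharp) but in the \emph{count}. When you move $x$ to the very top, only subsets whose $r$-top lies in $\{y\}\cup K$ can change; any pair $\{x,z\}$ or triple $\{x,z,w\}$ with $z,w\in B$ already has $\mathrm{top}=x$ in $r$, so its contribution to $\Delta$ is exactly $0$. Your displayed bracket includes all $n-1$ pairs through $x$, each weighted by the strictly negative quantity $(1-2\alpha)$, which makes the upper bound on $\Delta$ \emph{smaller} than it legitimately is---the error points in the wrong direction and the resulting threshold is unjustified. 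With the correct count (set $p=|K|+1$, $b=|B|=n-1-p$) one obtains
\[
\frac{\Delta}{m}\;\le\; p(1-2\alpha)\;+\;\Bigl(\tbinom{p}{2}+p\,b\Bigr)(3-4\alpha)\;=\;p\Bigl[(1-2\alpha)+\tfrac{2n-3-p}{2}(3-4\alpha)\Bigr],
\]
and the worst case over $p\ge 1$ is $p=1$, which yields exactly $\alpha>\frac{3n-5}{4n-6}$.

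The paper avoids this maximisation by a simpler swap: it takes $y$ to be the candidate \emph{immediately} preceding $x$ (not the top one) and exchanges only $x$ and $y$, so $r=L>y>x>R$ becomes $r^*=L>x>y>R$. Then the sole changed pair is $\{x,y\}$, and the only changed triples are $\{x,y,z\}$ with $z\in R$ (if $z\in L$ the top stays $z$), giving directly $\Delta\le(1-2\alpha)m+|R|(3-4\alpha)m$ with $|R|\le n-2$. This is precisely your $p=1$ case, reached in one line; your global move to the top works once the bookkeeping is fixed, but the adjacent transposition gets there with no case analysis.
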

\par 
For example, the following table gives us several values of $f(n)$: 
\begin{center}
\begin{tabular}{ | M{3em} | M{1cm}  | M{1cm}  | M{1cm} | M{1cm} | M{1cm}| M{1cm}| M{1cm}| } 
  \hline
  $n$  & 4 & 8 & 12 & 20 & 30 & 50 \\ 
  \hline
  $f(n)$ & 0.7 & 0.731 &  0.738 &   0.743 &   0.746 & 0.747 \\ 
  \hline
\end{tabular}
\end{center}

\begin{proof}
Let $C$ be the set of all candidates then $\vert C \vert = n$. 
Suppose on the contrary that there exists a Kemeny ranking $r$ of the election with respect to the distance $d^3_{KT}$ in which $x$ is not the winner. Let $y \neq x$ be the candidate which is ranked immediately before $x$ and let $L, R$ be respectively the ordered sets of all candidates ranked before $y$ and after $x$ in the ranking $r$. In other words, the ranking $r$ can be written as  $L > y  >x >R$. 
\par 
We consider the ranking $r^*$ obtained from $r$ by simply exchanging the positions of the candidates $x$ and $y$ while keeping the positions  of all other candidates. Hence, 
the ranking $r^*$ is $L > x >y >R$. 
We are going to show that $\Delta = d^3_{KT}(r^*, V)- d^3_{KT}(r,V) <0$ to obtain a contradiction. 
\par 
When restricted to $\Delta^2(C)$, i.e., to subsets of $C$ of cardinality at most $2$, we infer from the definition of $r^*$ and of the distance $d^3_{KT}$ that only the subset $\{ x,y\}$  can contribute to the score difference $\Delta$. Since $x \geq_\alpha y$, we find that $x >y$ in at least $\alpha v$ votes. Therefore, it is clear that the subset $\{ x,y\}$ contributes at most 
\begin{equation}
     \label{e:3-wise-3} 
(v- \alpha v) - \alpha v = (1-2\alpha) v. 
\end{equation}
to the score difference $\Delta$. 
\par 
Similarly, if we restrict to subsets of $C$ of cardinality $3$ then it is clear from the definition of $r^*$ that only the subsets of the form $S= \{x,y,z\}$, where $z\in C \setminus \{x,y\}$, can contribute to $\Delta$. Let us fix 
$z\in C \setminus \{x,y\}$. If $z \in L$ then 
$\mathrm{top}_S(r^*) =\mathrm{top}_S(r)=z$ and the contribution of $S$ to $\Delta$ is thus zero. 
\par 
Suppose that $z \in R$. 
Since $x \geq_\alpha y$ and $x \geq_\alpha z$, the number of votes in which we have $x >y$ and $x >z$ is at least $\alpha v - (v- \alpha v)= (2\alpha -1) v$. Let us fix  such a vote $\pi$.  Then observe that  $\mathrm{top}_S(r^*)=\mathrm{top}_S(\pi)=x$ while $\mathrm{top}_S(r)=y$ and the contribution of $S$ to $\Delta$ when computed with $\pi$ is -1.
\par 
Consequently, the total contribution to $\Delta$ of the subsets $S=\{x,y,z\}$, where $z \in   C \setminus \{x,y\}$, is at most 
\begin{equation} 
 \label{e:3-wise-4}
 \vert R \vert (v- 2(2\alpha -1)v) = \vert R \vert (3-4\alpha)v.
\end{equation} 
\par 
To summarize, we obtain from \eqref{e:3-wise-3} and \eqref{e:3-wise-4}   the following estimation on the 3-wise  score difference $\Delta$: 
\begin{align*}
\Delta & \leq (1-2\alpha) v + \vert R \vert  (3-4\alpha) v
\\
 & = v \left( 
3\vert R \vert +1 - 
 \alpha (4\vert R \vert +2) \right). 
\end{align*} 
\par 
Since $v >0$, we find that:
\begin{align*}
\label{e:3-wise-2}
    \Delta <0 & \Longleftarrow 3\vert R \vert +1 - 
 \alpha (4\vert R \vert +2) <0 \\
 & \Longleftrightarrow 
 \alpha > \frac{3 \vert R \vert +1}{4\vert R \vert +2} = \frac{3}{4}- \frac{1}{8 \vert R \vert +4 }. \numberthis{}  
\end{align*}
\par 
Since $\vert R \vert \leq n-2$, the inequality \eqref{e:3-wise-2} is always satisfied if 
\begin{equation*}
    \alpha > \frac{3}{4}- \frac{1}{8 (n-2) +4 }  = f(n). 
\end{equation*} 
\par 
Therefore, 
whenever $\alpha > f(n)$, we have $\Delta <0$ which implies that $r$ cannot be a Kemeny ranking with respect to the 3-wise rule. We conclude that $x$ must be the winner in every 3-wise  median and the proof is complete.  
\end{proof}

\subsection{Setwise Majority criterion and application} 
A weaker property than the Condorcet criterion is the \emph{Majority criterion} which says that if one candidate is ranked first by a majority (more than $50\%$) of votes, then that candidate must win the election. The following simple observation shows that the $k$-wise Kemeny voting scheme satisfies the Majority criterion for every $k \geq 2$. 

\begin{lemma}
\label{l:majority-criterion-k-wise}
Let $x$ be a candidate in an election such that $x$ wins more than $50\%$ of the votes. Then $x$ is the winner in every $k$-wise median ($k \geq 2$).  
\end{lemma}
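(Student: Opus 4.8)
The plan is to argue by contradiction. Suppose $r$ is a $k$-wise median of the election in which $x$ is \emph{not} ranked first, and let $r^*$ be the ranking obtained from $r$ by moving $x$ into the top position while keeping the relative order of all the other candidates unchanged. I would then show that the $k$-wise Kemeny score strictly decreases, i.e. that
\[
\Delta \;:=\; d^k_{KT}(r^*, V) - d^k_{KT}(r, V) \;<\; 0,
\]
which contradicts the optimality of $r$.

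The key observation is a localization of the score difference. For a subset $S \in \Delta^k(C)$ with $x \notin S$, the restrictions $r^*\vert_S$ and $r\vert_S$ coincide, so such $S$ contributes $0$ to $\Delta$. For $S \in \Delta^k(C)$ with $x \in S$ we have $\mathrm{top}_S(r^*) = x$; hence $S$ contributes $0$ to $\Delta$ unless $w := \mathrm{top}_S(r) \neq x$. For such an ``active'' subset $S$, I would carry out a per-vote analysis: writing $m = |V|$ and letting $p$ be the number of votes ranking $x$ first, so $p > m/2$ by hypothesis, note that in each of these $p$ votes $\pi$ one has $\mathrm{top}_S(\pi) = x$, so $r^*$ agrees with $\pi$ on $S$ while $r$ does not, contributing $-1$; in each of the remaining $m - p$ votes the contribution is at most $+1$ (this is an overestimate, since $\mathrm{top}_S(\pi)$ could still equal $x$ there). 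Therefore each active subset $S$ contributes at most $(m-p) - p = m - 2p < 0$ to $\Delta$.

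Finally I would observe that at least one active subset exists: since $x$ is not the winner of $r$, the candidate $w_0$ ranked first in $r$ satisfies $w_0 \neq x$, and the pair $S = \{x, w_0\}$ lies in $\Delta^k(C)$ because $k \geq 2$, with $\mathrm{top}_S(r) = w_0 \neq x$. Summing the contributions of all subsets then gives $\Delta \leq m - 2p < 0$, the desired contradiction; consequently every $k$-wise median ranks $x$ first. There is no substantial obstacle here; the only point requiring care is the per-vote bookkeeping for the subsets containing $x$, together with the (harmless) remark that bounding the contribution of a non-$x$-first vote by $+1$ is already an overestimate, so the argument has slack to spare.
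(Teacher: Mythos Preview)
Your proof is correct and follows essentially the same approach as the paper: argue by contradiction, promote $x$ in a purported non-$x$-winning median, and show the $k$-wise score strictly decreases via a per-subset count exploiting the majority hypothesis. The only cosmetic difference is that the paper swaps $x$ with its immediate predecessor $y$ (so its ``active'' subsets are those containing both $x$ and $y$ and disjoint from the candidates above $y$) rather than moving $x$ all the way to the top as you do; the bookkeeping and conclusion are identical.
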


\begin{proof}
Let $V$ be the voting profile. Suppose on the contrary that there exists a $k$-wise median $\pi \colon L >y>x>R$ where $y$ is a candidate and $L, R $ are (possibly empty) ordered sets of candidates. Consider the ranking $\pi^*\colon L >x>y >R$ and let $S$ be a subset of candidates. Then $S$ can contribute to the score difference $\Delta = d^k_{KT}(\pi^*, V) - d^k_{KT}(\pi, V)$ only if $x,y \in S$ and $L \cap S= \varnothing$. Note that since $k \geq 2$, such a subset $S$ always exists.
Then we have $x>S\cup\{y\}$ in more than  $|V|/2$ votes and thus $y>S\cup \{x\}$ in less than $|V|/2$ votes. Consequently, every subset of candidates $S$  contributes to $\Delta$ by an amount strictly less than  
\[
|V|/2 - |V|/2 = 0. 
\]
Hence, by summation over all subsets $S$ of size at most $k$, we deduce that $\Delta < 0$, which contradicts the hypothesis that $\pi$ is a $k$-wise median.  
\end{proof}

We obtain the following  application. 

\begin{corollary}
    \label{c:majority-criterion-k-wise}
    Let $\pi$ be a vote that appears in more than $50\%$ of the total votes in an election. Then $\pi$ is the unique median ranking of the election with respect to every $k$-wise Kemeny voting scheme for all $k \geq 2$. 
\end{corollary}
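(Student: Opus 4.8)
The plan is to give a direct, subset-by-subset computation of the $k$-wise Kemeny scores, globalizing the exchange argument behind Lemma~\ref{l:majority-criterion-k-wise}. Write $m=|V|$ and let $m_\pi>m/2$ be the number of votes equal to $\pi$. First I would interchange the order of summation in \eqref{e:definition-k-wise-distance}--\eqref{e:3-wise-distance-to-a-set} to obtain, for an arbitrary ranking $\sigma$,
\[
d^k_{KT}(\sigma,V)=\sum_{S\in\Delta^k(C)}\bigl(m-N_S(\mathrm{top}_S(\sigma))\bigr),\qquad N_S(w):=\vert\{v\in V\colon\mathrm{top}_S(v)=w\}\vert .
\]
Only subsets $S$ with $2\le|S|\le k$ are relevant, since for $|S|\le 1$ the corresponding term of \eqref{e:definition-k-wise-distance} vanishes identically.

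The key point is that for every such subset $S$ the element $\mathrm{top}_S(\pi)$ is the \emph{strict} plurality winner of the multiset $\{\mathrm{top}_S(v)\colon v\in V\}$: each of the $m_\pi$ votes equal to $\pi$ contributes $\mathrm{top}_S(\pi)$, so $N_S(\mathrm{top}_S(\pi))\ge m_\pi>m/2$, whence $N_S(w)\le m-N_S(\mathrm{top}_S(\pi))<m/2<N_S(\mathrm{top}_S(\pi))$ for every other $w\in S$. Therefore $m-N_S(\mathrm{top}_S(\sigma))\ge m-N_S(\mathrm{top}_S(\pi))$ for every ranking $\sigma$ and every such $S$, with equality if and only if $\mathrm{top}_S(\sigma)=\mathrm{top}_S(\pi)$.

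Summing these inequalities over all $S\in\Delta^k(C)$ shows simultaneously that $\pi$ minimizes $d^k_{KT}(\cdot,V)$, i.e.\ $\pi$ is a $k$-wise median, and that any $k$-wise median $\sigma$ must satisfy $\mathrm{top}_S(\sigma)=\mathrm{top}_S(\pi)$ for all $S$ with $2\le|S|\le k$. Specializing to subsets $S$ of size $2$ (available because $k\ge2$) forces $\sigma$ and $\pi$ to rank every pair of candidates in the same order, so $\sigma=\pi$, which yields uniqueness. I do not expect a real obstacle here: the only step needing mild care is the equality-case bookkeeping that upgrades ``$\pi$ wins the plurality on every restriction'' to ``$\pi$ is the unique minimizer''. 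Alternatively one could argue by contradiction, taking any median $\sigma\neq\pi$ and moving the first candidate on which $\sigma$ disagrees with $\pi$ up to its $\pi$-position -- a move-to-front generalization of the single adjacent transposition used in Lemma~\ref{l:majority-criterion-k-wise} -- but the global computation above is shorter.
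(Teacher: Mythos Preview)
Your proof is correct and takes a genuinely different route from the paper's. The paper argues inductively: it invokes Lemma~\ref{l:majority-criterion-k-wise} to pin the winner of any $k$-wise median to be $\mathrm{top}_C(\pi)$, then deletes that candidate, observes that the restricted ranking is still a median of the restricted election (where $\pi$ still has strict majority), and iterates. Your argument is instead global and non-inductive: by interchanging the order of summation you reduce the minimization to a separate plurality problem on each $S\in\Delta^k(C)$, and the strict-majority hypothesis forces $\mathrm{top}_S(\pi)$ to be the unique plurality winner on every $S$. This is shorter and more transparent; in particular it yields Lemma~\ref{l:majority-criterion-k-wise} as an immediate special case rather than using it as an input, and the uniqueness falls out of the equality-case analysis on pairs without any inductive bookkeeping. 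The paper's approach, on the other hand, has the virtue of illustrating a reusable reduction principle (stability of medians under deleting a forced top candidate) that may be of independent interest elsewhere in the paper.
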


\begin{proof}
 Fix an integer $k \geq 2$ and a $k$ -wise median $\sigma$ of the election with voting profile $V$. Let us write $\pi \colon x_1 >x_2> \dots >x_n$ where $C=\{x_1, \dots, x_n\}$ is the set of all candidates and let $V_p$ be the election where we eliminate the candidates $x_1, \dots, x_p$ from the list of candidates and from all the votes. 
\par 
Let $\pi_1$, $\sigma_1$ be the induced rankings where we eliminate the candidate $x_1$ from $\pi$ and from $\sigma$ respectively. 
By Lemma~\ref{l:majority-criterion-k-wise}, the candidate $x_1$ is the unique  winner in every median thus in the ranking $\sigma$ in particular. 
Therefore, it is clear from the definition of $d^k_{KT}$ that  every subset $S \in \Delta^k(C)$ containing $x_1$ contributes a constant amount to $d^k_{KT}(r, V)$ for every $k$-wise median $r$ of $V$. We deduce that $\sigma$ is a $k$-wise median of the election $V$ if and only if $\sigma_1$ is a $k$-wise median in the election $V_1$. 
\par 
 Observe that $\pi_1$ occurs in more than $50\%$ of the votes in the election $V_1$ since $\pi$ appears in more than $50\%$ of the votes in the original election $V$. 
 \par 
By repeating the above argument, we infer again from Lemma~\ref{l:majority-criterion-k-wise} that $x_2$, the winner of $\pi_1$, must be the winner of $\sigma_1$, etc. By induction, we obtain  $\sigma=\pi$ and the proof is thus complete. 
\end{proof}

From the above result, we obtain counter-intuitive and extreme situations where in an election, a candidate may be the loser in every $k$-wise  median despite winning almost half of the votes. More surprisingly, this candidate may even lose to another candidate who consistently occupies the last two positions in every vote.  

\begin{example}
\label{ex:l:majority-criterion-k-wise}
Let $m \geq 1$ and $n\geq 0$. Let $V$ be the following voting profile in an election with $n+3$ candidates: 
\begin{align*}
 \pi &: w>z_1>\dots >z_n>x>y & (m+1 \text{ votes})\\
\sigma_1 & : y>A_1>x>w & (1  \text{ vote}) \\ 
    \dots &  &
    \\
\sigma_m & : y>A_m>x>w  &(1 
 \text{ vote}). 
\end{align*}
\par
Here, $A_1$, $\dots$, $A_m$ are arbitrary rankings of $\{z_1, \dots. z_n\}$. 
Then by Corollary~\ref{c:majority-criterion-k-wise}, the ranking $\pi$ is the unique $k$-wise median of the election for every $k \geq 2$. In particular, $y$ loses against $x$ in every median despite the fact that $y$ wins nearly half of the votes while $x$ always finishes among the last two.  
\end{example}

\par 

A generalization and stronger version of the majority criterion is the \emph{mutual 
majority criterion} which says that if there exists a partition $C= I \cup J$ of the set of candidates such that in more than half of the votes, we have $x>y$ for all $x \in I$ and $y \in J$, then the winner of the election must come from the set $I$. It is known that the $2$-wise Kemeny voting scheme satisfies the mutual majority criterion.

\subsection{$3$-wise Condorcet loser criterion and Reversal symmetry}

We know that the $2$-wise Kemeny voting scheme satisfies the Condorcet criterion as well as the \emph{Condorcet loser criterion}. The latter means that if a candidate loses every duel in an election then the candidate cannot be the winner.

 \begin{lemma}
     \label{l:condorcet-loser-3-wise} 
     The $3$-wise Kemeny scheme does not satisfy the Condorcet loser criterion. 
 \end{lemma}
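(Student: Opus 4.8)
The plan is to produce an explicit counterexample, and one already exists with only $n=3$ candidates $C=\{x,y,z\}$. The starting point is a description of $d^3_{KT}$ in this case: $\Delta^3(C)$ consists of the three singletons (contributing nothing), the three pairs (whose contributions add up to the ordinary Kendall--tau distance $d_{KT}$, since for $|C|=3$ that distance just counts pairwise disagreements), and the single triple $C=\{x,y,z\}$, whose $\mathrm{top}_C$ is the overall winner of a ranking. Hence $d^3_{KT}(\pi,\sigma)=d_{KT}(\pi,\sigma)+(1-\delta_{\mathrm{top}_C(\pi),\mathrm{top}_C(\sigma)})$, and therefore, for a profile $V$, $d^3_{KT}(\rho,V)=d_{KT}(\rho,V)+(|V|-w_{\mathrm{top}_C(\rho)})$, where $w_c$ denotes the number of ballots ranking $c$ first. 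In words, on three candidates the $3$-wise rule is the Kemeny rule plus a ``plurality bonus'' $w_c$ for heading the output with $c$, and a candidate that is almost a plurality winner but loses both of its duels by a hair can turn this bonus to its advantage.

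Accordingly I would take $V$ to consist of $a$ copies of $x>z>y$ together with $b$ copies each of $y>z>x$ and $z>y>x$, for integers with $\tfrac{5}{3}b<a<2b$ (e.g.\ $a=7$, $b=4$, a $15$-ballot election). First I would note that $x$ is a Condorcet loser: $x>^v y$ holds only in the $a$ ballots of type $x>z>y$, whereas $y>^v x$ holds in all $2b$ remaining ballots, so $y$ beats $x$ by the strict majority $2b>a$, and symmetrically $z$ beats $x$; note in passing that $z$ is the Condorcet winner, so the same election simultaneously witnesses the failure of the Condorcet criterion for the $3$-wise rule. Then I would simply evaluate $d^3_{KT}(\rho,V)$ for each of the six rankings $\rho\in S(C)$ via the identity above, using $w_x=a$, $w_y=w_z=b$ together with the six (immediate) pairwise tallies.

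This computation yields, for general $a,b$, the values $7b$ for $x>z>y$ and $a+7b,\ 3a+4b,\ 4a+2b,\ 2a+4b,\ 3a+2b$ for $x>y>z,\ y>x>z,\ y>z>x,\ z>x>y,\ z>y>x$ respectively; thus $7b$ is the strict minimum exactly when $a>\tfrac{5}{3}b$ (the binding comparison, against $z>y>x$, the remaining four being weaker), and this is compatible with the Condorcet-loser requirement $a<2b$. Hence $x>z>y$ is the unique $3$-wise median and its winner $x$ is the Condorcet loser, which proves the lemma. I expect no real obstacle beyond this short finite verification; the only points needing care are getting the six pairwise tallies and the six Kendall scores right and checking that the binding inequality $a>\tfrac{5}{3}b$ leaves room strictly below $a=2b$, so that $x$ can be made a genuine strict-majority Condorcet loser while still heading the unique median.
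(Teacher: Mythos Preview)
Your argument is correct. The identity $d^3_{KT}(\rho,V)=d_{KT}(\rho,V)+(|V|-w_{\mathrm{top}_C(\rho)})$ for $|C|=3$ is right, your six scores check out, and the binding comparison $7b<3a+2b$ indeed leaves the nonempty window $\tfrac{5}{3}b<a<2b$ in which $x$ is a strict Condorcet loser yet the unique $3$-wise winner.

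Your route differs from the paper's. The paper exhibits a four-candidate, eleven-ballot profile and simply reports a direct computation showing that the Condorcet loser heads the unique $3$-wise median. Your construction is smaller (three candidates), parametric, and comes with a structural explanation --- the ``plurality bonus'' decomposition makes it transparent why a near-plurality candidate who narrowly loses both duels can still win under $d^3_{KT}$. One thing the paper's four-candidate example buys is reusability: the same profile is recycled in the very next lemma to show that the $3$-wise scheme fails reversal symmetry (the Condorcet loser there also happens to be a majority winner in the mirrored election). As it happens, your three-candidate family would serve that purpose too, since in the reversed profile $x$ is ranked first in $2b>(a+2b)/2$ ballots and hence wins by the Majority criterion; but you did not need this for the lemma at hand.
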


 \begin{proof}
Let $C= \{x,y,z, t\}$ be a set of 4 candidates and consider the voting profile $V$:  
\begin{enumerate}[]
    \item $r_1: z>t>x>y$ (5 votes) 
    \item $r_2: y>t>x>z$ (2 votes)
    \item $r_3: x>y>t>z$ (2 votes) 
    \item $r_4: t>x>y>z$ (2 votes).  
\end{enumerate}
\par A direct computation shows that $z<_{1/2}x,y,t$. Moreover, we have 
$d^3_{KT}(r_1, V)=48$ and 
$r_1$ is the unique $3$-wise Kemeny median of the election. 
It follows that while $z$ is a Condorcet loser, it is the unique winner of the election with respect to the 3-wise Kemeny rule. Consequently, we conclude that the $3$-wise Kemeny scheme does not satisfy the Condorcet loser criterion. 
 \end{proof}

The concrete example constructed in Lemma~\ref{l:condorcet-loser-3-wise} also proves that the $3$-wise Kemeny voting scheme does not satisfy the \emph{reversal symmetry} property which  requires that if a particular candidate is a unique winner in every median, then in the mirrored election where the  preferences in each vote are inverted, the candidate cannot be the winner in any median. 

\begin{lemma}
\label{l:reversal-symmetry-3-wise}
The $3$-wise Kemeny voting scheme does not satisfy the reversal symmetry. 
\end{lemma}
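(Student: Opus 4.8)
The plan is to reuse the very election $V$ from the proof of Lemma~\ref{l:condorcet-loser-3-wise} and to check that the candidate $z$, which is shown there to be the unique $3$-wise winner of $V$, remains a $3$-wise winner after all votes are reversed. Recall that the $3$-wise Kemeny scheme violates reversal symmetry as soon as we exhibit an election in which some candidate $w$ is the unique winner in every $3$-wise median while $w$ is still the winner in some $3$-wise median of the mirrored election; so it suffices to establish this for $w=z$.

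First I would recall that, by Lemma~\ref{l:condorcet-loser-3-wise}, the ranking $r_1\colon z>t>x>y$ is the unique $3$-wise median of $V$, so $z$ is the unique $3$-wise winner of $V$. Then I would write down the mirrored profile $\overline V$, obtained by reversing each of the four vote types of $V$:
\begin{align*}
\overline r_1 &\colon y>x>t>z & (5\text{ votes})\\
\overline r_2 &\colon z>x>t>y & (2\text{ votes})\\
\overline r_3 &\colon z>t>y>x & (2\text{ votes})\\
\overline r_4 &\colon z>y>x>t & (2\text{ votes}).
\end{align*}
In $\overline V$ the candidate $z$ now wins its duel against every other candidate and is ranked first in a plurality ($6$ of the $13$) of the votes, which makes rankings placing $z$ on top the natural minimizers.

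The main step is then a finite computation: evaluate $d^3_{KT}(\rho,\overline V)$ for $\rho$ ranging over the $4!=24$ rankings of $\{x,y,z,t\}$, summing over the $6$ pairs and $4$ triples in $\Delta^3(\{x,y,z,t\})$, and check that the minimum, namely $42$, is attained (in fact uniquely) at $\rho^\star\colon z>y>x>t$. In practice only a few rankings are serious contenders, since the pairwise tallies of $\overline V$ impose the order $z>x>y>t$ on the $2$-wise part; one then separates the contenders using the triple terms $\{x,y,z\},\{x,y,t\},\{x,z,t\},\{y,z,t\}$. Because $\rho^\star$ ranks $z$ first, $z$ is the winner of a (the unique) $3$-wise median of $\overline V$, which contradicts reversal symmetry and proves the lemma.

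There is no conceptual obstacle here, only the bookkeeping of the restrictions $\rho\vert_S$ against the four vote types of $\overline V$. The one point worth flagging is that one must not be tempted to use the reverse $y>x>t>z$ of the original median $r_1$: because the $3$-wise distance is \emph{not} invariant under simultaneous reversal on triples (only on pairs), that ranking has strictly larger $3$-wise score ($48$) in $\overline V$ than $\rho^\star$ does ($42$), and it is exactly this asymmetry that makes $z$ (rather than $y$) the winner of $\overline V$ and thereby yields the counterexample.
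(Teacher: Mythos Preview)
Your approach is correct but takes a more computational route than the paper. You use the same mirrored profile $\overline V$ and then propose an exhaustive check of all $24$ rankings to locate the $3$-wise median; the scores $42$ for $z>y>x>t$ and $48$ for $y>x>t>z$ are right, and this certainly establishes that $z$ wins $\overline V$. The paper, by contrast, avoids all computation: it simply observes that in $\overline V$ the candidate $z$ is ranked \emph{first} in $2+2+2=6$ of the $11$ votes, hence in a strict majority, and then invokes the Majority criterion (Lemma~\ref{l:majority-criterion-k-wise}) to conclude at once that $z$ is the winner in every $3$-wise median of $\overline V$. Your remark that $z$ is top in ``$6$ of the $13$'' votes is a slip (there are $11$ votes, not $13$), and calling this a ``plurality'' undersells it: it is a majority, and that is exactly the hook the paper exploits. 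So both arguments work, but the paper's is a one-line structural observation while yours requires the full $24$-ranking bookkeeping.
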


\begin{proof}
    Consider again the set $C= \{x,y,z, t\}$ of 4 candidates and the following voting profile $V$ with $z$ as the unique winner with respect to the $3$-wise Kemeny voting scheme as in Lemma~\ref{l:condorcet-loser-3-wise}. 
The voting profile $V'$ of the mirrored election is then:  
\par 
\begin{enumerate}[]
    \item $r'_1: y>x>t>z$ (5 votes) 
    \item $r'_2: z>x>t>y$ (2 votes)
    \item $r'_3: z>t>y>x$ (2 votes) 
    \item $r'_4: z>y>x>t$ (2 votes).  
\end{enumerate}
\par 
Since in $V'$ the candidate $z$ wins $2+2+2=6$ votes out of $11$ votes, $z$ is the winner in more than $50\%$ of the votes. Consequently, Lemma~\ref{l:majority-criterion-k-wise} implies that $z$ must be the winner in every $3$-wise median. Hence, the reversal symmetry property is not verified by the voting profile $V$ with respect to the $3$-wise Kemeny voting scheme. The proof is complete. 
\end{proof}

\subsection{$3$-wise Smith criterion and $3$-wise Extended Condorcet criterion}

\begin{definition}
    \label{d:smith-set}
Given an election with the set of candidates $C$, its \emph{Smith set} is defined as  the smallest non-empty subset $S\subset C$ such that every candidate in $S$ is majority-preferred over every candidate in $C\setminus S$. It is clear that such a Smith set is well-defined. 
A voting scheme is said to satisfy the \emph{Smith criterion}  the winner in every consensus of an election belongs to the Smith set of that election \cite{smith}.
\end{definition}
\par 
Observe that by definition, the Smith criterion implies the Condorcet criterion. 
More generally, in \cite{truchon-XCC}, Truchon studied the so-called \emph{Extended Condorcet criterion} which says that if  there is a partition of the set of candidates $C=I \cup J$ such that for any $x$ in $C$ and any $y$ in $J$ the majority prefers $x$ to $y$ in the election, then in every median Kemeny ranking, $x$ must be ranked above $y$.
\par 

For the $3$-wise Kemeny voting scheme, we obtain in this section  the following similar space reduction result which extends notably Theorem~\ref{t:3-wise-condorcet}. 

\begin{theorem}
\label{t:XCC-3-wise-main}
Let $C$ be the set of candidates in an election. Suppose that $C=I \cup J$ is a partition of $C$ such that
\begin{enumerate}[\rm (i)]
    \item 
for all $x\in I$ and $y \in J$, we have $x \geq_{3/4} y$, 
\item $0< |I| \leq \frac{|J|+4}{2}$. 
\end{enumerate} 
\par 
Then in every $3$-wise median, we have $x>y$ for all $x \in I$ and $y \in J$.  
\end{theorem}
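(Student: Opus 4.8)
The plan is to argue by contradiction in the same spirit as the proof of Theorem~\ref{t:3-wise-condorcet}, but working with an entire violating block of candidates rather than a single candidate $x$. Suppose some $3$-wise median $r$ has a pair $x \in I$, $y \in J$ with $y > x$ in $r$. Choose such a pair with $x$ ranked \emph{immediately after} some element of $J$, i.e. pick the lowest-ranked member of $I$ that is beaten by a member of $J$ directly above it; more precisely, I would look at the topmost "bad" position and consider the ranking written as $L > y > M > R$ where $y \in J$, $M$ is the block of candidates strictly between $y$ and the next relevant candidate, and then move the entire sub-block of $I$-candidates sitting below $y$ upward past $y$. The cleanest bookkeeping is probably to take $y \in J$ minimal such that some $x \in I$ with $y>x$ exists, write $r$ as $L > y > K > R$ where $K$ contains all candidates between $y$ and the last position, let $I_0 = K \cap I$ (nonempty by assumption, and $\{x,\dots\}\subset I_0$), and form $r^*$ by pulling every member of $I_0$ up to just above $y$ while preserving their internal order, so $r^*$ reads $L > I_0 > y > (K\setminus I_0) > R$.

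**Key steps.** First I would record which subsets $S\in\Delta^3(C)$ can contribute to $\Delta = d^3_{KT}(r^*,V)-d^3_{KT}(r,V)$: only those meeting both $\{y\}$ and $I_0$, and not meeting $L$. The relevant families are the pairs $\{a,y\}$ with $a\in I_0$, and the triples $\{a,y,z\}$ with $a\in I_0$ and $z \in C\setminus(L\cup\{y\}\cup I_0)$ — that is, $z\in (K\setminus I_0)\cup R$. Second, for each such $a\in I_0$ we have $a\geq_{3/4} y$ by hypothesis (i), so $\{a,y\}$ contributes at most $(1-\tfrac32)v = -\tfrac12 v$; summing over $|I_0|$ pairs gives at most $-\tfrac12 v|I_0|$. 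Third, for a triple $\{a,y,z\}$: if $z\in I$ then $z\geq_{3/4} y$ as well, and together with $a\geq_{3/4} y$ Lemma~\ref{l:transitive} gives that in at least $\tfrac12 v$ votes both $a,z$ beat $y$, so $\mathrm{top}$ is never $y$ there; $\mathrm{top}_S(r^*)\in I_0$ while $\mathrm{top}_S(r)$ could be $y$ only when $z\notin I_0$ — one must handle the case $z\in(K\setminus I_0)\cap I$ carefully: there the triple $\{a,y,z\}$ has $\mathrm{top}_S(r)=y$ (since $y>z$ in $r$) but $\mathrm{top}_S(r^*)=a$ or $z$, and in $\geq\tfrac12 v$ votes $\mathrm{top}$ is neither, giving contribution $\leq -\tfrac12 v$ — roughly $(1-2\cdot\tfrac12)v\cdot(\text{count})$. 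If $z\in J$ then only $a\geq_{3/4} y$ is available and the bound is weaker, at most $(3-4\cdot\tfrac34)v=0$ per such triple, i.e. no help but no harm. Finally, collecting: the "pair" part gives $-\tfrac12 v|I_0|$; the triples with $z\in I\setminus I_0$ give $\leq 0$ each; the triples with $z\in J\cap(K\setminus I_0)\cup(J\cap R)$ give $\leq 0$ each; the triples with $z\in I_0$ contribute $0$ (top is in $I_0$ in both). So $\Delta \leq v\big(-\tfrac12|I_0| + (\text{harmless terms})\big)$, and the whole point of hypothesis (ii), $|I|\le (|J|+4)/2$, is to guarantee the count of genuinely positive contributions (triples $\{a,y,z\}$ with $z\in J$, which I must recount as possibly $+0$ but with a $-$ wherever $a$ or $z$ wins cleanly) cannot outweigh $-\tfrac12 v|I_0|$; I expect the decisive inequality to reduce, after writing $|I_0| = i$ and $|R\cup(K\setminus I_0)\cap J| \le |J|$, to something like $-\tfrac12 i + \tfrac12 i\cdot(\text{number of }J\text{-candidates below }y)\cdot(\text{coefficient }0)$ being manifestly negative — the hypothesis $2|I| \le |J|+4$ should be exactly the bound making $F(\text{size of }I_0)<0$ analogous to $a_0\le 0$ in Theorem~\ref{l:extended-majority}.

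**Main obstacle.** The delicate point is the precise accounting of the triples $\{a,y,z\}$ with $z\in J$ lying \emph{below} $y$ in $r$ but moved to lie below $y$ in $r^*$ too (so their $\mathrm{top}$ behaviour relative to $a$ changes): one must verify these contribute $\le 0$ rather than a positive amount, using only $a\geq_{3/4}y$, and this is where the factor $3-4\alpha$ with $\alpha=3/4$ giving exactly $0$ is used — it is tight, which is why the theorem needs the $3/4$ threshold and not less. The second subtlety is choosing $y$ and the block $I_0$ so that no member of $L$ interferes and so that the sign analysis is uniform; picking $y\in J$ to be the one with the \emph{most} $I$-candidates below it, or alternatively the topmost bad pair, should make the combinatorics of hypothesis (ii) come out right. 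I expect roughly half a page of careful casework on the triples, after which the final inequality $\Delta<0$ follows by the same quadratic-in-$|I_0|$ argument as before, contradicting minimality of $d^3_{KT}(r,V)$ and forcing $x>y$ in every $3$-wise median.
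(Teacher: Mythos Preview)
Your plan contains a concrete error that makes the whole scheme collapse. You write that ``the triples with $z\in I_0$ contribute $0$ (top is in $I_0$ in both)''. This is false: in the original ranking $r$ you have $L>y>I_0>\cdots$, so for $a,a'\in I_0$ the set $S=\{y,a,a'\}$ has $\mathrm{top}_S(r)=y$, not an element of $I_0$. In $r^*$ the top is the higher of $a,a'$. Since we know nothing about the relative order of $a$ and $a'$ in the votes (both lie in $I$), the number of votes with $\mathrm{top}_S(\pi)=\mathrm{top}_S(r^*)$ can be zero, while $y>a,a'$ can hold in up to $m/4$ votes. Hence each such triple may contribute as much as $+m/4$ to $\Delta$, and there are $\binom{|I_0|}{2}$ of them.

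This matters quantitatively. With $i=|I_0|$ and $j=|J_0|$, your corrected bound is
\[
\frac{\Delta}{m}\;\le\;-\frac{i}{2}+\frac{i(i-1)}{8}-\frac{ij}{4}\;=\;\frac{i(i-5-2j)}{8},
\]
so you need $i\le 4+2j$. But your choice of $y$ (the lowest-ranked $J$-candidate with an $I$-candidate below it) gives no control of $j=|J_0|$: take $r$ to be all of $J$ followed by all of $I$, say $|I|=10$, $|J|=16$ (so hypothesis~(ii) holds with equality). Your $y$ is the last $J$-candidate, $i=10$, $j=0$, and the bound reads $\Delta/m\le 10\cdot 5/8>0$ --- no contradiction.

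The paper does the dual move: it picks the \emph{topmost} $J$-candidate $y$, the topmost $I$-candidate $x$ below it (so the block $A$ between them lies entirely in $J$), and moves the \emph{single} candidate $x$ up past $y>A$. Now every pair and every triple whose top changes involves $x$, and the only triples with a possibly positive contribution are $\{x,\cdot,z\}$ with $z\in I\setminus\{x\}$; there are only $(|I|-1)(|A|+1)$ of them, linear in $|I|$, and this is exactly what hypothesis~(ii) controls. Moving a block of $I$-candidates past a single $y$, as you propose, creates $\binom{|I_0|}{2}$ uncontrolled triples --- quadratically many --- and hypothesis~(ii) cannot absorb that. You need to reverse the roles: move one $I$-candidate past many $J$-candidates, not the other way around.
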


\par 
To establish Theorem~\ref{t:XCC-3-wise-main}, 
we shall first prove the following consequence of Theorem~\ref{t:XCC-3-wise-main} which is an 
extension of the Smith criterion for the $3$-wise Kemeny voting scheme with respect to the $3/4$-majority rule. 

\begin{theorem}
\label{t:smith-criterion-main}
Let $C$ be the set of candidates in an election. Suppose that $C=I \cup J$ is a partition of $C$ such that
\begin{enumerate}[\rm (i)]
    \item 
for all $x\in I$ and $y \in J$, we have $x \geq_{3/4} y$, 
\item $0< |I| \leq \frac{|J|+4}{2}$. 
\end{enumerate} 
Then the winner in every median of the election with respect to the $d^3_{KT}$ distance must be a candidate in $I$. 
\end{theorem}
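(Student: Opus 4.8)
The plan is to derive Theorem~\ref{t:smith-criterion-main} as an immediate consequence of Theorem~\ref{t:XCC-3-wise-main}, which is stated right before it in the text. Indeed, suppose $\sigma$ is a $3$-wise median of the election and suppose for contradiction that the winner $w$ of $\sigma$ lies in $J$ rather than in $I$. Since $|I| \geq 1$, pick any $x \in I$. By hypothesis (i) we have $x \geq_{3/4} w$, and by hypothesis (ii) the partition $C = I \cup J$ satisfies the conditions of Theorem~\ref{t:XCC-3-wise-main}. Hence Theorem~\ref{t:XCC-3-wise-main} applies and gives $x > w$ in $\sigma$. But then $w$ cannot be ranked first in $\sigma$, since $x$ is ranked above it — contradicting the assumption that $w$ is the winner of $\sigma$. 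Therefore the winner of every $3$-wise median must belong to $I$.

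The only subtlety to address is the logical ordering: the text announces that Theorem~\ref{t:smith-criterion-main} is proved \emph{first} and then used to establish Theorem~\ref{t:XCC-3-wise-main}, whereas the argument above runs in the opposite direction. So in practice I would not invoke Theorem~\ref{t:XCC-3-wise-main} as a black box; instead I would prove Theorem~\ref{t:smith-criterion-main} directly by the same exchange-argument technique used in the proof of Theorem~\ref{t:3-wise-condorcet}, and the subsequent proof of Theorem~\ref{t:XCC-3-wise-main} would bootstrap from it. Concretely: assume $\sigma \colon L > w > \cdots$ is a $3$-wise median whose winner $w$ belongs to $J$, and let $x \in I$ be the candidate in $I$ that is ranked \emph{lowest} in $\sigma$ among all members of $I$ (so that everything ranked above $x$ and below the top is in $J$, apart from $x$ itself). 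Write $\sigma$ as $P > x > Q$ where $P \subseteq J$ contains $w$. Form $\sigma^*$ by moving $x$ to the very top (or, more economically, swapping $x$ past the block $P$). Then bound the $3$-wise score difference $\Delta = d^3_{KT}(\sigma^*, V) - d^3_{KT}(\sigma, V)$: the contributing subsets are pairs $\{x, p\}$ with $p \in P$ and triples $\{x, p, z\}$ with $p \in P$, $z \in C \setminus \{x,p\}$. Using $x \geq_{3/4} p$ for every $p \in P$ (since $P \subseteq J$), and Lemma~\ref{l:transitive} to control the triples where the third candidate $z$ also lies in $Q$, one gets an estimate of the shape $\Delta/v \leq$ (a quadratic in $|P|$ and the relevant set sizes) that is forced to be negative precisely under the constraint $|I| \leq (|J| + 4)/2$, contradicting minimality of $\sigma$.

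The main obstacle will be the bookkeeping in the triple contributions: one must carefully separate, for each triple $\{x, p, z\}$, the three cases $z \in P$, $z = $ another element moved, and $z \in Q$, and in the last case combine $x \geq_{3/4} p$ with $x \geq_{3/4} z$ (when $z \in J$) or simply with the position of $z$ in $\sigma$ (when $z \in I \cap Q$, which by the minimality choice of $x$ is in fact empty) to show that $x$ becomes the top element of that triple in enough votes. The counting of how many such triples there are — roughly $|P|\cdot n$ of them — is what produces the factor that must be dominated by the constant gap coming from the $3/4$ threshold, and this is exactly where the numerical condition $0 < |I| \leq \frac{|J|+4}{2}$ enters. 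Once the inequality $\Delta < 0$ is established, the contradiction with $\sigma$ being a $3$-wise median completes the proof, and in particular shows $x > w$, so $w \in J$ cannot win.
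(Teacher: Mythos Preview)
Your overall strategy in the direct argument is the right one and matches the paper: take $x \in I$ to be the \emph{highest}-ranked member of $I$ in the median $\sigma$ (your word ``lowest'' is a slip --- your subsequent sentences make clear you want $P \subseteq J$), write $\sigma \colon P > x > Q$ with $P \subseteq J$, form $\sigma^* \colon x > P > Q$ by pushing $x$ past the block $P$, and bound $\Delta = d^3_{KT}(\sigma^*, V) - d^3_{KT}(\sigma, V)$ over pairs $\{x,p\}$ and triples $\{x,p,z\}$.

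The genuine gap is your claim that $I \cap Q$ is empty ``by the minimality choice of $x$.'' This is false: since $x$ is the \emph{highest}-ranked element of $I$, all the remaining $|I|-1$ members of $I$ lie below $x$, i.e.\ $I \cap Q = I \setminus \{x\}$. The triples $\{x,p,z\}$ with $p \in P \subseteq J$ and $z \in I \cap Q$ are precisely the dangerous ones: for such $z$ you have no $3/4$-majority relation between $x$ and $z$ (both lie in $I$), so Lemma~\ref{l:transitive} gives you nothing, and you cannot argue that $x$ tops $\{x,p,z\}$ in a large fraction of votes. In the paper's bookkeeping each such triple may contribute up to $+\tfrac{1}{4}m$ to $\Delta$ (coming from the at most $\tfrac{1}{4}m$ votes where $p>x$), yielding a total positive term of order $\tfrac{1}{4}(|P|+1)\,|I \setminus \{x\}|\,m$. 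This positive contribution must be absorbed by the negative contributions from the pairs $\{x,p\}$ and from the triples whose third element lies in $J$ --- and balancing these terms is exactly where the hypothesis $|I| \leq \tfrac{|J|+4}{2}$ enters. If $I \cap Q$ were really empty, condition~(ii) would never be needed and the result would hold for arbitrary $|I|$, which it does not.

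So your plan has the correct architecture but misidentifies the source of the constraint; once you accept that $I\cap Q$ is nonempty and track the contributions $+\tfrac14 c$ and $+\tfrac14 ac$ (with $a=|P|-1$, $c=|I|-1$) against the negative terms, the inequality $c \leq \tfrac{a+b+3}{2}$ coming from~(ii) closes the argument, exactly as in the paper.
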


\begin{proof} 
Let $V$ be the voting profile of the election and let $m=|V|$. Note that $I\neq \varnothing$ since $|I|>0$ by (ii). Therefore, we can suppose on the contrary that there exists a median $\pi \colon y >A>x>B$  with respect to the distance $d^3_{KT}$ such that 
    \begin{enumerate}[\rm (a)] 
        \item $A \cup\{ y\} \subset J$,
        \item $x \in I$. 
    \end{enumerate}
    \par 
Consider the modified ranking $\pi^* \colon x>y>A>B$ of the ranking $\pi$. We will show that 
\[
\Delta = d^3_{KT}(\pi^*,V) - d^3_{KT}(\pi, V) \leq 0.
\] 
\par 
Indeed, when restricted to subsets of $2$ elements, the only pairs that can contribute to $\Delta$ are $(x,y)$ and $(x,A)$. For the pair $(x,y)$, note that $x\geq_{3/4} y$ and thus $y<_{1/4}x$. Hence, the contribution of $(x,y)$ is at most $\frac{1}{4} m - \frac{3}{4} m = -\frac{1}{2}m$. Similarly, since $A \subset J$, the contribution of each pair $(x,z)$ where $z \in A$ is at most $-\frac{1}{2}m$. 
Consequently, the total contribution of subsets of size $2$ to $\Delta$ is at most
\begin{equation}
    \label{e:smith-3-wise-1} 
    \left( - \frac{1}{2} - \frac{1}{2} |A| 
    \right)m.
\end{equation}
\par 
For subsets of size $3$, observe that only  subsets of the following forms can contribute to $\Delta$: 
\begin{enumerate} [(1)] 
    \item $(x,y,A)$, 
    \item $(x,y,B)$,
    \item $(x,A,A)$,
    \item $(x,A,B)$. 
\end{enumerate}
We shall consider each case separately. 
\par 
\textbf{Case 1:} for every $z \in A$, we have $x \geq_{3/4} z$, $x \geq_{3/4}y$ thus $x>y,z$ in at least $m/2$ votes. Moreover, since  $y >x$ in at most $m/4$ votes, we deduce that the set $\{x,y,z\}$ contributes at most $(m/4 -m/2)=-m/4$ to $\Delta$. Hence, the total contribution, in this case, is bounded from the above by: 
\begin{equation}
    \label{e:smith-3-wise-case-1} 
    -\frac{1}{4}|A|m. 
\end{equation}
\par 
\textbf{Case 2:} by a similar argument, we find that the total contribution, in this case, is at most
\begin{equation}
    \label{e:smith-3-wise-case-2} 
    \left( \frac{1}{4}|B \cap I| - \frac{1}{4} |B \cap J| \right) m 
\end{equation}
since for each $z \in B \cap I$, we have $y>x$ in at most $m/4$ votes thus the term 
$ \frac{1}{4}|B \cap I|m$ and for each $z \in B \cap J$, we have $x \geq_{3/4}y$, $x \geq_{3/4}z$ thus $x>y,z$ in at least $m/2$ votes (which are of course different from the votes in which $y>x$ which are at most $m/2$ in number) whence the term 
$$ \frac{1}{4}|B \cap J|m - \frac{1}{2} |B \cap J|m= - \frac{1}{4} |B \cap J|m. 
$$
\par 
\textbf{Case 3:} similarly, the total contribution, in this case, is at most
\begin{equation}
    \label{e:smith-3-wise-case-3} 
    \left( \frac{1}{4}|(A,A)| - \frac{1}{2} |(A,A)| \right) m= - \frac{1}{4} |(A,A)|m.  
\end{equation}
\par 
\textbf{Case 4:} the total contribution is bounded by 
\begin{equation}
    \label{e:smith-3-wise-case-4} 
    \left(   \frac{1}{4} |(A,B \cap I)| - \frac{1}{4}|(A,B \cap J)| \right) m.  
\end{equation}
\par 
Let $a = |A|$, $b=|J \setminus (A \cup \{y\})|= |B \cap J|$, and $c=|I \setminus \{x\}|= |B \cap I|$ then 
\begin{align*}
   & |(A,A)|  = \frac{a(a-1)}{2}, \quad |(A, B\cap J)|= ab, \quad  |(A, B\cap I)|=ac,\\
&|I| = c+1, \quad |J|= |J \setminus (A \cup \{y\})| + |A|+ 1 = a+b+1. 
\end{align*}
\par 
To summarize, we obtain from the bounds \eqref{e:smith-3-wise-1}, \eqref{e:smith-3-wise-case-1}, \eqref{e:smith-3-wise-case-2}, \eqref{e:smith-3-wise-case-3}, and  \eqref{e:smith-3-wise-case-4} the following estimation:  
\begin{align}
\label{e:smith-3-wise-main}
    \frac{\Delta}{m} & \leq -\frac{1}{2} - \frac{1}{2}a -  \frac{1}{4}a + \frac{1}{4}c  - \frac{1}{4}b - \frac{1}{4}\frac{a(a-1)}{2} - \frac{1}{4} ab + \frac{1}{4}ac. 
\end{align}
\par 
Since $a,b,c \in  \N$ and $c+1 = |I| \leq \frac{|J|+4}{2} = \frac{a+b+5}{2}$ thus $c \leq \frac{a+b+3}{2}$ by hypothesis, we deduce from \eqref{e:smith-3-wise-main} that: 
\begin{align*}
     \frac{\Delta}{m} & \leq  -\frac{1}{2} - \frac{1}{2}a -  \frac{1}{4}a + \frac{1}{4}c  - \frac{1}{4}b - \frac{1}{4}\frac{a(a-1)}{2} - \frac{1}{4} ab + \frac{1}{4}ac\\
     & =   -\frac{1}{2} - \frac{5}{8}a  -\frac{1}{8}a^2  - \frac{1}{4}b  - \frac{1}{4} ab +  \frac{1}{4}c + \frac{1}{4}ac\\
     & \leq  -\frac{1}{2} - \frac{5}{8}a  -\frac{1}{8}a^2  - \frac{1}{4}b  - \frac{1}{4} ab +  \frac{1}{4}\frac{a+b+3}{2} + \frac{1}{4}a \frac{a+b+3}{2}\\
     & =  -\frac{1}{8}  -\frac{1}{8}a  -\frac{1}{8}b - \frac{1}{8}ab
     \\
     & < 0. 
\end{align*}
\par 
Therefore, $ d^3_{KT}(\pi^*,V) - d^3_{KT}(\pi, V)= \Delta <0$ and we obtain a contradiction since $\pi$ is a median by hypothesis. The proof is thus complete. 
\end{proof}

We are now in position to prove Theorem~\ref{t:XCC-3-wise-main}. 

\begin{proof}[Proof of Theorem~\ref{t:XCC-3-wise-main}]
 We suppose on the contrary that there exists a median $\pi$ with respect to the distance $d^3_{KT}$ which does not satisfy the conclusion of the theorem. As in the proof of Theorem~\ref{t:smith-criterion-main}, note that $I\neq \varnothing$ by (ii). Hence, we can write $\pi \colon K > y > A > x >B$ where  
\begin{enumerate}[\rm (a)] 
\item $K \subset I$  
    \item $A \cup\{ y \} \subset J$, 
    \item $x \in I$. 
\end{enumerate}
\par 
In other words, we choose $y  \in J$ to be the candidate with the highest rank in $\pi$ and $x \in I$ is the highest-ranked candidate which is ranked after $y$ in $\pi$.  
\par 
By Theorem~\ref{t:smith-criterion-main}, the candidate $y$ cannot be the winner of the median $\pi$. It follows that $K \neq \varnothing$. Let $\pi^* \colon K >x>y>A>B$. 
\par 
Let us consider the induced election $V'$ where we eliminate all the candidates in $K$ from the list of candidates and from all the votes while keeping the relative rankings of other candidates. Let $I'=I\setminus K$ then $I'\cup J$ is a partition of the set of candidates of the induced election $V'$ such that $z \geq_{3/4} t$ for all $z\in I'$ and $t \in J$ and 
\[
0< |\{x\}| \leq |I'| = |I|- |K| \leq |I| \leq \frac{|J|+4}{2}.
\]
\par 
Consequently, we infer from the proof of  Theorem~\ref{t:smith-criterion-main} that 
the rankings $\sigma \colon y>A>x>B$ and $\sigma^* \colon x>y>A>B$ satisfy 
\begin{equation}
    \label{e:t:XCC-3-wise-main-1} 
\delta_{KT}^3(\sigma^*, V') - \delta_{KT}^3(\sigma, V') <0.  
\end{equation}
\par 
We deduce from the definitions of $\sigma$, $\sigma^*$ and the relation \eqref{e:t:XCC-3-wise-main-1}   that 
\begin{align*}
    \delta_{KT}^3(\pi^*, V) - \delta_{KT}^3(\sigma, V) & = \delta_{KT}^3(\sigma^*, V') - \delta_{KT}^3(\sigma, V')  <0
\end{align*}
\par 
It follows that $\pi^*$ cannot be a median of the original election $V$ and we obtain a contradiction. The proof is thus complete. 
\end{proof}

\subsection{Smith-independence of irrelevant alternatives} 
 The well-known independence of irrelevant alternatives property (IIA) 
 requires that the relative ranking between $x$ and $y$ in every median should depend only on the relative rankings between $x$ and $y$ in every vote. The Arrow impossibility theorem (cf. \cite{arrow}) tells us that every voting scheme satisfying the IIA property, universality (uniqueness of the complete median ranking), and unanimity must be a dictatorship.  
 \par 
 The Smith-IIA property, or the Independence of Smith-dominated alternatives (ISDA), is a weaker and more reasonable voting criterion requiring that removing a candidate who is not a member of the Smith set (cf. Definition~\ref{d:smith-set}) will not change the winner of the election. 
\par 
The Smith-IIA property is known to hold true for the classical Kemeny rule and it clearly implies the Smith criterion (cf. Definition~\ref{d:smith-set}), Condorcet criterion, and the mutual majority criterion.  
Generalizing the notion of Smith sets, we define the $\alpha$-Smith sets as follows. 
\begin{definition}
\label{d:3/4-smith-set}
Let $\alpha \in [0,1]$. The \emph{$\alpha$-Smith set} of  an election over the set of candidates $C$
is the smallest non-empty subset $S\subset C$ such that for all $x \in S$ and $y \in C\setminus S$, we have $x \geq_{3/4} y$. \par   
A voting scheme satisfies the \emph{$\alpha$-Smith IIA criterion} if the winner in every consensus of an election belongs to the $\alpha$-Smith set even if we remove one or several candidates outside of the $\alpha$-Smith set. 
\end{definition}

\par 
We have the following simple observations. 
\begin{lemma}
\label{l:unique-alpha-smith-set}
Let $\alpha \in[0,1]$. Then the $\alpha$-Smith set $S_{\alpha}$ of an election $V$ over a set of candidates $C$ is unique. Moreover, $S_{\alpha}$ is the intersection of all subsets $S \subset C$ such that for all $x \in S$ and $y \in C\setminus S$, we have $x\geq_\alpha y$. 
\end{lemma}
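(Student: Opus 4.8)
The plan is to identify $S_\alpha$ with the least element, for inclusion, of a suitable intersection-closed family of subsets. Call a non-empty subset $S \subseteq C$ \emph{$\alpha$-dominant} if $x \geq_\alpha y$ for every $x \in S$ and every $y \in C \setminus S$, and let $\FF_\alpha$ denote the collection of all $\alpha$-dominant subsets of $C$. By Definition~\ref{d:3/4-smith-set}, $S_\alpha$ is a member of $\FF_\alpha$ of minimal cardinality, and $\FF_\alpha \neq \varnothing$ since $C$ itself is (vacuously) $\alpha$-dominant. Thus it suffices to show that $\FF_\alpha$ has a least element for inclusion and that this element equals $\bigcap_{S \in \FF_\alpha} S$; both conclusions of the lemma follow at once, the intersection being over a finite index set because $C$ is finite.

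The key step I would carry out is the closure property: if $S, T \in \FF_\alpha$ then $S \cap T \in \FF_\alpha$. The domination condition is routine: for $x \in S \cap T$ and $y \in C \setminus (S \cap T)$ we have $y \notin S$ or $y \notin T$, and correspondingly $x \geq_\alpha y$ because $x$ lies in the relevant $\alpha$-dominant set. The substantive point is that $S \cap T$ is non-empty. If $S \cap T = \varnothing$, pick $x \in S$ and $y \in T$; then $y \in C \setminus S$ yields $x \geq_\alpha y$, and $x \in C \setminus T$ yields $y \geq_\alpha x$, so $x > y$ in at least $\alpha|V|$ votes and $y > x$ in at least $\alpha|V|$ votes. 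Since the votes witnessing $x > y$ and those witnessing $y > x$ are disjoint and together exhaust $V$, this forces $2\alpha|V| \leq |V|$, which is impossible for $\alpha > \tfrac12$ (hence in particular for the range $\alpha \geq 3/4$ in which the $\alpha$-Smith--IIA criterion is applied). Therefore $S \cap T \neq \varnothing$ and the closure property holds.

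Granting closure, set $S_\alpha := \bigcap_{S \in \FF_\alpha} S$. Because $\FF_\alpha$ is finite and stable under pairwise intersection, a short induction gives $S_\alpha \in \FF_\alpha$, and by construction $S_\alpha \subseteq S$ for every $S \in \FF_\alpha$. Hence $S_\alpha$ is the unique member of $\FF_\alpha$ of minimal cardinality — this is precisely the uniqueness of the $\alpha$-Smith set — and it is exactly the asserted intersection of all $\alpha$-dominant subsets.

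The only genuine obstacle is the non-emptiness of $S \cap T$ in the closure step; once the counting inequality $2\alpha|V| \leq |V|$ is extracted and contradicted, everything else is formal bookkeeping. An alternative but essentially equivalent route would be to show directly that any two $\alpha$-dominant sets must both contain every ``top'' candidate and hence cannot be disjoint, but routing the argument through closure under intersection keeps the proof shortest.
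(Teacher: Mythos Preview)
Your approach is the same as the paper's: show that the family of ``$\alpha$-dominant'' sets is closed under intersection, and identify $S_\alpha$ with the intersection of all of them. The paper's proof, however, consists of a single sentence verifying only the domination clause for $S\cap S'$ and says nothing about non-emptiness; you are more careful in singling this out as the substantive step. Your counting argument for non-emptiness is correct, and your caveat is well placed: the inequality $2\alpha|V|\le |V|$ is only contradictory for $\alpha>\tfrac12$, and indeed for $\alpha\le\tfrac12$ the lemma as stated can fail (two candidates splitting the votes evenly give two disjoint singleton $\alpha$-dominant sets). So your restriction to $\alpha>\tfrac12$ is not a defect of your argument but a genuine limitation of the statement, which the paper's proof glosses over; since every application in the paper uses $\alpha\ge 3/4$, this does no harm downstream.
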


\begin{proof}
Let $S, S'\subset C$ be such that  for all $x \in S$ and $y \in C\setminus S$,  $x\geq_\alpha y$ and such that for all $x \in S'$ and $y \in C\setminus S'$,  $x\geq_\alpha y$. Then it suffices to note that $x\geq_\alpha y$ for all 
$x \in S \cap S'$ and $y \in C\setminus (S\cap S')$.   
\end{proof}

\begin{lemma}
\label{l:alpha- beta-smith}
 Let $0 \leq \alpha\leq \beta \leq 1$.   If a voting scheme satisfies the $\alpha$-Smith IIA criterion then it also satisfies the $\beta$-Smith IIA criterion. 
\end{lemma}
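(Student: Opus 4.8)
The plan is to deduce the $\beta$-Smith IIA property from the $\alpha$-Smith IIA property by a simple monotonicity argument applied directly to the Smith sets. The first observation I would record is that the relation $\geq_s$ is monotone in the threshold: if $\alpha \leq \beta$ and $x \geq_\beta y$, then a candidate ranked before another in at least $\beta|V|$ votes is \emph{a fortiori} ranked before it in at least $\alpha|V|$ votes, so $x \geq_\alpha y$. Consequently, any subset $S \subseteq C$ satisfying ``$x \geq_\beta y$ for all $x \in S$, $y \in C \setminus S$'' automatically satisfies ``$x \geq_\alpha y$ for all $x \in S$, $y \in C \setminus S$''.

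Next I would invoke Lemma~\ref{l:unique-alpha-smith-set}, according to which $S_\alpha$ and $S_\beta$ are respectively the intersections of the families of subsets of $C$ enjoying the ``$\geq_\alpha$-domination'' and ``$\geq_\beta$-domination'' property. By the previous paragraph the $\beta$-family is contained in the $\alpha$-family, hence the intersection over the smaller family is larger: $S_\alpha \subseteq S_\beta$. (So a larger threshold yields a larger Smith set.)

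Finally I would combine this inclusion with the hypothesis. Given an election over $C$ together with a deletion of one or several candidates lying outside $S_\beta$: since $C \setminus S_\beta \subseteq C \setminus S_\alpha$, every deleted candidate also lies outside $S_\alpha$, so this deletion is an instance of the deletions permitted in the definition of the $\alpha$-Smith IIA criterion. By hypothesis the winner of every consensus of the resulting election belongs to $S_\alpha$, and therefore to $S_\beta \supseteq S_\alpha$; taking the empty deletion covers the plain winner as well. This is precisely the assertion of the $\beta$-Smith IIA criterion, so the proof concludes.

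There is no deep obstacle here; the only point requiring care is orienting the inclusion $S_\alpha \subseteq S_\beta$ correctly and checking that ``remove candidates outside $S_\beta$'' is genuinely a special case of ``remove candidates outside $S_\alpha$'', which is exactly what that inclusion supplies.
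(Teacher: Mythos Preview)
Your proof is correct and follows essentially the same route as the paper: establish the inclusion $S_\alpha \subseteq S_\beta$, then observe that removing candidates outside $S_\beta$ is a special case of removing candidates outside $S_\alpha$, so the $\alpha$-Smith IIA hypothesis forces the winner into $S_\alpha \subseteq S_\beta$. You supply more justification for the inclusion (via monotonicity of $\geq_s$ and the intersection characterization of Lemma~\ref{l:unique-alpha-smith-set}) than the paper, which simply asserts it, but the argument is otherwise identical.
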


\begin{proof} 
    Let $S_\alpha$ and $S_\beta$ be the $\alpha$-Smith set and the $\beta$-Smith set of the election. Then by definition, we clearly have  $S_\alpha \subset S_\beta$ (cf. Lemma~\ref{l:unique-alpha-smith-set}). Suppose 
   that a voting scheme satisfies the $\alpha$-Smith IIA criterion. Thus,  
    if we remove one or several candidates outside of  $S_\alpha$, the winner of the resulting election still belongs to $S_{\alpha}$. Since $S_\alpha \subset S_\beta$, the winner stays in  $S_{\alpha}$ and thus to $S_\beta$ if we remove one or several candidates outside of  $S_\beta$. Hence, the election also satisfies the $\beta$-Smith IIA criterion. 
\end{proof}
\par 
While the $3$-wise Kemeny voting scheme satisfies the $3/4$-Smith criterion (Theorem~\ref{t:smith-criterion-main}), 
the following example shows that the $3$-wise Kemeny voting scheme does not satisfy the $3/4$-Smith-IIA property and thus fails the Smith-IIA property in particular (cf. Lemma~\ref{l:alpha- beta-smith} as $3/4>1/2$).   

\begin{example} 
\label{ex:3/4-smith-iia}
 Let $A$, $B$ denote the blocks 
 $x_3>x_4>x_5$ and $x_6>x_7>x_8$  respectively. 
Let us consider the following voting profile $V$ consisting of $4$ votes over $8$ candidates $x_1, \dots, x_8$:  
 \begin{enumerate}[]
    \item $r_1: B>x_2>x_1> A$ \quad ($1$ vote) 
    \item $r_2: x_1>A>x_2>B$ \quad ($1$ vote) 
    \item 
   $r_3: A> x_2>x_1>B$ \quad ($1$ vote)
    \item $r_4: x_2>x_1>A>B$ \quad ($1$ vote) 
\end{enumerate}
\par 
An exhaustive computation shows that the only $3$-wise median of the election $V$ is $\pi^*\colon x_1>A>x_2>B$ whose $3$-wise distance to $V$ is 114. In particular, $x_1$ is the unique winner of the election. 
\par 
Note that 
$x_2 \geq_{3/4}x_1$ and $x_1\geq_{3/4}x_i$ for all $i=3,4,\dots,8$. Moreover, for every $i =3,4,5$, the candidate $x_2$ is ranked before $x_i$ in exactly half of the votes. 
Consequently, it is not hard to see that the $3/4$-Smith set of the election $V$ is $S=\{x_1,x_2,x_3,x_4,x_5\}$. 
\par 
However, if we remove the candidates $x_6,x_7,x_8$, we obtain the following election: 
 \begin{enumerate}[]
    \item $r_1': x_2>x_1> A$ \quad ($1$ vote) 
    \item $r_2': x_1>A>x_2 $ \quad ($1$ vote) 
    \item 
   $r_3': A> x_2>x_1$ \quad ($1$ vote)
    \item $r_4': x_2>x_1>A$ \quad ($1$ vote) 
\end{enumerate}
whose unique $3$-wise Kemeny median is $\sigma^* \colon x_2>x_1>x_3>x_4>x_5$ and thus we obtain a new unique $3$-wise winner $x_2 \neq x_1$.   We conclude that the $3/4$-Smith-IIA property fails for the election $V$ under the $3$-wise Kemeny voting scheme. 
\end{example}

\section{Unanimity and the unique winner property} 

The original Unanimity Theorem \cite[Proposition~5]{setwise} guarantees the relative ordering of a pair of candidates in every final $k$-wise ranking if all votes agree on the same preference of that pair. However, given $k \geq 2$, the Unanimity Theorem alone does not allow us to arrive at the same conclusion whenever there exist two votes which have different preferences on a specific pair of candidates $(x,y)$, even if $x >y$ in virtually all votes.  
\par 
In the case $k=3$, we shall establish  Theorem~\ref{t:3-wise-unanimity-general} in the next section to solve the above issue quantitatively and prove the possibility to manipulate the ordering of a pair of candidates in every election using the 3-wise Kemeny voting scheme together with a simple strategy to achieve the desired manipulation. More specifically, suppose that in an election using the 3-wise Kemeny rule, we want to manipulate the relative  preference of two candidates $x,y$ so that $x>y$ in every final ranking, it suffices to make sure that $x>y$ in $g(n)\times 100\%$ of the votes. Here, $g \colon \N \to ]0,1[$  is an increasing function defined in Theorem~\ref{t:3-wise-unanimity-general}.

\subsection{$3$-wise Extended Pareto efficiency (Unanimity)} 

In the vein of the Extended Always theorem (Theorem~\ref{t:extended-always}), 
we have the following generalization of the unanimity property \cite[Proposition 5]{setwise} also known as the Pareto efficiency for the $k$-wise Kemeny rule when $k=3$:  

\begin{theorem} 
\label{t:3-wise-unanimity-general}
Let $x,y$ be candidates in an election with $n \geq 2$ candidates. Suppose that $x \geq_\alpha y$ for some $\alpha \in [0,1] $ such that 
\begin{equation*}
 \alpha > g(n)= 1 -  \frac{1}{n^2-3n+4}. 
\end{equation*}
 \par 
 Then $x>y$ in every median with respect to the $3$-wise Kemeny rule. 
\end{theorem}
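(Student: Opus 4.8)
The plan is to argue by contradiction along the same lines as the proof of Theorem~\ref{t:extended-always} (the $2$-wise Extended Always theorem), but carrying out the accounting with the $3$-wise Kendall-tau distance $d^3_{KT}$ instead of $d_{KT}$. Suppose there is a $3$-wise median $r$ of the election in which $y>x$. Let $K$ be the set of candidates ranked strictly between $x$ and $y$ in $r$, so that $r$ has the shape $L>y>K>x>R$ for some (possibly empty) ordered blocks $L,R$; set $k=|K|$, noting $k\le n-2$. I would then pass to the modified ranking $r^*$ obtained from $r$ by swapping the whole block $y>K>x$ into $x>K>y$ (keeping the internal order of $K$ fixed, and keeping $L,R$ fixed), and show $\Delta = d^3_{KT}(r^*,V)-d^3_{KT}(r,V)<0$.

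The key step is to enumerate exactly which subsets $S\in\Delta^3(C)$ can contribute to $\Delta$. Since only the relative order of $x$, $y$ and the elements of $K$ changes, a subset contributes only if it is one of: the pair $\{x,y\}$; a pair $\{x,z\}$ or $\{y,z\}$ with $z\in K$; a triple $\{x,y,z\}$ with $z\in C\setminus\{x,y\}$; a triple $\{x,z,w\}$ or $\{y,z,w\}$ with $z,w\in K$; or a triple $\{x,y,z\}$ already counted. For the size-$2$ part I would reuse verbatim the estimates from Theorem~\ref{t:extended-always}: using $x\ge_\alpha y$ and the transitivity-in-a-ranking argument there, one gets $x_z+y_z\ge\alpha m$ for each $z\in K$ (where $m=|V|$, $x_z=|\{v:x>^v z\}|$, $y_z=|\{v:z>^v y\}|$), and the size-$2$ contribution is bounded by $(1-2\alpha)m+2\sum_{z\in K}(m-(x_z+y_z))\le (1-2\alpha+k(1-\alpha))m$, exactly as in \eqref{e:t:extended-always}. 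For the new size-$3$ terms I would bound each family: for triples $\{x,y,z\}$ with $z\in C\setminus\{x,y\}$, the contribution is nonpositive when $z\in L$ (the top is $z$ in both $r$ and $r^*$) and when $z$ is "below" (i.e. in $K\cup R\cup\{$the part after $x\}$) it contributes at most $m$ in the worst case — but actually, combining with the pairwise bounds via Lemma~\ref{l:transitive} (from $x\ge_\alpha y$ and $x\ge_\alpha z$ we get $x>y,z$ in at least $(2\alpha-1)m$ votes) shows each such triple contributes at most $(3-4\alpha)m$, and there are at most $n-2$ of them. For triples inside $K$ together with $x$ or $y$, the top changes only when $z$ or $w$ "beats" $x$ resp. is "beaten by" $y$ in few votes, and these are again controlled by $x_z, y_z$; these are the genuinely new terms, roughly $\binom{k}{2}$ of them from each of the two families.

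Collecting everything, I expect an inequality of the form $\Delta/m \le P(k,\alpha)$ where $P$ is a polynomial in $k$ of degree $2$ (the $\binom{k}{2}$-sized families of triples contribute the quadratic term), linear in $\alpha$, with the coefficient structure forcing $\Delta<0$ precisely when $\alpha > 1-\frac{1}{k^2-3k+4}$; substituting the worst case $k=n-2$ gives $\alpha>1-\frac{1}{(n-2)^2-3(n-2)+4}=1-\frac{1}{n^2-7n+14}$ — so the bookkeeping must be arranged so that the quadratic in $k$ that actually appears is $k^2+k+2$ evaluated suitably, yielding $g(n)=1-\frac1{n^2-3n+4}$ (equivalently, the worst case occurs at $k=n-2$ and the relevant quadratic in $n$ is $n^2-3n+4$). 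Then since $P$ has negative leading coefficient in $k$ once $\alpha>g(n)\ge g(3)$, monotonicity in $k$ lets me reduce to $k=n-2$, giving the contradiction; the case $y>_\alpha x$ does not arise since the hypothesis is one-sided, but the argument as stated already yields $x>y$ in every $3$-wise median. The main obstacle is the careful sign-tracking in the size-$3$ accounting — in particular making sure the triples $\{x,y,z\}$ with $z$ ranked between the new positions of $x$ and $y$ (i.e. $z\in K$) are not double-counted against the pair bounds, and that the "at least $(2\alpha-1)m$ simultaneous wins" from Lemma~\ref{l:transitive} are disjoint from the $\le(1-\alpha)m$ exceptional votes used in the pairwise step — so I would organize the proof by fixing, for each relevant vote $v$, its contribution to each subset and summing, exactly as in the proof of Theorem~\ref{t:3-wise-condorcet}, rather than bounding family-by-family with possibly overlapping vote sets.
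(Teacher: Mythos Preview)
Your proposal has a genuine gap: you invoke $x\ge_\alpha z$ for candidates $z\in K\cup R$ (in order to apply Lemma~\ref{l:transitive} and get the $(3-4\alpha)m$ bound on triples $\{x,y,z\}$), but the hypothesis of Theorem~\ref{t:3-wise-unanimity-general} is only $x\ge_\alpha y$ for the single pair $(x,y)$. You are importing the hypothesis of Theorem~\ref{t:3-wise-condorcet} (where $x\ge_\alpha z$ for \emph{all} $z$) into a setting where it is not available. Without it, you have no individual control on $x_z$ or $y_z$ (only on their sum, via the transitivity argument of Theorem~\ref{t:extended-always}), and a naive bound on each triple $\{x,z,t\}$ or $\{y,z,t\}$ gives $m$ rather than anything proportional to $(1-\alpha)m$; the resulting inequality does not close. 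Your own numerology confirms this: you arrive at the wrong quadratic $n^2-7n+14$ and then hope the bookkeeping will somehow produce $n^2-3n+4$.

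The missing idea, which is the heart of the paper's proof, is a \emph{pairing} between the two families of triples containing exactly one of $x,y$. For any vote $\pi$ with $x>^{\pi}y$ and any triple $S=\{x,z,t\}$ whose contribution to $\Delta$ (for that vote) is $+1$, one checks that the companion triple $T(S)=\{y,z,t\}$ contributes $-1$ for the same vote; symmetrically in the other direction. Thus the combined contribution of both families vanishes on the $\ge\alpha m$ votes with $x>y$, and only the $\le(1-\alpha)m$ exceptional votes survive, yielding the bound $(n-2)(n-3)(1-\alpha)m$ that produces the correct denominator $n^2-3n+4$. Your enumeration also omits triples with one element in $K$ and one in $R$, but this is a secondary issue compared to the missing cancellation argument.
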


\par 
To illustrate, the following table gives us several values of $g(n)$: 
\begin{center}
\begin{tabular}{ | M{2.5em} |  M{1em} | M{1cm}  | M{1cm}  | M{1cm} | M{1cm} | M{1cm}| M{1cm}| M{1cm}| } 
  \hline
  $n$ & 2 & 3 & 4 & 5 & 6 & 8 & 10 & 12 \\ 
  \hline
  $g(n)$ & 0.5 & 0.75 & 0.875 & 0.929  &   0.955 &   0.977 &   0.987 & 0.991 \\ 
  \hline
\end{tabular}
\end{center}

\begin{proof}
Let $C$ be the set of all candidates then $\vert C \vert = n$. 
Suppose on the contrary that there exists a Kemeny ranking $r$ of the election with respect to the distance $d^3_{KT}$ in which $x <y$. Let $L, K, R$ be respectively the ordered sets of all candidates ranked before $y$, between $x$ and $y$, and after $x$ in the ranking $r$. In other words, the ranking $r$ can be written as  $L > y > K >x >R$. 
\par 
Let $V$ be the multiset of all votes and $v= \vert V \vert$. 
We consider the ranking $r^*$ obtained from $r$ by simply exchanging the positions of the candidates $x$ and $y$ while keeping the positions  of all other candidates. Hence, 
the ranking $r^*$ is $L > x > K >y >R$. 
In the sequel, we are going to show that $\Delta = d^3_{KT}(r^*, V)- d^3_{KT}(r,V) <0$ to obtain a contradiction. 
\par 
By the inequality \eqref{e:t:extended-always} in the proof of Theorem~\ref{t:extended-always}, the contribution to the score difference $\Delta$ between $r^*$ and $r$, when restricted to $\Delta^2(C)$, namely,  subsets of $C$ of cardinality at most $2$,  is bounded by: 
\begin{equation}
\label{e:extended-3-wise-always-1}
    (1-2\alpha + (n-2)(1-\alpha))v. 
\end{equation}
\par 
We next consider subsets $S$ of $C$ of cardinality $3$. Since $r$ and $r^*$ are identical when restricted to $C \setminus \{x,y\}$, the restriction to $S$ can contribute to $\Delta$ only if $\mathrm{top}_S(r) \neq \mathrm{top}_S(r)$, which can happen only if $S$ contains $x$ or $y$, i.e., $S \cap \{x,y\} \neq \varnothing$. We thus distinguish three possibilities. 
\par 
 \underline{\textbf{Case 1:}} $x, y \in S$. Then we can write $S= \{x,y,z\}$ for some $z \in C \setminus \{x,y\} = L \cup K \cup R$. If $z \in L$ then $\mathrm{top}_S(r) = \mathrm{top}_S(r^*)= z$ and thus $S$ contributes nothing to $\Delta$. 
If $z \in K \cup R$ then $\mathrm{top}_S(r) = y$ and $\mathrm{top}_S(r^*) = x$. In this case, $S$ also contributes at most $0$ to $\Delta$ when computed with at least $\alpha v$ votes where $x >y$. We conclude that the total contributions to $\Delta$ of the subsets $S$ of the form $\{x,y,z\}$, where $z \in C \setminus \{x,y\}$, is at most 
\begin{equation}
    \label{e:extended-3-wise-always-case-1} 
   ( \vert K \vert + \vert R \vert) (v- \alpha v) \leq (n-2)(1- \alpha)v.  
\end{equation}
\par 
\underline{\textbf{Case 2:}} $x \in S$ and $y \notin S$. Then we can write $S=\{x,z,t\}$. Let $\pi \in V$ be a vote such that $x>y$ in $\pi$ and such that the contribution of $S$ to $\Delta$ when computed with $\pi$ is one. We can suppose without loss of generality that $z>t$ in $\pi$. Then it follows from the definition of $r^*$ that $z,t \notin L$ and $\mathrm{top}_S(\pi) = \mathrm{top}_S(r)=z$
. Consider the set $T(S) = \{y,z,t\}$. Then the  contribution of $T(S)$ to $\Delta$ when computed with $\pi$ is $-1$ since 
$\mathrm{top}_{T(S)}(\pi) = \mathrm{top}_{T(S)}(r^*)=z$ while $\mathrm{top}_{T(S)}(r) = y \neq z$.  Note that $T(S)$ and $S$ are distinct and  uniquely determine each other and the combined contribution of $S$ and $T(S)$ to $\Delta$ when computed with $\pi$ is zero.
\par 
Since there exist at most $(1- \alpha)v$ votes $\pi$ where we do not have $x>y$ and since there are at most $\frac{(n-2)(n-3)}{2}$ subsets $S$ in Case 2, we conclude that the total contribution to $\Delta$ of the subsets $S$ in this case is at most 
\begin{equation}
    \label{e:extended-3-wise-always-case-2} 
    \frac{(n-2)(n-3)}{2} (1-\alpha)v. 
\end{equation}
\par 
\underline{\textbf{Case 3:}} $y \in S$ and $x \notin S$. Then  $S=\{y,z,t\}$ for some $z,t \in C \setminus \{x,y\}$. Fix a vote $\pi \in V$ in which $x>y$ and such that the contribution of $S$ to $\Delta$ when computed with $\pi$ is one. We can suppose  that $z>t$ in $r$ and $r^*$. Observe that $z,t \notin L$, $\mathrm{top}_S(\pi)=\mathrm{top}_S(r)=y$ and 
$\mathrm{top}_S(r^*)=x$. As in Case 2, let $T(S)=\{x,z,t\}$. As $x>y$ in $\pi$, the contribution of $T(S)$ to $\Delta$ when computed with $\pi$ is $-1$ since we have 
$\mathrm{top}_{T(S)}(\pi) =  \mathrm{top}_{T(S)}(r^*)=x$  while $\mathrm{top}_{T(S)}(r) = z \neq x$. 
Therefore, we deduce from the same argument as in Case 2 that the total contribution to $\Delta$ of the subsets $S$ in Case 3 is at most 
\begin{equation}
    \label{e:extended-3-wise-always-case-3} 
    \frac{(n-2)(n-3)}{2} (1-\alpha)v. 
\end{equation}
\par 
To summarize, we conclude from \eqref{e:extended-3-wise-always-1},  \eqref{e:extended-3-wise-always-case-1}, \eqref{e:extended-3-wise-always-case-2}, and  \eqref{e:extended-3-wise-always-case-3} the following estimation on the 3-wise Kemeny  score difference:  
\begin{align*}
    \frac{\Delta}{v} & \leq 1- 2 \alpha + 2(n-2)(1-\alpha) + (n-2)(n-3) (1-\alpha) \\
    & = n^2-3n+3 - \alpha (n^2-3n+4)
\end{align*}
\par 
As $v >0$, we deduce the following implication:
\begin{align*}
  \alpha > \frac{n^2-3n+3}{n^2-3n+4} = g(n)  \Longrightarrow     \Delta <0. 
\end{align*}
\par 
Therefore, 
whenever $\alpha > g(n)$, we have $\Delta <0$ which implies that $r$ cannot be a Kemeny ranking with respect to the 3-wise Kemeny rule. Hence, we must have $x>y$ in every 3-wise Kemeny ranking.  
\end{proof}

\subsection{Optimality}

In the following question, we try to find the best sufficient condition under which the $3$-wise Kemeny voting scheme guarantees the uniqueness of the winner of an election.    
\begin{question}
What is the smallest number $\alpha \in [0,1]$  such that if $x \geq_\alpha y$ for all candidates $y \neq x$ in an election then $x$ is the unique  winner in every median with respect to the 3-wise Kemeny rule? 
\end{question} 
\par 
By Theorem~\ref{t:3-wise-condorcet}, we know that  $\alpha$ exists and must be smaller than or equal to $3/4$. Moreover, we know by \cite[Proposition 3]{setwise} that $\alpha > 1/2$. Hence, $\alpha \in ]1/2, 3/4]$.  
With only three candidates, we have  the following lemma which shows that $\alpha >3/5$.

\begin{lemma}
\label{c:lower-bound-3/5-theorem-5.1}
Let $\alpha \in [0,1]$ be the smallest number such that if $x \geq_\alpha y$ for all candidates $y \neq x$ in an election with three candidates then $x$ is the winner in every median in the 3-wise Kemeny voting scheme. Then $\alpha \in ]3/5, 3/4]$. 
\end{lemma}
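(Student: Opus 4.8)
The plan is to establish the two bounds separately. The upper bound $\alpha \le 3/4$ is immediate from Theorem~\ref{t:3-wise-condorcet}: its concluding sentence (valid for every $n$, in particular for $n = 3$) asserts that if $x \ge_{3/4} y$ for both other candidates $y$ then $x$ is the winner in every $3$-wise median, so $3/4$ lies in the set of admissible thresholds. Since enlarging $\alpha$ only weakens the hypothesis $x \ge_\alpha y$, that set is upward closed in $[0,1]$, and hence its smallest element is at most $3/4$.

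For the lower bound $\alpha > 3/5$, by the same monotonicity it suffices to exhibit a single election on three candidates $\{x,y,z\}$ together with a candidate $x$ such that $x \ge_{3/5} y$ for both other candidates $y$, yet $x$ is not the winner of some $3$-wise Kemeny median. I would take the $5$-vote profile consisting of one vote $x>y>z$, two votes $y>x>z$, and two votes $z>x>y$. A direct count shows that $x$ beats $y$ in exactly $3$ votes and beats $z$ in exactly $3$ votes, so $x \ge_{3/5} y$ and $x \ge_{3/5} z$; incidentally $x$ is even the Condorcet winner here.

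The main step is then to compute $d^3_{KT}(\pi, V)$ for each of the six rankings $\pi$ of $\{x,y,z\}$. For this I would use the elementary identity $d^3_{KT}(\pi,\sigma) = d_{KT}(\pi,\sigma) + \bigl(1 - \delta_{\mathrm{top}(\pi),\,\mathrm{top}(\sigma)}\bigr)$ valid for rankings of a three-element set: among the subsets in $\Delta^3(C)$, singletons and the empty set never contribute, the three pairs together contribute exactly the ordinary Kendall-tau distance, and the only remaining term comes from the full set $\{x,y,z\}$, whose top is the first-ranked candidate. Summing over $V$, it follows that $d^3_{KT}(\pi,V)$ equals the ordinary Kemeny score of $\pi$, plus $|V|$, minus the number of votes whose first candidate equals $\mathrm{top}(\pi)$. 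With the majority tallies ($x$ over $y$, $x$ over $z$ and $y$ over $z$ each in $3$ of the $5$ votes, the reverses in $2$) and the first-place counts ($1$ for $x$ and $2$ for each of $y$ and $z$), one computes the six scores and finds that the minimum $3$-wise score equals $10$, attained by exactly the two rankings $x>y>z$ and $y>x>z$. In particular $y>x>z$ is a $3$-wise median whose winner is $y \ne x$, which is the desired counterexample; hence $3/5$ is not an admissible threshold and $\alpha > 3/5$.

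I expect no genuine obstacle here, only one bookkeeping point that must not be overlooked: $d^3_{KT}$ counts the extra disagreement on the overall top candidate on top of the ordinary pairwise Kendall-tau distance, and it is precisely this term that makes $x>y>z$ and $y>x>z$ tie; omitting it would break the example. It is also worth recording that the construction scales --- replacing each vote by $k$ copies yields a $5k$-vote instance with the identical conclusion --- so the bound does not depend on the particular small number of votes.
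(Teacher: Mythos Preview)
Your proposal is correct and follows essentially the same approach as the paper: both establish $\alpha>3/5$ via an explicit three-candidate counterexample and take $\alpha\le 3/4$ from Theorem~\ref{t:3-wise-condorcet}. The paper uses a slightly larger, $y$--$z$ symmetric $10$-vote profile, whereas your $5$-vote example together with the decomposition $d^3_{KT}(\pi,\sigma)=d_{KT}(\pi,\sigma)+\bigl(1-\delta_{\mathrm{top}(\pi),\,\mathrm{top}(\sigma)}\bigr)$ for $n=3$ makes the score computation a bit more transparent; the two arguments are otherwise interchangeable.
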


\begin{proof} 
Let $C= \{x,y,z\}$ be a set of 3 candidates and consider the following election with voting profile:  
\begin{enumerate}[]
    \item $r_1:  z>x>y$ (4 votes) 
    \item $r_2: y>x>z$ (4 votes)
    \item $r_3: x>y>z$ (1 vote) 
    \item $r_4: x>z>y$ (1 vote).  
\end{enumerate}
\par Then we can check by a direct computation that $x\geq_{3/5}y$ and $x\geq_{3/5}z$. Moreover,  we have:  
\[
d^3_{KT}(r_1,V) = d^3_{KT}(r_{2},V)=d^3_{KT}(r_3,V) = d^3_{KT}(r_4,V) = 21, 
\]
and $r_1, r_2,r_3,r_4$ are all the $3$-wise medians of the election. 
It follows that $x$ is not 
 the unique winner of the election with respect to the 3-wise Kemeny voting scheme. 
\end{proof}

For elections with four  candidates, we can prove the following improved lower bound $5/8$.

\begin{lemma}
\label{c:lower-bound-5/8-theorem-5.1}
Let $\alpha \in [0,1]$ be the smallest number such that if $x \geq_\alpha y$ for all candidates $y \neq x$ in an election with four candidates then $x$ is the winner in every median in the 3-wise Kemeny voting scheme. Then $\alpha \in ]5/8, 3/4]$. 
\end{lemma}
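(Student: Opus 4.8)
The plan is to prove the two bounds separately, the upper one being essentially free. For $\alpha\le 3/4$: by the ``in particular'' clause of Theorem~\ref{t:3-wise-condorcet} (which is valid for every $n\ge 2$ because $f(n)<3/4$ always), if $x\ge_{3/4}y$ for every candidate $y\ne x$ then $x$ is the winner in every $3$-wise median; so the threshold $3/4$ already forces $x$ to win, whence $\alpha\le 3/4$. Everything therefore reduces to proving $\alpha>5/8$, i.e., to exhibiting an election with four candidates in which $x\ge_{5/8}y$ for every $y\ne x$ but some $3$-wise median is \emph{not} headed by $x$; such an election witnesses that the threshold $5/8$ is not sufficient.

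To build one I would imitate the three-candidate example in the proof of Lemma~\ref{c:lower-bound-3/5-theorem-5.1}: a ``balanced'' profile whose bulk keeps $x$ out of first place, together with a few ``tie-breaking'' votes that let $x$-headed rankings match the others in $3$-wise distance. The delicate choice is where to place the fourth candidate $t$. If $t$ is last in every vote then each triple $\{a,b,t\}$ merely re-counts the pairwise comparison of $\{a,b\}$, which only reinforces the bias of the pair-subsets towards the candidate with the strongest head-to-head record --- namely $x$ --- and one checks that $x$ then wins every median. One must instead arrange $t$ so that in many votes $t$, or another non-$x$ candidate, is the local winner of the triples containing $x$. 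Concretely I propose the following profile of eight votes:
\[
z>t>x>y\ (3\text{ votes}),\qquad y>x>z>t\ (3\text{ votes}),\qquad x>y>z>t\ (1\text{ vote}),\qquad x>z>y>t\ (1\text{ vote}).
\]
Here $x>y$, $x>z$ and $x>t$ each hold in exactly five of the eight votes, so $x\ge_{5/8}w$ for every $w\in\{y,z,t\}$.

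It then remains to minimize $d^3_{KT}(\rho,V)$ over all $4!=24$ rankings $\rho$. Since $z$ precedes $t$ in every vote, any ranking that places $t$ before $z$ has $3$-wise distance at least $41$ and may be discarded; a direct evaluation of the remaining twelve rankings shows that the minimum is $35$, attained by the three rankings $x>y>z>t$, $x>z>y>t$ and $y>x>z>t$. As $y>x>z>t$ is a $3$-wise median in which $x$ is not the winner, the threshold $5/8$ fails and $\alpha>5/8$. I expect the only real obstacle to be the design of the profile rather than its verification: the $5/8$-margin must hold simultaneously against all three opponents of $x$, and this is tight, since by Lemma~\ref{l:transitive} the candidate $x$ is forced to be the top of each triple $\{x,a,b\}$ in at least a quarter of the votes, so the pull of those triples away from $x$ is bounded; the remaining freedom in the profile has to be spent precisely so that this pull just cancels the pull of the pair-subsets towards $x$. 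The same tightness is why the four-candidate threshold $5/8$ exceeds the three-candidate threshold $3/5$ by only a little.
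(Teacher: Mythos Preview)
Your proof is correct and follows the same strategy as the paper: the upper bound $\alpha\le 3/4$ comes directly from Theorem~\ref{t:3-wise-condorcet}, and the lower bound is witnessed by an explicit eight-vote profile on four candidates in which the $5/8$-dominant candidate fails to head some $3$-wise median. The paper uses a different profile (three rankings with multiplicities $3,3,2$, where the $5/8$-candidate is $t$ and the two medians $z>t>x>y$ and $t>z>x>y$ have distance $36$), while your $3,3,1,1$ profile with $x$ as the $5/8$-candidate produces three medians at distance $35$, including $y>x>z>t$; both constructions do the job. One cosmetic remark: instead of asserting the numerical bound ``at least $41$'' for the twelve rankings with $t$ before $z$, you could simply invoke the Unanimity property (\cite[Proposition~5]{setwise}), since $z>t$ holds in every vote and hence in every $3$-wise median, which immediately discards those rankings.
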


\begin{proof}
    Let $C=\{x,y,z,t\}$ be a set of $4$ candidates.  Consider the following voting profile $V$: 
    \begin{enumerate}[]
    \item $r_1: z>t>x>y$ (3 votes) 
    \item $r_2: y>x>t>z$ (3 votes)
    \item $r_3: t>z>x>y$ (2 votes) 
\end{enumerate}
\par 
We can verify by a direct computation that $r_1$ and $r_3$ are the only $3$-wise medians of the election $V$ and 
\[
d^3_{KT}(r_1, V) = d^3_{KT}(r_3, V) = 36. 
\]
\par
In particular, $t$ is not the unique winner of the election $V$ while we have $t>_{5/8}x,y,z$ in $V$. The proof is complete. 
\end{proof}

\begin{example}
Let $V$ be a voting profile of an election with $9$ candidates $\{1,2, \dots, 9\}$ as follows: 
\begin{enumerate}[]
    \item $r_1 \colon x_3>x_4>x_1>x_2>x_5>x_6>x_7 >x_8>x_9 $ (15 votes) 
    \item $r_2 \colon x_9>x_8>x_7>x_6>x_5 >x_2>x_1>x_4>x_3$ (15 votes)
    \item $r_3 \colon  x_4>x_3>x_1>x_2 >x_5>x_6>x_7 >x_8>x_9 $ (13 votes)
\end{enumerate}
 \par 
 A direct computation shows that $
 \pi^*=r_1$ is the unique $3$-wise median of the election and $d^3_{KT}(\pi^*,V)= 1904$. Therefore, the candidate 
$x_4$ wins the head-to-head competition over every other candidate by the ratio $28/43=0.6511$ but loses the election to the candidate $x_3$.  
\end{example}

In general, we establish the following estimation which implies a surprising phenomenon: in contrast to the Condorcet criterion for the classical $2$-wise Kemeny scheme, even the $2/3$ majority in every duel is not enough to guarantee that a candidate will win the election according to the $3$-wise Kemeny voting scheme. Consequently, it is harder to win an election with respect to the $3$-wise Kemeny voting scheme than the $2$-wise Kemeny voting scheme.

\begin{theorem}
\label{c:lower-bound-2/3-theorem-5.1}
Let $\alpha \in [0,1]$ be the smallest number such that if $x \geq_\alpha y$ for all candidates $y \neq x$ in an election then $x$ is the winner in every median in the 3-wise Kemeny voting scheme. Then $\alpha \in [2/3, 3/4]$. 
\end{theorem}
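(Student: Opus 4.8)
The upper bound $\alpha\leq 3/4$ is immediate from Theorem~\ref{t:3-wise-condorcet}: if $x\geq_{3/4}y$ for every $y\neq x$, then $x$ is the winner in every $3$-wise median. So the content is the lower bound $\alpha\geq 2/3$, and the plan is to show that for each $\alpha'<2/3$ there is an election with a candidate $x$ satisfying $x\geq_{\alpha'}y$ for all $y\neq x$ that is nevertheless not the winner in some $3$-wise median; concretely, I would turn the $9$-candidate example preceding the statement into an infinite family.

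Here is the construction I propose. Pick integers $a>b\geq 1$ and $m\geq 1$ with $m>\frac{b}{a-b}$ and $\frac{a+b}{2a+b}>\alpha'$; this is possible since $\frac{a+b}{2a+b}\to\frac23$ as $b/a\to 1$ (e.g.\ $b=a-1$, $m=a$, with $a$ large). On the candidate set $C=\{w,x,z_1,\dots,z_m\}$, let $V$ consist of $a$ copies of $r_1\colon w>x>z_1>\dots>z_m$, of $a$ copies of $r_2\colon z_1>\dots>z_m>x>w$, and of $b$ copies of $r_3\colon x>w>z_1>\dots>z_m$. One checks at once that $x>w$ in exactly $a+b$ of the $2a+b$ votes, and $x>z_i$ in exactly $a+b$ votes for every $i$, so $x\geq_{(a+b)/(2a+b)}y$ — hence $x\geq_{\alpha'}y$ — for all $y\neq x$.

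The crux is to prove that $\rho:=r_1$ is the unique $3$-wise median of $V$; since its winner is $w\neq x$, this refutes the threshold $\alpha'$. First, all votes rank $z_1>\dots>z_m$ in this same order, so by $3$-wise unanimity (Theorem~\ref{t:3-wise-unanimity-general} with $\alpha=1$) every $3$-wise median keeps the backbone $z_1>\dots>z_m$; a median is then determined by the gaps into which $w$ and $x$ are inserted and their relative order. For such a competitor $\sigma$, the plan is to expand $\Delta=d^3_{KT}(\sigma,V)-d^3_{KT}(\rho,V)$ over the pairs and triples affected by the placements of $w,x$, using two facts: (a) moving $w$ or $x$ \emph{upward} past some $z_i$ strictly decreases $d^3_{KT}(\cdot,V)$, since $w$ and $x$ each beat every $z_i$ in the $(a+b)$-vote majority, so the affected pairs and triples come into agreement with more votes; and (b) the only other elementary move, the transposition of the top two candidates, satisfies $d^3_{KT}(\rho,V)-d^3_{KT}(r_3,V)=b+m(b-a)$, which is negative precisely under the hypothesis $m>\frac{b}{a-b}$. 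Combining (a) and (b): among competitors with $x$ above $w$ the distance $d^3_{KT}(\cdot,V)$ is uniquely minimized at $r_3$; among those with $w$ above $x$, uniquely at $\rho$; and $d^3_{KT}(\rho,V)<d^3_{KT}(r_3,V)$. Hence $\rho$ is the unique $3$-wise median, and letting $\alpha'\uparrow 2/3$ gives $\alpha\geq 2/3$, so $\alpha\in[2/3,3/4]$.

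The step I expect to be the main obstacle is carrying out (a)--(b) rigorously for every admissible $\sigma$: once the backbone is fixed the configuration space is only $O(m^2)$, but one has to track exactly which triples change their $\mathrm{top}$ as $w$ and $x$ are displaced and check the aggregate sign of $\Delta$. The delicate point is that the decisive comparison $\rho$ versus $r_3$ has margin $1$ when $b=a-1$, $m=a$ — the $2$-wise ``majority'' median of $V$ is $r_3$, with $x$ winning, and it is exactly the triple terms that tip the $3$-wise median over to $\rho$ — so the parameter condition is sharp and the estimates must be kept tight.
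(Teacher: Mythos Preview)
Your construction is correct and genuinely simpler than the paper's. The paper builds an election on $n+7$ candidates with a four-candidate head $\{x,y,z,t\}$, three trailers $p_1,p_2,p_3$, and a spine $u_1,\dots,u_n$; it then argues via two separate adjacent-swap lemmas that every $3$-wise median must place $\{x,y,z,t\}$ above all $u_i$ and must have $z$ on top, while $t$ is the candidate that $\beta$-dominates everyone with $\beta=(2n-8)/(3n-8)\to 2/3$. Your family needs only a two-candidate head $\{w,x\}$ over a unanimous spine $z_1,\dots,z_m$, and you aim to pin down the full median rather than just its winner. What the paper's extra candidates buy is that each swap step is purely local and mechanical; your approach trades that for a smaller search space but must control the \emph{joint} placement of $w$ and $x$ along the spine.

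There is, however, a real gap in your justification of (a). The claim that ``the affected pairs and triples come into agreement with more votes'' is false for the triple $\{x,z_i,w\}$ when $w$ lies below $x$: swapping $x$ above $z_i$ moves $\mathrm{top}_{\{x,z_i,w\}}$ from $z_i$ to $x$, but this top equals $w$ in the $a$ copies of $r_1$, $z_i$ in the $a$ copies of $r_2$, and $x$ only in the $b$ copies of $r_3$, so this single triple's disagreement count \emph{increases} by $a-b$. For general $a>b$ the whole swap can go the wrong way: with $a=3$, $b=1$, $m=1$, moving $x$ up in $z_1>x>w$ raises $d^3_{KT}$ from $15$ to $16$. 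The repair is easy with your suggested parameters $b=a-1$, $a\ge 3$: the pair $\{x,z_i\}$ and each triple $\{x,z_i,z_k\}$ contribute $-(a-1)$, so the aggregate change is at most $-(a-1)+(a-b)=2-a<0$. Cleaner still, first push whichever of $w,x$ is lower upward until the two are adjacent --- that step is unconditionally decreasing since only $z$'s lie below --- and then move the adjacent block past each $z_i$; the block move is a straightforward variant of your computation (b) with the extra $z$'s below contributing further $-b$ terms. Either route closes the argument, but the reason you state for (a) is incorrect and should be replaced.
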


\begin{proof}
Let $0< \beta <2/3$ be an arbitrary real number and let us fix an integer $n\geq 33$  such that  
\begin{align}
\label{e:c:lower-bound-2/3-theorem-5.1-0}
n > \frac{8- 8 \beta}{2-3\beta} \quad \text{or equivalently} \quad \frac{2n-8}{3n-8} >\beta. 
\end{align}
 \par 
Let $C=\{x,y,z,t,p_1,p_2,p_3,u_1, \dots, u_n\}$ be a set of $n+7$ candidates and consider the following voting profile $V$: 
 \begin{enumerate}[]
    \item $r_1: z>t>x>y>u_1>u_2>\dots>u_n>p_1>p_2>p_3$ \quad ($n$ votes) 
    \item $r_2: u_1>u_{2}> \dots > u_n > y>x>t>z>p_1>p_2>p_3$ \quad ($n$ votes)
    \item $r_3: t>z>x>y>u_1>u_2>\dots>u_n>p_1>p_2>p_3$ \quad ($n-8$ votes) 
\end{enumerate}
\par 
We claim that $r_1$ is the unique $3$-wise median of the election $V$. Indeed, let $\pi$ be a $3$-wise median of the election then by Theorem~\ref{t:3-wise-unanimity-general}, we have 
$u_1>u_2>\dots >u_n>p_1>p_2>p_3$ in  $\pi$ and $p_1, p_2, p_3$ occupy the last three positions in $\pi$. 
\par 
Let us show that in $\pi$, we have  $x,y,z,t >u_i$ for every $i=1,\dots,n$. Suppose on the contrary that for some $i \in \{1,\dots,n\}$, the candidate $u_i$ is ranked immediately before a candidate $s \in \{x,y,z,t\}$ in $\pi$. We choose $u_i$ such that it has the lowest rank among such $u_i$.  Hence, $\pi$ is of the form 
\[
\pi \colon A> u_i >s>K >B>p_1>p_2>p_3
\]
where $K \subset \{s,y,z,t\}\setminus\{s\}$ and $B \subset \{u_1,\dots, u_n\}\setminus\{u_i\}$. 
\par 
By exchanging the positions of $u_i$ and $s$ in $\pi$, we obtain the following ranking: 
\[
\pi^* \colon A> s> u_i >K >B>p_1>p_2>p_3
\]
\par 
We will show that $\Delta = d^3_{KT}(\pi^*, V) - d^3_{KT}(\pi, V)$. The only subsets of at most $3$ elements that can contribute to $\Delta$ are of the form $\{u_i,s\}$, $\{u_i, s, u\}$ where $u \in  K \cup B \cup\{p_1, p_2, p_3\}$. By comparing with the rankings $r_1$, $r_2$, $r_3$ respectively, we deduce that the contribution of $(u_i,s)$ to $\Delta$ is 
\begin{equation}
    \label{e:c:lower-bound-2/3-theorem-5.1-1}
-n+n-(n-8)=8-n. 
\end{equation} 
\par 
Let $u \in B \cup\{p_1, p_2, p_3\}$. Then by comparing $\pi^*$ and $\pi$ with $r_1$, $r_2$, $r_3$ respectively, the contribution of $(u_i,s, u)$ to $\Delta$ is at most 
\begin{equation}
    \label{e:c:lower-bound-2/3-theorem-5.1-2}
-n+n-(n-8)=8-n. 
\end{equation}
\par 
Similarly, we find that the contribution of each subset $\{u_i,x, u\}$, where $u \in K$, to $\Delta$ is at most 
\begin{equation}
     \label{e:c:lower-bound-2/3-theorem-5.1-3}
0+n + 0=n. 
\end{equation}
\par 
Combining the bounds \eqref{e:c:lower-bound-2/3-theorem-5.1-1}, \eqref{e:c:lower-bound-2/3-theorem-5.1-2}, \eqref{e:c:lower-bound-2/3-theorem-5.1-3} and notices that $n \geq 33$, $|K|\leq 3$,  we deduce   the following estimation:  
\begin{align*}
    \Delta  & \leq 8-n+ (8-n)(|B|+3) + n|K| \\
    & \leq 8-n + 3(8-n)+ 3n\\
    & \leq  32-n  <0. 
\end{align*}
\par 
Hence, $d^3_{KT}(\pi^*, V) - d^3_{KT}(\pi, V)=\Delta <0$, which contradicts the hypothesis that $\pi$ is a $3$-wise median. Therefore, we must have   $x,y,z,t >u_i$ in  $\pi$. Consequently, every median $\pi$ is of the form 
\[
\pi \colon D > u_1 >\dots >u_n >p_1>p_2>p_3. 
\]
where $D$ is a permutation of $\{x,y,z,t\}$. We claim that $z$ is the winner of $\pi$. Otherwise, let $s \in \{x,y,t\}$ be the candidate which precedes $z$ in $\pi$. Let $\pi^{**}$ be the ranking obtained from $\pi$ where we exchange the positions of $s$ and $z$. Let $\delta = d^3_{KT}(\pi^{**}, V) - d^3_{KT}(\pi, V)$ then the only subsets that can contribute to $\delta$ are 
$\{z,s\}$, $\{z,s,u\}$ where $u \in C \setminus \{s,z\}$. 
By comparing with $r_1$, $r_2$, and $r_3$ respectively,  the contribution of $\{s,x\}$ to $\delta$ is at most: 
\begin{equation}
\label{e:c:lower-bound-2/3-theorem-5.1-4}
    -n + n +n-8 = n-8. 
\end{equation}
\par 
Similarly, the total contribution to $\delta$ of all the subsets of the form   $\{z,s,u\}$ with $u \in C \setminus \{s,z\}$ is at most: 
\begin{equation}
\label{e:c:lower-bound-2/3-theorem-5.1-5}
    -(n+3)n + 5n + (n+5)(n-8) = -n - 40.  
\end{equation}
\par 
In \eqref{e:c:lower-bound-2/3-theorem-5.1-5}, the term $-(n+3)n$ results from the the comparison with $n$ votes of the same  ranking $r_1$ and the  $(n+3)$ subsets $\{z,s,u\}$ where $u \in C \setminus \{x,y,z,t\}$. The second term $5n$ results from the $n$ votes $r_2$ and at most 5 subsets of the form $\{x,z,u\}$ where $u \in \{x,y,z,t,p_1,p_2,p_3\}\setminus \{z,s\}$. Likewise, the third term  $(n+5)(n-8)$ results from $(n-8)$ votes of the same ranking $r_3$ and $(n+5)$ subsets $\{z,s,u\}$ with  $u \in \{u_1, \dots, u_n\}$. 
\par 
In summary, we can bound $\delta$ using  \eqref{e:c:lower-bound-2/3-theorem-5.1-4} and \eqref{e:c:lower-bound-2/3-theorem-5.1-5} as follows: 
\begin{align*}
d^3_{KT}(\pi^{**}, V) - d^3_{KT}(\pi, V) = \delta \leq n-8 - n- 40 = -48<0,
\end{align*}
which is a contradiction since $\pi$ is a $3$-wise median. We conclude that $z$ must be the unique winner of the election with respect to the $3$-wise Kemeny voting scheme. 
\par 
On the other hand, a direct computation shows that   
$t>u$, for all candidate $u \in C \setminus \{t\}$, in at least 
 $2n-8$ out of $3n-8$ votes, thus in a fraction at least 
 \[
 \frac{2n-8}{3n-8} > \beta 
 \]
 of the votes (see \eqref{e:c:lower-bound-2/3-theorem-5.1-0}). 
Therefore, we have constructed an election in which a candidate $t$ satisfies $t>_{\beta} u$ for every other candidate $u \neq t$ but $t$ is not the winner in any $3$-wise median of the election. 
 The proof is thus complete. 
\end{proof}

\begin{remark}
\label{r:win-better-than-draw}
  Lemma \ref{c:lower-bound-3/5-theorem-5.1} seems to suggest that for a candidate, it is particularly more meaningful to win several votes and lose all other votes (e.g. rank last) than to be in the middle of almost all votes. A distant but maybe related situation is the pointing system in football where three draws (3 points) equal to one win (3 points) and two loses (0 points).    
\end{remark}

\section{Remarks on $3$-wise $3/4$-majority rule for elections with few candidates} 

By \cite{kien-sylvie-MOT}, 
the $3/4$-majority rule holds only 
for elections of at most $5$ candidates for the $3$-wise Kemeny voting scheme.  
Nevertheless, for elections with $6$ candidates, we have the following weak form of the $3$-wise $3/4$-majority rule. 

\begin{theorem}[\cite{kien-sylvie-MOT}]
    Let $C$ be a set of $6$ candidates. Suppose that in an election over $C$, we have a partition $C= A \cup \{x\} \cup B$ where $|A| \leq 2$ such that $y \geq_{3/4} x$ for all $y \in A$ and $x \geq_{3/4}z$ for all $z \in B$. Then the election satisfies the $3$-wise $3/4$-majority rule. 
\end{theorem}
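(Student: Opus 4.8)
The statement asserts that the candidate $x$ of the partition — which is non-dirty with respect to $3/4$, since $y \ge_{3/4} x$ for $y \in A$ and $x \ge_{3/4} z$ for $z \in B$ — keeps all of its $3/4$-majority relations in every $3$-wise median. So it is equivalent to show that \emph{every} $3$-wise median $\pi$ of the election has the shape $\pi\vert_A > x > \pi\vert_B$, i.e.\ every candidate of $A$ is ranked above $x$ and every candidate of $B$ below $x$ in $\pi$. The plan is to prove this by the exchange argument used already for Theorem~\ref{t:3-wise-condorcet}, Theorem~\ref{t:XCC-3-wise-main} and Theorem~\ref{t:3-wise-unanimity-general}: assume $\pi$ is a $3$-wise median violating this shape, build a ``repaired'' ranking $\pi^{*}$, and derive $d^{3}_{KT}(\pi^{*},V) < d^{3}_{KT}(\pi,V)$, a contradiction.

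Concretely, write $A = A_{>} \sqcup A_{<}$ and $B = B_{>} \sqcup B_{<}$, the subscript recording whether the candidate sits above or below $x$ in $\pi$; by assumption $A_{<} \cup B_{>} \neq \varnothing$, while $|A_{>}| + |A_{<}| = |A| \le 2$ and $|B| = 6 - 1 - |A| = 5 - |A| \ge 3$. Let $\pi^{*}$ be obtained from $\pi$ by sliding every candidate of $B_{>}$ down to the block of positions directly below $x$ and every candidate of $A_{<}$ up to the block directly above $x$, all in their induced orders; then $\pi^{*}$ has the desired shape and differs from $\pi$ only in the relative order of $x$ against $A_{<} \cup B_{>}$ and of a few pairs internal to $A \cup B$. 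As in the earlier proofs I would isolate the subsets $S$ with $|S| \le 3$ whose top is changed by the move and split their contributions to $\Delta = d^{3}_{KT}(\pi^{*},V) - d^{3}_{KT}(\pi,V)$ (set $m = |V|$) into three groups. \emph{Corrected pairs}: $\{x,z\}$ with $z \in B_{>}$ and $\{x,a\}$ with $a \in A_{<}$, each contributing at most $(1-2\cdot\tfrac{3}{4})m = -\tfrac{1}{2}m$, directly from $x \ge_{3/4} z$ and $a \ge_{3/4} x$. \emph{Favourable triples}: those of the form $\{x,z,z'\}$ or $\{x,z,w\}$ with $z,z' \in B_{>}$, and $\{x,a,b\}$ with $a \in A_{<}$, $b \in B$ (after the move the two non-$x$ coordinates lie correctly relative to $x$); each is bounded by $-\tfrac{1}{4}m$ using Lemma~\ref{l:transitive} with $s = \tfrac{3}{4}$ — namely that $x$ beats any two prescribed members of $B$ simultaneously in $\ge \tfrac{1}{2}m$ votes, and dually any two members of $A$ beat $x$ in $\ge \tfrac{1}{2}m$ votes — together with transitivity inside a single vote (which also gives $a > b$ in $\ge \tfrac{1}{2}m$ votes whenever $a \ge_{3/4} x \ge_{3/4} b$). \emph{Collateral subsets}: all remaining affected subsets, which mix a moved candidate (a member of $A_{<}$ or $B_{>}$) with one or two others of $A \cup B$ — e.g.\ $\{a,z,z'\}$, $\{a,a',z\}$, $\{x,a,a'\}$ and their $B_{>}$-side analogues — for which only the crude bounds $\le \tfrac{1}{4}m$ or $\le \tfrac{1}{2}m$ are available, but of which, since $|A| \le 2$ and $|B| \le 5$, there are only a bounded number.

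Summing, the corrected pairs and favourable triples contribute on the order of $(|A_{<}| + |B_{>}|)\tfrac{1}{2}m + (|A_{<}| + |B_{>}|)|B|\cdot\tfrac{1}{4}m$ negatively, while the collateral subsets contribute on the order of $(|A_{<}| + |B_{>}|)\binom{|B|}{2}\cdot\tfrac{1}{4}m$ positively; with $|A| \le 2$ and $|B| \ge 3$ one must check the total is negative. \textbf{This balance is the main obstacle.} With the crude triple bounds the favourable $-\tfrac{1}{4}m$ terms only barely outweigh the collateral $+\tfrac{1}{4}m$ terms, so in the borderline configurations — $|A| = 1$, hence $|B| = 4$, and $|A| = 2$, hence $|B| = 3$ with two wrong-side candidates — the naive estimate gives only $\Delta \le 0$, which is not a contradiction. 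This is exactly where the hypotheses $n = 6$ and $|A| \le 2$ must be used, and I would close the gap by a finite case analysis on $|A| \in \{0,1,2\}$ and on the positions of $x$ and of the misplaced candidates, sharpening the estimates in the tight cases: either by performing the repair one adjacent transposition at a time and accounting more carefully (the single transposition of $x$ with an adjacent misplaced candidate is strictly improving, exactly as computed in the proof of Theorem~\ref{t:3-wise-condorcet}), or by observing that $\Delta = 0$ would force every majority inequality and every application of Lemma~\ref{l:transitive} above to be an equality, which rigidly determines the restriction of the profile to $\{x\} \cup A_{<} \cup B_{>} \cup B_{<}$ and then lets one exhibit by inspection a ranking strictly beating $\pi$. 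This bounded but somewhat intricate verification — not the exchange argument itself — is where essentially all the work lies; for larger $n$, or for $|A| \ge 3$, the collateral terms win and the conclusion genuinely fails, consistently with the $3$-wise counterexamples to the $3/4$-rule in \cite{kien-sylvie-MOT}.
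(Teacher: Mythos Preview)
The paper does not actually prove this theorem: it is stated with the attribution \cite{kien-sylvie-MOT} and no argument is given. So there is no ``paper's own proof'' to compare to. What the paper does supply, immediately after, is a lemma (also cited from \cite{kien-sylvie-MOT}) giving three concrete strict inequalities of the form $d^{3}_{KT}(\text{bad shape},V) > d^{3}_{KT}(\text{better shape},V)$ for the case $|A|=1$. That lemma is evidently the engine behind the cited proof: rather than a single ``big repair'' $\pi \mapsto \pi^{*}$, one applies a short sequence of structured block moves (pull $x$ up to just below $z$; pull the pair $zx$ to the top; and a separate move to fix the case $x>z$), each of which is shown to be strictly improving on its own.

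Your plan is in the right spirit, but the point where you stop is exactly the point where the work begins. Your single repair $\pi \mapsto \pi^{*}$ moves all misplaced candidates at once, and as you yourself compute, the collateral triples (those mixing a moved candidate with two members of $A\cup B$ not involving $x$) are bounded only by $+\tfrac14 m$ or $+\tfrac12 m$ each, and in the borderline cases the crude sum gives only $\Delta \le 0$. Neither of your two proposed closures is convincing as stated. The adjacent-transposition idea does \emph{not} reduce to the computation in Theorem~\ref{t:3-wise-condorcet}: there $x$ beats \emph{every} other candidate by $3/4$, whereas here the candidates in $A$ beat $x$, so a swap of $x$ with an adjacent $B$-candidate can produce triples $\{x,z,a\}$ with $a\in A$ sitting below $x$ (also misplaced) for which you have no lower bound on the votes with $x>z,a$. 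And the rigidity argument at $\Delta=0$ is only asserted, not carried out; it is not clear that equality in all your applications of Lemma~\ref{l:transitive} is actually inconsistent. The lemma quoted from \cite{kien-sylvie-MOT} suggests that the way through is not to sharpen the single-repair estimate but to decompose the repair into the specific block moves (a), (b), (c) listed there, each of which can be bounded strictly below zero; that decomposition, and the companion argument for $|A|=2$, is what is missing from your proposal.
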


Recall that by Theorem~\ref{t:3-wise-condorcet}, the $3$-wise $3/4$-majority holds for all non-dirty candidates who win every duel by the ratio $3/4$. For non-dirty candidates who lose the head-to-head competition to exactly one other candidate, we have the following useful results.
When there is no possible confusion, we drop the $>$ symbol in a ranking for simplicity. 

\begin{lemma}[\cite{kien-sylvie-MOT}]
 Let $C=\{x,z\}\cup J$ be a partition of the set of candidates of an election $V$ such that $z\geq_{3/4}x$ and $x\geq_{3/4} y$ for all $y \in J$.  
 Then the following properties hold: 
 \begin{enumerate}[\rm (a)] 
     \item For all partitions $J=A\cup B \cup C$ where $A,B,C$ are ordered sets  such that $B\neq \varnothing$, we have: 
     \[
     d^3_{KT}(AzBxC, V) > d^3_{KT}(AzxBC, V). 
     \]
     \item For all partitions $J=  A \cup B$ where $A,B$ are ordered sets with $A \neq \varnothing$, we have: 
     \[
     d^3_{KT}(AzxB, V) > d^3_{KT}(zxAB, V).
     \]
 \item 
For all partitions $J= A \cup B \cup C$ where $A,B,C$ are ordered sets such that $|B \cup C|=5$ and $|B|\leq 2$, we have: 
 \[
 d^3_{KT}(AxBzC, V) > d^3_{KT}(AzxBC, V) 
 \]
 
 \end{enumerate}
\end{lemma}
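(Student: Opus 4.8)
The plan is to establish all three inequalities by the same mechanism: compare the two rankings via a single local rearrangement and estimate the change of $3$-wise Kendall--tau score subset by subset. If $\pi,\pi^*$ agree outside a small set of moving candidates, then a subset $S\in\Delta^3(C)$ contributes to $\Delta:=d^3_{KT}(\pi^*,V)-d^3_{KT}(\pi,V)$ only when $\mathrm{top}_S(\pi)\neq\mathrm{top}_S(\pi^*)$, in which case its contribution is $|\{\sigma\in V:\mathrm{top}_S(\sigma)=\mathrm{top}_S(\pi)\}|-|\{\sigma\in V:\mathrm{top}_S(\sigma)=\mathrm{top}_S(\pi^*)\}|$. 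Writing $m=|V|$, I would bound each such quantity from $z\ge_{3/4}x$ and $x\ge_{3/4}y$ ($y\in J$), using Lemma~\ref{l:transitive} and its obvious iteration: if a candidate $\ge_{3/4}$-beats each of $k\le 2$ fixed candidates, it beats all of them simultaneously in at least $(1-\tfrac{k}{4})m$ votes. The target in every case is $\Delta<0$.

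For (a) I would take $\pi=AzBxC$, $\pi^*=AzxBC$; only the order of $x$ against $B$ moves, and the contributing subsets reduce to $\{x,b\}$, $\{x,b,b'\}$, $\{x,b,c\}$ with $b,b'\in B$, $c\in C$ (all of $B\cup C\subseteq J$). By $x\ge_{3/4}b$ the first type contributes $\le-\tfrac12 m$ and the other two $\le 0$ by iterated transitivity, so $B\neq\varnothing$ already yields $\Delta<0$. For (b) I would take $\pi=AzxB$, $\pi^*=zxAB$; now the block $zx$ moves past $A$, and the contributing subsets are exactly those meeting both $A$ and $\{z,x\}$, their new top being $z$ (if $z$ is present) or $x$. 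The subsets containing $x$ but not $z$ are controlled directly by $x\ge_{3/4}J$; the key step is to pair each $z$-containing subset $S$ with its companion $S'=(S\setminus\{z\})\cup\{x\}$ and check, using $z\ge_{3/4}x$ and transitivity, that the combined contribution of $S$ and $S'$ is $\le 0$, and even $\le-\tfrac12 m$ for $\{a,z\}\leftrightarrow\{a,x\}$; the leftover ``mixed'' subsets $\{a,z,x\}$ contribute $\le-\tfrac14 m$ each, and summing over $a\in A\neq\varnothing$ gives $\Delta<0$.

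Statement (c) is the one that requires real work. With $\pi=AxBzC$, $\pi^*=AzxBC$ the single moving candidate is $z$, sliding from just below $B$ to just above $x$; the contributing subsets are those containing $z$, disjoint from $A$, and meeting $\{x\}\cup B$, all acquiring new top $z$. The subsets containing $x$ (namely $\{z,x\}$ and $\{z,x,w\}$ for $w\in B\cup C$) contribute strictly negatively, for a total at most $(-\tfrac12-\tfrac14|B\cup C|)m$; each subset $\{z,b\}$ contributes $\le 0$; and the ``bad'' triples $\{z,b,b'\}$, $\{z,b,c\}$ ($b,b'\in B$, $c\in C$) each contribute at most $\tfrac14 m$. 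The hypotheses $|B\cup C|=5$ and $|B|\le 2$ are exactly the ones under which these crude per-subset bounds sum to at most $0$; indeed for $|B|\le 1$ the bad triples are so few that a direct count already gives $\Delta<0$, so the only genuinely delicate case is $|B|=2$. There the crude bounds sum to precisely $0$, and closing the strict inequality forces a finer argument: one must show the per-subset estimates cannot all be tight simultaneously (tightness would pin down $|\{z>b\}|=\tfrac12 m$ with $\{z>b\}=\{z>x\}\cap\{x>b\}$, forbid any vote ranking $b>x>z$ or $c>x>z$, and force $|\{z>b_1\}\cap\{z>b_2\}|=\tfrac14 m$, constraints that are then tracked through the profile toward a contradiction), or else finish by a finite verification, noting that $\Delta$ depends only on the seven-candidate sub-profile on $\{x,z\}\cup B\cup C$ together with the internal orders of $B$ and $C$. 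This last step -- ruling out the boundary configuration in the $|B|=2$ case -- is the main obstacle.
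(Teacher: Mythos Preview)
The paper does not prove this lemma; it is quoted verbatim from the companion reference \cite{kien-sylvie-MOT} and stated without argument. So there is no ``paper's own proof'' to compare against, and I can only assess your proposal on its merits.

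Your treatment of (a) and (b) is correct and is exactly the natural argument. In (a) the only contributing subsets are $\{x,b\}$, $\{x,b,b'\}$, $\{x,b,c\}$ with $b,b'\in B$, $c\in C$; the pair gives $\le -\tfrac12 m$ and the triples $\le 0$ via $x\ge_{3/4}J$ and Lemma~\ref{l:transitive}, so $B\ne\varnothing$ forces $\Delta<0$. In (b) the bijection $S\mapsto (S\setminus\{z\})\cup\{x\}$ between $z$-only and $x$-only contributing subsets, together with the separate treatment of $\{a,z,x\}$, is sound; your bounds $\le 0$ for each pair and $\le -\tfrac14 m$ for each mixed triple go through, and $A\ne\varnothing$ yields the strict inequality. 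One minor wording issue: you should not say the $x$-only subsets are ``controlled directly'' \emph{and} used as companions---that reads like double counting. What you actually do (and what works) is pair them once and observe that the size-$2$ pairs already contribute $\le -\tfrac12 m$.

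Part (c) is where your proposal stops short of a proof. Your subset inventory and per-subset bounds are correct, and you rightly isolate $|B|=2$, $|C|=3$ as the only case where the crude estimates sum to exactly $0$. But the two closures you offer are both left open. The ``finite verification'' is not really available: $\Delta$ depends on the full head-to-head counts among $\{x,z\}\cup B\cup C$, which vary continuously with $m$, so there is no finite case check. The ``inconsistency of tightness'' route is the right one, but the chain of constraints you list (e.g.\ $\{z>b\}\subseteq\{z>x\}$, $|\{w>z>x\}|=m/4$ for each $w\in B\cup C$, etc.) does not by itself produce a contradiction---one can satisfy all of them simultaneously on a suitable profile. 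To close the gap you need a sharper joint estimate that couples several subsets at once (for instance, summing the contributions of $\{z,b_i\}$, $\{z,x,b_i\}$ and the triples $\{z,b_i,c\}$ over $c$, and tracking the exact count $|\{z>b_i\}|$ rather than bounding it away), so that equality in one forces strict inequality in another. As written, the $|B|=2$ case remains a genuine gap.
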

 
\section{The $5/6$-majority rule for the $3$-wise Kemeny voting scheme} 

 We first establish the following weak $5/6$-majority rule for the $3$-wise Kemeny voting scheme which in a sense extends  the $3/4$-majority rule for the classical Kemeny voting scheme. Notably, our result is particularly useful when we know that by other space reduction techniques, every median of the election must be of some special form.

\begin{theorem}
\label{l:5/6-majority-3-wise-main}
Let 
$x$ be a non-dirty candidate in an election 
 with respect to the $5/6$-majority rule and let $I= \{z \neq x \colon z\geq_{5/6}x\}$. 
Let $r$ be a median with respect to the distance $d^3_{KT}$ such that $z>x$ in $r$ for all $z \in I$. Suppose that  
\begin{equation}
    \label{e:5/6-majority-3-wise-condition-main}
 |I|(|I|-4) \leq 3 |\{ z \colon x> z \text{ in } r\}|.  
\end{equation} 
Then for every candidate  $y\neq x$ with $x \geq_{5/6} y$, we have $x > y$  in $r$. 
\end{theorem}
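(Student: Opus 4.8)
My plan is to argue by contradiction via a single downward shift of one carefully chosen candidate. By non-dirtiness every $z\neq x$ satisfies $x\geq_{5/6}z$ or $z\geq_{5/6}x$; set $W':=\{z\neq x:\ x\geq_{5/6}z\text{ and }z>x\text{ in }r\}$, so that the assertion to be proved is exactly $W'=\varnothing$, and suppose instead $W'\neq\varnothing$. Let $w_0$ be the lowest element of $W'$ in $r$, and write $r$ as $A>w_0>M>x>Z$, where $Z=\{z:\ x>z\text{ in }r\}$ is the set occurring in the hypothesis, $M$ is the block strictly between $w_0$ and $x$, and $A$ is everything above $w_0$. The structural point that makes the estimate work is that $M\subseteq I$: each candidate of $M$ is above $x$, hence in $I\cup W'$, but being ranked below $w_0$ it cannot lie in $W'$; in particular $|M|\leq|I|$. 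Also each $\zeta\in Z$ satisfies $x\geq_{5/6}\zeta$, because $\zeta\notin I$ (all of $I$ lies above $x$ in $r$ by hypothesis). I would then take $r^{*}:A>M>x>w_0>Z$, namely move $w_0$ down to just below $x$, and show $d^{3}_{KT}(r^{*},V)<d^{3}_{KT}(r,V)$, contradicting optimality of $r$.

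To bound $\Delta:=d^{3}_{KT}(r^{*},V)-d^{3}_{KT}(r,V)$, note that only $w_0$ has moved, so a subset $S$ with $|S|\leq 3$ can change its top element only if $w_0\in S$ and $S$ meets $M\cup\{x\}$; and if such an $S$ also meets $A$, its top is the unchanged top $A$-element of $S$. Hence it suffices to sum over $S=\{w_0\}\cup S'$ with $S'\subseteq M\cup\{x\}\cup Z$ and $S'\cap(M\cup\{x\})\neq\varnothing$. Writing $m=|V|$ and using only $x\geq_{5/6}w_0$, $x\geq_{5/6}\zeta$ for $\zeta\in Z$, $\iota\geq_{5/6}x$ for $\iota\in M\subseteq I$, together with the set-counting behind Lemma~\ref{l:transitive} (to lower-bound how often $x$, or an $\iota\in M$, beats two of $w_0,\zeta$ at once), a routine case check gives: $\{w_0,x\}$ contributes at most $-\tfrac23 m$; each $\{w_0,\iota\}$ and each $\{w_0,\iota,x\}$ with $\iota\in M$ contributes at most $-\tfrac13 m$; each $\{w_0,x,\zeta\}$ with $\zeta\in Z$ contributes at most $-\tfrac12 m$; each $\{w_0,\iota,\zeta\}$ with $\iota\in M$, $\zeta\in Z$ contributes at most $0$; and each $\{w_0,\iota_1,\iota_2\}$ with $\iota_1,\iota_2\in M$ contributes at most $\tfrac13 m$. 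Summing over the $1$, $|M|$, $|M|$, $|Z|$, $|M||Z|$, $\binom{|M|}{2}$ subsets of these respective types, I get $\Delta\leq\tfrac{m}{6}\bigl(|M|^{2}-5|M|-4-3|Z|\bigr)$.

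It then remains to verify this quantity is negative. If $|M|\leq 5$ then $|M|^{2}-5|M|-4<0\leq 3|Z|$, so $\Delta<0$ at once. If $|M|\geq 6$, then since $|M|\leq|I|$ and $t\mapsto t^{2}-4t$ is increasing on $[2,\infty)$, the hypothesis $|I|(|I|-4)=|I|^{2}-4|I|\leq 3|Z|$ yields $3|Z|\geq|I|^{2}-4|I|\geq|M|^{2}-4|M|>|M|^{2}-5|M|-4$, so again $\Delta<0$. Either way $r^{*}$ strictly improves on $r$, the desired contradiction; hence $W'=\varnothing$, i.e.\ $x>y$ in $r$ for every $y\neq x$ with $x\geq_{5/6}y$.

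The one delicate point is the family of triples $\{w_0,\iota_1,\iota_2\}$ with $\iota_1,\iota_2\in M\subseteq I$: these are the only subsets whose contribution can be positive (up to $\tfrac13 m$ each), and there are $\binom{|M|}{2}$ of them, so this quadratic-in-$|I|$ loss is precisely what the hypothesis $|I|(|I|-4)\leq 3|Z|$ is calibrated to absorb against the $-\tfrac12 m|Z|$ gain from the $\{w_0,x,\zeta\}$ triples. Two features make the constants close: choosing $w_0$ to be the \emph{lowest} element of $W'$ (which forces $M\subseteq I$ and makes every subset meeting $A$ inert), and the fact that the $\{w_0,\iota,\zeta\}$ triples turn out neutral rather than harmful, seen by using the transitivity bound to check that any $\iota\in M$ beats both $w_0$ and $\zeta$ in at least $\tfrac13 m$ votes.
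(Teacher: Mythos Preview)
Your proof is correct and follows essentially the same approach as the paper: both pick the lowest misplaced candidate $w_0$ (the paper calls it $y$), observe that the block $M$ between $w_0$ and $x$ lies entirely in $I$, move $w_0$ down past $M\cup\{x\}$, and bound the change $\Delta$ by the same case analysis on pairs and triples containing $w_0$. Your final bound $\Delta\le\tfrac{m}{6}\bigl(|M|^{2}-5|M|-4-3|Z|\bigr)$ is in fact slightly sharper than the paper's $\tfrac{m}{6}\bigl(|M|^{2}-4|M|-3|Z|\bigr)$ because you also quantify the size-$2$ contributions rather than merely noting they are negative, and your explicit split into $|M|\le 5$ versus $|M|\ge 6$ cleanly handles the non-monotonicity of $t\mapsto t(t-4)$ that the paper leaves implicit.
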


We also obtain below another version of our $5/6$-majority rule for the $3$-wise Kemeny voting scheme which complements the  range of applications of  Theorem~\ref{l:5/6-majority-3-wise-main} including applications in multi-winner 
 voting systems.

\begin{theorem}
\label{l:5/6-majority-3-wise}
Let $\lambda >0$,  $s \geq \frac{5\lambda +1}{6\lambda+1}$. 
Let 
$x$ be a non-dirty candidate in an election with respect to the $s$-majority rule.  Let $r$ be a median with respect to the distance $d^3_{KT}$ such that  if $z \geq_s x$ then $z>x$ in $r$. Suppose that 
\begin{equation}
    \label{e:5/6-majority-3-wise-condition}
|\{ z \colon x> z \text{ in } r \}|\geq \lambda |\{ z \neq x \colon z\geq_s x \}|. 
\end{equation} 
Then for every candidate  $y\neq x$ with $x \geq_{s} y$, we have $x > y$  in $r$. 
\end{theorem}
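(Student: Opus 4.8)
The plan is to argue by contradiction, following the template of Theorem~\ref{t:3-wise-condorcet} and Theorem~\ref{l:extended-majority} but with a finer three-way case analysis on the $3$-element subsets. Suppose some candidate $y\neq x$ satisfies $x\geq_s y$ yet $y>x$ in $r$, and among all such candidates pick the one of lowest rank in $r$. Set $I=\{z\neq x\colon z\geq_s x\}$ and $R=\{z\colon x>z\text{ in }r\}$, and write $r\colon L>y>K>x>R$, where $K$ is the block of candidates strictly between $y$ and $x$. The hypothesis on $r$ puts every element of $I$ above $x$, so $I\cap R=\varnothing$; the minimality of $y$ together with the non-dirtiness of $x$ forces $z\geq_s x$ for each $z\in K$ (else $x\geq_s z$ and $z$ would violate minimality), so $K\subseteq I$ and $|K|\leq|I|$; and non-dirtiness gives $x\geq_s w$ for every $w\in R$. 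Finally, the two-set intersection bound used in Lemma~\ref{l:transitive}, applied to $z\geq_s x$ and $x\geq_s y$ (resp. to $z\geq_s x$ and $x\geq_s w$), shows that for $z\in K$ we have $z>y$ in at least $(2s-1)|V|$ votes and, for $z\in K$ and $w\in R$, $z>w$ in at least $(2s-1)|V|$ votes.

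The core of the argument is to compare $r$ with the ranking $r^*\colon L>K>x>y>R$ obtained by sliding $y$ down to the slot immediately below $x$, and to show $\Delta:=d^3_{KT}(r^*,V)-d^3_{KT}(r,V)<0$. Only the relative orders of $y$ versus $x$ and of $y$ versus $K$ change, so the subsets $S$ with $|S|\leq 3$ contributing to $\Delta$ are exactly those containing $y$ and meeting $K\cup\{x\}$; the point of choosing this rearrangement rather than moving $x$ upward is that the pairs $\{x,z\}$, $z\in K$ --- on which $x$ sits in the minority --- are left untouched. For each shape one bounds its contribution to $\Delta$ by estimating from above the number of votes on which $S$'s top agrees with that of $r$ and from below the number on which it agrees with that of $r^*$, using $x\geq_s y$, $z\geq_s x$ for $z\in K$, $x\geq_s w$ for $w\in R$, and the transitivity estimates. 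This gives: $\{x,y\}$ contributes at most $(1-2s)|V|$; each $\{y,z\}$ and each $\{y,x,z\}$ with $z\in K$ at most $(3-4s)|V|$; each $\{y,x,w\}$ with $w\in R$ at most $(2-3s)|V|$; each $\{y,z,w\}$ with $z\in K$, $w\in R$ at most $(5-6s)|V|$; each $\{y,z,z'\}$ with $z,z'\in K$ at most $(2-2s)|V|$; and subsets meeting $L$ contribute nothing. Writing $k=|K|$ and $\rho=|R|$ and summing over the $k$, $\rho$, $k\rho$ and $\binom{k}{2}$ subsets of the respective types,
\[
\frac{\Delta}{|V|}\le(1-2s)+2k(3-4s)+\rho(2-3s)+k(k-1)(1-s)+k\rho(5-6s).
\]

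To conclude I would regroup the right-hand side as $\bigl[k(k-1)(1-s)+k\rho(5-6s)\bigr]+\bigl[(1-2s)+2k(3-4s)+\rho(2-3s)\bigr]$. Since $s\ge\frac{5\lambda+1}{6\lambda+1}>\tfrac{5}{6}>\tfrac{3}{4}$, each of $1-2s$, $3-4s$, $2-3s$ is negative and $5-6s<0$, so the second bracket is at most $1-2s<0$. For the first bracket, observe that $s\ge\frac{5\lambda+1}{6\lambda+1}$ is equivalent to $\frac{1-s}{6s-5}\le\lambda$; together with condition~\eqref{e:5/6-majority-3-wise-condition} and $K\subseteq I$, which give $\rho\ge\lambda|I|\ge\lambda k$, this yields for $k\geq 1$ that $(k-1)(1-s)\le\lambda(k-1)(6s-5)<\rho(6s-5)$, i.e. the first bracket is negative; for $k=0$ it vanishes while the second bracket is already negative. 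Hence $\Delta<0$, contradicting that $r$ is a median for $d^3_{KT}$, and the theorem follows. I expect the main difficulty to lie in the subsets $\{y,z,z'\}$ with $z,z'\in K$: nothing bounds the number of votes with $z>z'$, so only the weak estimate $(2-2s)|V|$ per subset is available, producing the positive term $k(k-1)(1-s)$, and balancing it against the negative $k\rho(5-6s)$ is exactly what pins down the calibration $s\ge\frac{5\lambda+1}{6\lambda+1}$ and the size condition $\rho\ge\lambda|I|$.
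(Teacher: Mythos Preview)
Your proof is correct and follows essentially the same approach as the paper: the same contradiction setup (choosing the lowest-ranked violating $y$ so that the intermediate block $K$ lies inside $I$), the same rearrangement $r^*$ obtained by sliding $y$ just below $x$, and the same case analysis on $2$- and $3$-element subsets. The only differences are cosmetic: your bound $(3-4s)$ for the triples $\{x,y,z\}$ with $z\in K$ is a little looser than the paper's $(2-3s)$, and in the endgame you split the sum into two brackets and use $\rho\ge\lambda k$ together with $\frac{1-s}{6s-5}\le\lambda$, whereas the paper substitutes $|B|\ge\lambda|L|$ directly into the $(5-6s)|L||B|$ term to collapse everything to a quadratic in $|L|$ with nonpositive coefficients.
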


The conditions \eqref{e:5/6-majority-3-wise-condition-main} and  \eqref{e:5/6-majority-3-wise-condition} are easily satisfied when $x$ is a strong candidate in the election or when for example by our extension of the Always theorem (cf. Theorem~\ref{t:3-wise-unanimity-general}) we know 
that $x$ wins against a significant number, says, $N$ candidates in every median so that the quotient  
$N / |\{ z \colon z\geq_s x \}|$ is large enough. Another simple but important method  to obtain  \eqref{e:5/6-majority-3-wise-condition-main} or   \eqref{e:5/6-majority-3-wise-condition} for large $\lambda$ is the "dilution method" which consists of  introducing a large enough number of weak candidates to the election. This strategy reflects the \emph{dependence of irrelevant alternatives} of $k$-wise Kemeny voting schemes as already observed in \cite{setwise}, i.e., the relative orders of two candidates in a median can depend on the presence of other candidates.

\begin{proof}[Proof of Theorem~\ref{l:5/6-majority-3-wise-main}] 
Let $C$ be the set of all candidates in the election. Let $V$ be the set of all votes and $m=\vert V \vert$. 
Suppose on the contrary that there exists a candidate $y\neq x$ such that  
\begin{enumerate}[\rm (C)]
    \item $x \geq_{s} y $ but $y>x$ in the ranking $r$ where $s=5/6$. 
\end{enumerate}
We can moreover suppose without loss of generality that $y$ is the lowest-ranked candidate among all candidates satisfying Condition (C). 
\par 
In the ranking $r$, let $A$, $K$, $B$ be respectively the ordered set of all candidates ranked before $x$, between $x$ and $y$, and after $y$. Therefore, we can write the ranking $r$ as $A>y > K >x >B$. Since $x$ is a non-dirty candidate with ratio $s$, we have a partition $K= L \coprod R$ where  
\begin{equation*}
    L = \{z \in K \colon z \geq_{s} x\} \subset I, \quad \quad  R = \{z \colon z <_{s} x\}
\end{equation*}
and the relative orders of candidates in $L$ and $R$ are induced by the relative orders in $K$.  
Observe also that $x \geq_{s} z$ for every candidate $z \in B$ and that by the choice of $y$, we must have $R=\varnothing$ (otherwise, we can replace $y$ by  any candidate $t \in R$ which will have a lower rank than $y$ in $r$).   
\par 
Consider the modified ranking $r^*$  in which  $A>L > x > y >B$. When restricted to $\Delta^2(C)$, the only pairs of candidates which can contribute to the score difference $\Delta = d^3_{KT}(r^*, V) - d^3_{KT}(r, V)$  are $(x,y)$, $(y, L)$, where $(y, L)$ means any pair $(y,z)$ with $z \in L$. By the proof of Theorem~\ref{l:extended-majority} (where in the inequality \eqref{e:l:extended-majority-main}, we take $q=\varepsilon =5/6$, $s=3/2$, and $a=|R|=0$), such contribution is strictly negative. 
\par 
For the contribution to the score difference $\Delta$ of subsets $S$ of exactly three candidates, we distinguish three cases with possibly non-trivial contribution: 
\begin{enumerate} [\rm (i)] 
    \item $S= (x,y, L \cup B)$; 
    \item $S = (y, L, L \cup B)$. 
\end{enumerate}
\par 
Let $z \in L$. Then as $z\geq_s x$ and $x\geq_s y$, we deduce that $z >x,y$ in at least $(2s-1)m$ votes as well as in the ranking $r^*$. 
Since $y >z ,x$ in $r$, it follows that the (unordered) subset $S=(x,y,z)$ contributes at most $m-(2s-1)m=(2-2s)m$ to $ d^3_{KT}(r^*, V)$. On the other hand, since  $x>_{s}y$ but $y>x,z$ in $r$, the set $S=(x,y,z)$  contributes at least $sm$ to 
$ d^3_{KT}(r, V)$. Consequently, the total contribution of subsets of the form $S=(x,y,L)$ to the score difference $\Delta$ is at most
\[
(2-2s - s)m \vert L \vert  = (2-3s) m \vert L \vert.
\]
\par 
Since $x\geq_s z$ for all $z \in  B$, a similar argument as above shows the subsets of the form $S=(x,y, B)$ contribute to $\Delta$ at most $(2-3s) m \vert B \vert$. Therefore, in case (i), the total contribution of  $S=(x,y,L\cup B)$ to $\Delta$ is no more than
\begin{equation}
\label{e:5/6-majority-rule-case-1}
(2-3s) m \vert L \cup B \vert = -\frac{1}{2} m \vert L \cup B \vert \leq 0. 
\end{equation}
\par 
For case (ii), let $z \in L$ and $t \in L \cup B$. Then since $x \geq_s y$ and $z \geq_s x$, we have 
$z>y$ in at least $(2s-1)m$ votes. Consequently, $y>z$ in at most $m-(2s-1)m=(2-2s)m$ votes and thus the total contribution of subsets  $S=(y,L, L \cup B)$ to $ d^3_{KT}(r^*, V)$ is at most 
\begin{equation}
\label{e:5/6-majority-rule-case-3-a}
(2-2s)m|(L, L\cup B)|.
\end{equation}

On the other hand, suppose that $t \in B$. Then $z\geq_{2s-1}  t$ since $z \geq_s x$ and $x \geq_s t$. Similarly, $z \geq_{2s-1} y$ as $z \geq_s x$, $x\geq_s y$. Consequently, we deduce that $z >y,t$ in at least $(2(2s-1)-1)m=(4s-3)m$ votes and thus the total contribution of subsets  $S=(y,L, B)$ to $ d^3_{KT}(r, V)$ is at least 
\begin{equation}
\label{e:5/6-majority-rule-case-3-b}
(4s-3)m |(L, B)|. 
\end{equation}
\par 
To summarize, we deduce from the  inequalities \eqref{e:5/6-majority-rule-case-1}, \eqref{e:5/6-majority-rule-case-3-a}, \eqref{e:5/6-majority-rule-case-3-b}, and the fact $s=5/6$ that $\Delta$ is bounded from the above as follows:  
\begin{align}
\label{l:5/6-majority-3-wise-main-delta}
\frac{\Delta}{m} & 
<  (2-3s)  \vert L \cup B \vert   + (2-2s)|(L, L\cup B)|  - (4s-3) |(L, B)| 
\\
& =  
(2-3s)  \vert L \cup B \vert  + (2-2s)|(L,L)|  +(5-6s)  |(L,  B)| \nonumber\\
& = (2-3s)  (\vert L \vert +  \vert B \vert)   + (2-2s)\frac{|L|(|L|-1)}{2}  +(5-6s)  |L|.|B| \nonumber\\ 
& = \frac{|L|(|L|-4)}{6} - \frac{|B|}{2}  \quad \quad\quad \left(\text{as }s =5/6\right)\nonumber 
\\
& \leq 0 \quad \quad\quad \quad \quad\quad \quad \quad\quad  \left(\text{by hypothesis}\right). \nonumber 
\end{align}
\par 
Hence, $\Delta <0$ and it follows that  $ d^3_{KT}(r^*, V) < d^3_{KT}(r, V)$, which is a contradiction to the hypothesis that $r$ is a median with respect to the distance $d^3_{KT}$.   The proof is thus complete. 
\end{proof}   

To illustrate, we will formulate several  $5/6$-majority rules for elections with a small number of candidates with respect to the $3$-wise Kemeny voting scheme. More specifically, we obtain the following stronger results than 
Theorem~\ref{l:5/6-majority-3-wise} for small elections. The results might also be useful in the case when Theorem~\ref{t:3-wise-unanimity-general}  and Theorem~\ref{t:3-wise-condorcet} cannot be applied (note that $g(6)=0.995>5/6$ in Theorem~\ref{t:3-wise-unanimity-general}). 

\begin{corollary}
\label{l:5/6-majority-3-wise-6-candidates}  
Let $x$ be a non-dirty candidate with respect to the $5/6$-majority rule in an election with at most $6$ candidates. Let  $r$ be a median with respect to the $3$-wise Kendall-tau distance such that if $z \geq_{5/6} x$ then $z>x$ in $r$.
Then for every $y\neq x$ with $x \geq_{5/6} y$, we have $x > y$  in $r$.   
\end{corollary}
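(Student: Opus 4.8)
The plan is to derive Corollary~\ref{l:5/6-majority-3-wise-6-candidates} directly from Theorem~\ref{l:5/6-majority-3-wise-main} by verifying its hypothesis \eqref{e:5/6-majority-3-wise-condition-main} whenever the election has $n\leq 6$ candidates. The first step I would take is the elementary observation that, because $x$ is non-dirty with respect to the $5/6$-majority rule, every candidate $z\neq x$ satisfies \emph{exactly one} of $z\geq_{5/6}x$ or $x\geq_{5/6}z$: at least one holds by the definition of a non-dirty candidate, and both cannot hold at once since $x>z$ in at least $\frac{5}{6}|V|$ votes together with $z>x$ in at least $\frac{5}{6}|V|$ votes would force $\frac{5}{3}|V|\leq|V|$. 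Consequently $C\setminus\{x\}$ splits into $I=\{z\neq x\colon z\geq_{5/6}x\}$ and its complement $I'=\{z\neq x\colon x\geq_{5/6}z\}$.

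The second step uses the hypothesis that there is a candidate $y\neq x$ with $x\geq_{5/6}y$: this $y$ belongs to $I'$, hence $I\subseteq C\setminus\{x,y\}$ and therefore $|I|\leq n-2\leq 4$. As a result, $|I|(|I|-4)\leq 0\leq 3\,|\{z\colon x>z\text{ in }r\}|$, so the condition \eqref{e:5/6-majority-3-wise-condition-main} of Theorem~\ref{l:5/6-majority-3-wise-main} is satisfied automatically, no matter how many candidates $x$ beats in $r$.

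Finally, the remaining hypotheses of Theorem~\ref{l:5/6-majority-3-wise-main} are exactly what is assumed here: $x$ is a non-dirty candidate with respect to the $5/6$-majority rule, and $r$ is a median for $d^3_{KT}$ with $z>x$ in $r$ for every $z\in I$ (this is the stated condition ``if $z\geq_{5/6}x$ then $z>x$ in $r$''). Invoking Theorem~\ref{l:5/6-majority-3-wise-main} then gives $x>y$ in $r$, which is the desired conclusion. I do not expect any real obstacle in this argument; the only point needing a moment's care is the counting bound $|I|\leq n-2$, and it is precisely here that the restriction to at most $6$ candidates is used — for $n=6$ it forces $|I|\leq 4$, so that the quadratic $|I|(|I|-4)$ is non-positive, whereas for $n\geq 7$ one could have $|I|=5$ with $x$ ranked near the bottom of $r$, in which case \eqref{e:5/6-majority-3-wise-condition-main} may genuinely fail and a separate treatment would be required.
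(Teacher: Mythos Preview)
Your proof is correct and follows essentially the same idea as the paper: the key observation in both is that with at most $6$ candidates the relevant set has size at most $n-2\leq 4$, so the quadratic $t(t-4)$ is non-positive and condition~\eqref{e:5/6-majority-3-wise-condition-main} holds trivially. The only cosmetic difference is that the paper re-enters the proof of Theorem~\ref{l:5/6-majority-3-wise-main} and bounds $|L|\leq 4$ directly in the estimate~\eqref{l:5/6-majority-3-wise-main-delta}, whereas you bound $|I|\leq 4$ and invoke Theorem~\ref{l:5/6-majority-3-wise-main} as a black box; your route is arguably cleaner.
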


\begin{proof}
We adopt the proof by contradiction of Theorem~\ref{l:5/6-majority-3-wise-main}. 
By substituting $s=5/6$ in the third line in the inequality \eqref{l:5/6-majority-3-wise-main-delta}, we deduce that 
\begin{align*}
    \frac{\Delta}{m} < \frac{|L|(|L|-4)}{6} - \frac{|B|}{2} \leq 0
\end{align*}
since $|B| \geq 0$ and $|L| \leq |C \setminus \{x,y\}| \leq 6 -2=4$. Therefore, we also have $\Delta <0$ which is again contradictory as in the proof of Theorem~\ref{l:5/6-majority-3-wise}. 
\end{proof}

By a similar argument, we obtain the following more general consequence.   

\begin{corollary}
\label{c:5/6-majority-rule-consequence}
Let $b,n \in \N$ such that $n \geq 3$ and $(n-b-2)(n-b-6)\leq 3b$. 
Consider an election with $\leq n$ candidates. Let  $x$ be a non-dirty candidate with respect to the $5/6$-majority rule. Let $r \colon A>x>B$ be a median with respect to $d^3_{KT}$ such that $|B| \geq b$ and $z \in A$ whenever $z \geq_{5/6} x$. 
Then for every $y\neq x$ with $x \geq_{5/6} y$, we have $x > y$  in $r$.   \qed 
\end{corollary}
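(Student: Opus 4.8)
The plan is to deduce the corollary directly from Theorem~\ref{l:5/6-majority-3-wise-main} by verifying its hypothesis \eqref{e:5/6-majority-3-wise-condition-main} for the given median $r\colon A>x>B$. I would argue by contradiction: suppose the conclusion fails, so there is a candidate $y_0\neq x$ with $x\geq_{5/6}y_0$ but $y_0>x$ in $r$; then $y_0\in A$. The first point is that $y_0\notin I$, where $I=\{z\neq x\colon z\geq_{5/6}x\}$: otherwise $y_0>x$ in at least $\tfrac56|V|$ votes while $x>y_0$ in at least $\tfrac56|V|$ votes, and these disjoint sets of votes would total more than $|V|$, a contradiction. Since $I\subseteq A$ by hypothesis, we get $I\subseteq A\setminus\{y_0\}$, hence $|I|\leq|A|-1$; and as the election has $N\leq n$ candidates with $r\colon A>x>B$, we have $|A|=N-1-|B|$, so $|I|\leq n-2-|B|$.

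Next I would check \eqref{e:5/6-majority-3-wise-condition-main}, i.e.\ $|I|(|I|-4)\leq 3|\{z\colon x>z\text{ in }r\}|=3|B|$, which is the only real computation. If $|I|\leq 4$ this is immediate since the left side is $\leq 0\leq 3|B|$. If $|I|\geq 5$, then $|B|\leq n-2-|I|\leq n-7$, and since $t\mapsto t(t-4)$ is increasing on $[2,\infty)$ we get $|I|(|I|-4)\leq(n-2-|B|)(n-6-|B|)$, where $n-6-|B|\geq|I|-4\geq 1$. The quadratic $\beta\mapsto(n-2-\beta)(n-6-\beta)$ is decreasing for $\beta< n-4$, and $b\leq|B|\leq n-7<n-4$, so $(n-2-|B|)(n-6-|B|)\leq(n-b-2)(n-b-6)\leq 3b\leq 3|B|$ using the hypothesis $(n-b-2)(n-b-6)\leq 3b$. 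This establishes \eqref{e:5/6-majority-3-wise-condition-main}.

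The remaining hypotheses of Theorem~\ref{l:5/6-majority-3-wise-main} hold directly: $x$ is non-dirty with respect to the $5/6$-majority rule by assumption, and $r$ is a $d^3_{KT}$-median in which $z>x$ for every $z\in I$, because $z\geq_{5/6}x$ forces $z\in A$. The theorem then yields $x>y_0$ in $r$, contradicting $y_0>x$ in $r$, so no such $y_0$ exists and the corollary follows.

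The step I expect to be the main obstacle is purely bookkeeping: one must extract the extra $-1$ in the bound $|I|\leq n-2-|B|$ (coming from $y_0\in A\setminus I$), since the cruder bound $|I|\leq n-1-|B|$ would only force $(n-b-1)(n-b-5)\leq 3b$, which is strictly stronger than the stated hypothesis whenever $n-b\geq 4$ and hence insufficient. An equivalent route, matching the paper's remark ``by a similar argument'', is to re-run the proof of Theorem~\ref{l:5/6-majority-3-wise-main} verbatim with the lowest-ranked violating candidate $y$: there the block $L$ between $y$ and $x$ automatically satisfies $L\subseteq I\setminus\{y\}$, so plugging $s=5/6$ into \eqref{l:5/6-majority-3-wise-main-delta} and running the same case split on $|L|$ against $|B|\geq b$ gives $\Delta<0$, the desired contradiction.
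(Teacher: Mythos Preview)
Your proposal is correct and aligns with the paper's intended argument. The paper merely writes ``by a similar argument'' and places a \qed, pointing back to the proof of Corollary~\ref{l:5/6-majority-3-wise-6-candidates}, which re-runs the proof of Theorem~\ref{l:5/6-majority-3-wise-main} and bounds $\tfrac{|L|(|L|-4)}{6}-\tfrac{|B|}{2}\leq 0$ using $|L|\leq n-2-|B|$ together with $|B|\geq b$ and the hypothesis $(n-b-2)(n-b-6)\leq 3b$; your second route is exactly this, and your first route (invoking Theorem~\ref{l:5/6-majority-3-wise-main} as a black box after extracting the bound $|I|\leq n-2-|B|$ via the violating candidate $y_0\in A\setminus I$) is an equivalent repackaging with the same key inequality check.
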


\begin{example}
    In Corollary~\ref{c:5/6-majority-rule-consequence}, we can take the following values for $(n,b)$: 
    \begin{center}
\begin{tabular}{ | M{3em} | M{1cm}  | M{1cm}  | M{1cm} | M{1cm} | M{1cm}| M{1cm}| M{1cm}| M{1cm}|M{1cm}|} 
  \hline
  $n$  & 6 & 8 & 10 & 12 & 14 & 16 & 18 & 20 \\ 
  \hline
  $b$ & 0 & 2 &  3 & 4 &  6 & 7 & 9 & 11\\ 
  \hline
\end{tabular}
\end{center}

\end{example}

The above reduction results are meaningful  since the search space is already very large when $n=14$ as $14!\simeq 8.7\times 10^{10}$. 
The proof of Theorem~\ref{l:5/6-majority-3-wise} is also a simple modification of the proof of Theorem~\ref{l:5/6-majority-3-wise-main}. 
\begin{proof}[Proof of Theorem~\ref{l:5/6-majority-3-wise}]
We adopt again the proof by contradiction of Theorem~\ref{l:5/6-majority-3-wise-main}. 
We deduce from the hypothesis \eqref{e:5/6-majority-3-wise-condition} that $|B|\geq \lambda |L|$. Therefore, as $s >5/6$, we find that 
\[
(5-6s)|L||B|\geq (5-6s)|L| \lambda |L|. 
\]
\par 
Consequently, we obtain from the third line in the inequality \eqref{l:5/6-majority-3-wise-main-delta} that: 
\begin{align}
\label{l:5/6-majority-3-wise-main-delta}
\frac{\Delta}{m} & 
<   (2-3s)  (\vert L \vert +  \vert B \vert)   + (2-2s)\frac{|L|(|L|-1)}{2}  +(5-6s)  |L|.|B| \nonumber\\ 
& \leq (2-3s)  (\vert L \vert +  \vert B \vert)   + (1-s)|L|(|L|-1)  +(5-6s)  |L|.\lambda |L| \nonumber \\
& =  (5\lambda +1 - (6\lambda +1 )s) |L|^2 +(1-2s)|L|  + (2-3s) |B|
\nonumber\\
& \leq 0 \quad \quad\quad  \left(\text{as }s \geq \frac{5\lambda+1}{6\lambda+1}>\frac{5}{6}\right). \nonumber 
\end{align}
which is  a contradiction as in the proof of Theorem~\ref{l:5/6-majority-3-wise}. The proof is thus complete. 
\end{proof}

\bibliography{Kemeny}

\appendix 

\section{Simulations with uniform data}

\begin{table}[h!]
\centering
\caption{Applicability of the Always Theorem (AT), the $2$-wise Extended Always Theorem \ref{t:extended-always} (2AT) and the $3$-wise Extended Always Theorem  \ref{t:3-wise-unanimity-general} (3AT) over 100 000 random   instances uniformly generated with python seed(1) generator}
\label{table:applicability-AT}

\begin{tabularx}{1\textwidth}{ 
  | >{\centering\arraybackslash}X 
  | >{\centering\arraybackslash}X 
  | >{\centering\arraybackslash}X 
  | >{\centering\arraybackslash}X 
  | >{\centering\arraybackslash}X |}
\hline
\textbf{n} & \textbf{m} & \textbf{AT} & \textbf{2AT} & \textbf{3AT}\\ 
\hline  \hline 
3 & 3 & 52.993\% & 52.993\%&  52.993\%  
\\ 
\hline  
- & 4 & 29.916\% & 93.208\%&  29.916\%  
\\ 
\hline  
 - & 7 & 3.452\% & 80.354\%&  30.407\% 
 \\ 
\hline
 - & 10 & 0.419\% & 67.896\%&  27.414\% 
 \\ 
\hline
 - & 13 & 0.058\% & 57.042\%&  23.558\% 
 \\ 
\hline
 - & 16 & 0.005\% & 47.087\%&  5.88\% 
 \\ 
\hline
 - & 19 & 0.0\% & 39.294\%&  5.453\% 
 \\ 
\hline
- & 22 & 0.0\% & 32.635\%&  4.765\% 
 \\ 
\hline
- & 25 & 0.0\% & 26.871\%&  4.09\% 
 \\ 
\hline
- & 28 & 0.0\% & 22.35\%&  1.079\% 
 \\ 
\hline\hline 
4 & 5 & 28.159\% & 89.767\%&  28.159\% 
 \\ 
\hline
- & 9 &  1.713\% & 62.494\%&  18.92\% 
 \\ 
\hline
- & 13 &  0.139\% & 38.778\%&  1.935\% 
 \\ 
\hline
- & 17 &  0.007\% & 22.962\%&  1.244\% 
 \\ 
\hline
- & 21 &  0.001\% & 13.341\%&  0.119\% 
 \\ 
\hline
- & 25 &  0.0\% & 7.652\%&  0.104\% 
 \\ 
\hline\hline 
5 & 6 &  23.712\% & 84.405\%&  23.712\% 
 \\ 
\hline
- & 11 &  0.973\% & 42.36\%&  0.973\% 
 \\ 
\hline
- & 16 &  0.031\% & 17.053\%& 0.531\% 
 \\ 
\hline
- & 21 &  0.0\% & 6.425\%& 0.03\% 
 \\ 
\hline\hline 
10 & 11 &  4.038\% & 33.935\%& 4.038\% 
 \\ 
\hline
- & 15 &  0.282\% & 4.013\%& 0.282\% 
 \\ 
\hline
- & 21 &  0.007\% & 0.95\%& 0.007\% 
 \\ 
\hline\hline 
15 & 16 &  0.327\% & 4.872\%& 0.327\% 
 \\ 
\hline\hline
20 & 21 & 0.017\% & 0.394\% &0.017\%
\\ 
\hline
\end{tabularx}

\end{table}

\end{document}